\newtheorem{lemma}{Lemma}
\newtheorem{theorem}{Theorem}
\def\dbar{\mathrm{d}\mkern-6mu\mathchar'26} 
\def\bp{\mathbf{p}}
\def\bq{\mathbf{q}}
\def\bx{\mathbf{x}}
\def\by{\mathbf{y}}
\def\bz{\mathbf{z}}
\def\bxi{\boldsymbol{\xi}}
\def\cE{\mathcal{E}}
\def\cF{\mathcal{F}}
\newcommand{\nz}{\mathbb{N}} 
\newcommand{\cz}{\mathbb{C}} 
\newcommand{\rz}{\mathbb{R}} 
\def\gH{\mathfrak{H}}
\def\gQ{\mathfrak{Q}}
\def\gS{\mathfrak{S}}
\def\gh{\mathfrak{h}}
\def\rd{\mathrm{d}}
\def\ri{\mathrm{i}}
\def\p{{\mathbf{\hat p}}} 
\def\tr{\mathop\mathrm{tr}\nolimits} 
\DeclareMathOperator{\supp}{supp}
\title[Relativistic ground-state energy]{The leading behavior of the ground-state energy of heavy ions according to Brown and Ravenhall}
\author[Xiao Liu]{Xiao Liu}
\begin{document}

\begin{abstract}
In this article we prove the absence of relativistic effects in
leading order for the ground-state energy according to Brown-Ravenhall operator.
We obtain this asymptotic result for negative ions and for systems
with the number of electrons proportional to the nuclear charge. In
the case of neutral atoms the analogous result was obtained earlier
by Cassanas and Siedentop \cite{CassanasSiedentop2006}.
\end{abstract}

\maketitle

\tableofcontents
\pagebreak

\section{Introduction}
\label{sec1} Cassanas and Siedentop \cite{CassanasSiedentop2006}
have shown that the ground state energy of heavy atoms for the
relativistic Hamiltonian of Brown and Ravenhall is, to leading
order, given by the non-relativistic Thomas-Fermi energy. The
relativistic Hamiltonian of Brown and Ravenhall is derived from
quantum electrodynamics yielding energy levels correctly up to order
$\alpha^2$Ry \cite{BrownRavenhall1951}. However, only the case as
$N=Z$ is considered in \cite{CassanasSiedentop2006}, where $N$ is
the number of electrons and $Z$ is the nuclear charge. This thesis
describes the other two cases: $N>Z$ and $N/Z=\lambda (constant)$.

\section{Definition of The Model}
\label{sec2}

Brown and Ravenhall \cite{BrownRavenhall1951} describe two
relativistic electrons interacting with an external potential. The
model has an obvious generalization to the $N$-electron case.  First
we define
\begin{equation}
  \label{eq:d}
   D_{c,Z}:= \boldsymbol{\alpha}\cdot \frac c\ri\nabla + c^2\beta -Z
|\cdot|^{-1}
\end{equation}
 is the Dirac operator of an electron in the field of
a nucleus of charge $Z$. Note that we are using atomic units in this
paper, i.e., $m_e=\hbar=e=1$. As usual, the four matrices
$\alpha_1,\alpha_2,\alpha_3$ and $\beta$ are the four Dirac matrices
in standard representation, explicitly
$$\boldsymbol{\alpha}=\begin{pmatrix}
          0 & \boldsymbol\sigma \\
          \boldsymbol\sigma & 0
          \end{pmatrix},$$
$\boldsymbol\sigma$ denoting the three Pauli matrices
$$\sigma_1=\begin{pmatrix}
            0 & 1 \\
            1 & 0
            \end{pmatrix}, \qquad
  \sigma_2=\begin{pmatrix}
            0 & -\ri \\
            \ri & 0
            \end{pmatrix}, \qquad
  \sigma_3=\begin{pmatrix}
            1 & 0 \\
            0 & -1
            \end{pmatrix},$$
and $$\beta=\begin{pmatrix}
            1 & 0 & 0 & 0\\
            0 & 1 & 0 & 0\\
            0 & 0 & -1 & 0\\
            0 & 0 & 0 & -1
            \end{pmatrix}.$$
Let $\gQ_N:=\bigwedge_{\nu=1}^N(H^{1/2}(\rz^3)\otimes\cz^4)\cap
\gH_N$ where
\begin{equation}
  \label{eq:2}
  \gH_N:=\bigwedge_{\nu=1}^N\gH;
\end{equation}
the underlying one-particle Hilbert space is
\begin{equation}
  \label{eq:1}
  \gH := [\chi_{(0,\infty)}(D_{c,0})](L^2(\rz^3)\otimes\cz^4)
\end{equation}
where
\begin{equation}
  \label{eq:e}
   \chi_{(0,\infty)}(D_{c,0}):=
         \begin{cases}
            1,  & (\psi, D_{c,0}\psi)>0;\\
            0,  & (\psi, D_{c,0}\psi)\leq0
         \end{cases}
\end{equation} for any $\psi \in L^2(\rz^3)\otimes\cz^4$,
and
$$D_{c,0}:= \boldsymbol{\alpha}\cdot \frac c\ri\nabla + c^2\beta.$$

Now we define the model as
\begin{equation}
  \label{eq:f}
  \begin{split}
    \cE:\ &\gQ_N\rightarrow \rz\\
    &\psi \mapsto (\psi, (\sum_{\nu=1}^N(D_{c,0} -c^2-Z/|\bx_\nu|)_\nu+
      \sum_{1\leq\mu<\nu\leq N} |\bx_\mu-\bx_\nu|^{-1})\psi).
  \end{split}
\end{equation}

As an immediate consequence of the work of Evans et al.
\cite{Evansetal1996} quadratic form $\cE$ is bounded from below, in
fact it is positive (Tix \cite{Tix1997,Tix1998}), if
$\kappa:=Z/c\leq \kappa_\mathrm{crit}:=2/(\pi/2+2/\pi)$.  According
to Friedrichs this allows us to define a self-adjoint operator
$B_{c,N,Z}$ whose ground state energy
\begin{equation}
  \label{eq:5}
  E(c,N,Z):= \inf\sigma(B_{c,N,Z})=\inf\{\cE(\psi)|\psi\in\gQ_N,
  \|\psi\|=1\}
\end{equation}
is of concern to us in this paper. In fact the main result of this
article is
\begin{theorem}
  \label{t:haupt}
  When $N>Z$ or $N/Z=\lambda (constant)$, we have
  $$E(Z/\kappa,N,Z)=E_\mathrm{TF}(N,Z)+o(Z^{7/3})$$
  where $E_\mathrm{TF}(N,Z):=\inf\{\cE_\mathrm{TF}(\rho)\ |\rho\geq0,\ \int_{\rz^3}\rho\leq N,\ \rho \in
L^{5/3}(\rz^3)\}$, $$\cE_\mathrm{TF}(\rho):=
\int_{\rz^3}\left[\frac35\gamma_\mathrm{TF}\rho(\bx)^{5/3} - \frac
Z{|\bx|} \rho(\bx)\right]\rd \bx +
  D(\rho,\rho)$$ is the
Thomas-Fermi functional, and $\kappa:=Z/c\leq
\kappa_\mathrm{crit}:=2/\pi$.
\end{theorem}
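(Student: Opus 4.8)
\medskip
The plan is to prove the two matching asymptotic inequalities $E(Z/\kappa,N,Z)\le E_\mathrm{TF}(N,Z)+o(Z^{7/3})$ and $E(Z/\kappa,N,Z)\ge E_\mathrm{TF}(N,Z)-o(Z^{7/3})$, always in the scaling $c=Z/\kappa$ with $\kappa\le\kappa_\mathrm{crit}$ fixed, so that $c\to\infty$ as $Z\to\infty$. The mechanism behind the absence of relativistic corrections is that on the momentum scale $|\bp|\lesssim Z^{1/3}$ which carries the bulk of the energy the relativistic kinetic symbol $\sqrt{c^2|\bp|^2+c^4}-c^2$ is within a relative $O(Z^{-2/3})$ of $|\bp|^2/2$, and the projection $\Lambda_+:=\chi_{(0,\infty)}(D_{c,0})$ is within $O(Z^{-2/3})$ of the diagonal matrix $\tfrac12(1+\beta)=\mathrm{diag}(1,1,0,0)$. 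I shall use the Thomas--Fermi scaling $E_\mathrm{TF}(N,Z)=Z^{7/3}E_\mathrm{TF}(N/Z,1)$, the fact that the Thomas--Fermi functional binds no more than $Z$ electrons, so $E_\mathrm{TF}(N,Z)=E_\mathrm{TF}(\min\{N,Z\},Z)$, and standard properties of the Thomas--Fermi minimizer $\rho_\mathrm{TF}$ of mass $m:=\min\{N,Z\}$, of its chemical potential $\mu_\mathrm{TF}\ge0$ (with $\mu_\mathrm{TF}=0$ iff $N\ge Z$), and of the mean field $\varphi_\mathrm{TF}:=Z|\cdot|^{-1}-\rho_\mathrm{TF}*|\cdot|^{-1}$. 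The analytic tools for handling the non-local projection $\Lambda_+$ are those of Cassanas and Siedentop \cite{CassanasSiedentop2006}; the new content is to run the argument with the correct electron number.

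\medskip
\emph{Upper bound.} I would take the semiclassical Thomas--Fermi one-particle density matrix $\gamma$ whose position density is a mollification of $\rho_\mathrm{TF}$ and which, at each $\bx$, fills in the two large Dirac components the Fermi ball of radius $(3\pi^2\rho_\mathrm{TF}(\bx))^{1/3}$; replacing $\gamma$ by $\Lambda_+\gamma\Lambda_+$ and passing to a Slater determinant produces a state in $\gQ_m$ whose energy is $E_\mathrm{TF}(m,Z)+o(Z^{7/3})$, the errors from substituting $\sqrt{c^2|\bp|^2+c^4}-c^2$ for $|\bp|^2/2$, from substituting $\Lambda_+$ for $\mathrm{diag}(1,1,0,0)$, from the exchange term, and (in the ionic case) from the degeneracy of the Fermi momentum near $\partial\supp\rho_\mathrm{TF}$ being each $o(Z^{7/3})$. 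When $N>Z$ one appends the remaining $N-m$ orbitals as wave packets of unit width supported at a distance $R_Z\to\infty$ (slowly) from the nucleus; for $N=O(Z)$ their kinetic energy, their attraction to the nucleus, and all repulsions involving them are each $o(Z^{7/3})$, and approximate orthogonality to the core orbitals is restored at negligible cost. Since $E_\mathrm{TF}(N,Z)=E_\mathrm{TF}(m,Z)$, this proves the upper bound.

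\medskip
\emph{Lower bound.} Let $\psi\in\gQ_N$ be normalized, with one-particle density $\rho_\psi$ and one-particle density matrix $\gamma_\psi$, so that $0\le\gamma_\psi\le1$ on $\gH$ and $\tr\gamma_\psi=N$. A Lieb--Oxford-type correlation inequality bounds the pair interaction below by $D(\rho_\psi,\rho_\psi)-C\int\rho_\psi^{4/3}$, and a Lieb--Thirring inequality for the Brown--Ravenhall kinetic energy gives $\int\rho_\psi^{4/3}=o(Z^{7/3})$. For any $\sigma\ge0$, completing the square via $D(\rho_\psi-\sigma,\rho_\psi-\sigma)\ge0$ reduces $\cE(\psi)$ to $\tr[\gamma_\psi\Lambda_+(D_{c,0}-c^2-W)\Lambda_+]-D(\sigma,\sigma)-o(Z^{7/3})$ with $W:=Z|\cdot|^{-1}-\sigma*|\cdot|^{-1}$; using $0\le\gamma_\psi\le1$ and $\tr\gamma_\psi=N$, this first trace is at least the sum of the $N$ lowest eigenvalues of $\Lambda_+(D_{c,0}-c^2-W)\Lambda_+$, and a coherent-state semiclassical bound --- with the non-relativistic constant, since $\Lambda_+$ and $\sqrt{c^2|\bp|^2+c^4}-c^2$ again contribute only lower-order corrections on the relevant scale --- gives, for every $\mu\le0$,
\[
\cE(\psi)\ \ge\ -\frac{2}{5}\,\gamma_\mathrm{TF}^{-3/2}\int_{\rz^3}\Big(\frac{Z}{|\bx|}-\big(\sigma*|\cdot|^{-1}\big)(\bx)+\mu\Big)_+^{5/2}\,\rd\bx\ +\ N\mu\ -\ D(\sigma,\sigma)\ -\ o(Z^{7/3}).
\]
Maximizing the first three terms over $\sigma\ge0$ and $\mu\le0$ reproduces, by Lieb's dual formulation of Thomas--Fermi theory, precisely $\inf\{\cE_\mathrm{TF}(\rho)\ |\ \rho\ge0,\ \int_{\rz^3}\rho\le N\}=E_\mathrm{TF}(N,Z)$, the optimum being $\sigma=\rho_\mathrm{TF}$, $\mu=-\mu_\mathrm{TF}(N)$; this is where the electron number enters in the ionic case, whereas for $N\ge Z$ one has $\mu_\mathrm{TF}=0$ and the surplus is invisible.

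\medskip
\emph{Main obstacle.} The principal difficulty, as already in \cite{CassanasSiedentop2006}, is the non-locality of $\Lambda_+$: the kinetic energy is not a multiplication operator in momentum acting on scalar functions, so $\Lambda_+$ must be commuted past the spatial and momentum localizations used both to build the trial state and to run the coherent-state bound, and the commutators must be shown to cost only $o(Z^{7/3})$ --- which is delicate precisely because $\kappa$ is allowed up to the critical coupling, where $\Lambda_+(D_{c,0}-c^2-Z|\cdot|^{-1})\Lambda_+$ is only conditionally positive and no uniform spectral gap is available, so every estimate must carry an explicit rate. Layered on top is the genuinely new bookkeeping of the electron number: one must verify that the surplus orbitals in the trial state cost $o(Z^{7/3})$ when $N>Z$ (which constrains how large $N$ may be relative to $Z$, e.g. $N=O(Z)$, or invokes the ionization bound), that the Cassanas--Siedentop error estimates stay uniform over the ranges $N>Z$ and $N/Z=\lambda$, and that in the ionic case the coherent-state construction is correctly adapted near $\partial\supp\rho_\mathrm{TF}$, where the Fermi momentum degenerates.
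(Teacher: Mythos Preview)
Your upper bound is essentially the paper's: both build a Thomas--Fermi coherent-state trial density matrix carrying $\min\{N,Z\}$ electrons, append the surplus (when $N>Z$) as orbitals supported far from the nucleus so that each of their energy contributions is $o(Z^{7/3})$, and control the relativistic corrections coming from $\Phi$ and from $E_c(\p)-c^2-\tfrac12\p^2$ by the estimates of \cite{CassanasSiedentop2006}. The paper implements the far-away orbitals concretely as dilates $\phi_{\tilde k}$ of a fixed profile and obtains the explicit remainder $O(Z^{20/9})$ (Lemmata~\ref{3.2.1}--\ref{3.2.6} and \ref{l1}--\ref{l3}, Theorem~\ref{th:3}).

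Your lower bound takes a genuinely different route. The paper uses the exchange-hole correlation inequality of \cite{Mancasetal2004} (not Lieb--Oxford), then reduces the Brown--Ravenhall one-body problem to the scalar Chandrasekhar operator via the symmetry $\Lambda_-=U^{-1}\Lambda_+U$, which halves the apparent spin degeneracy, and simply cites S\o rensen \cite{Sorensen2005} for the semiclassics of $\tr[E_c(\p)-c^2-V_\delta]_-$; the residual $-kNZ$ term is disposed of, for $N>Z$, by invoking the ionization bound $N_c(Z)<2Z+1$ \cite{Lieb1984} together with $E(c,N,Z)=E(c,N_c(Z),Z)$ for $N\ge N_c(Z)$ \cite{Sigal1982}. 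You instead keep the particle-number constraint through the Lagrange multiplier $\mu$ and recover $E_\mathrm{TF}(N,Z)$ by Legendre duality, which treats the ionic case $\lambda<1$ (where $\mu_\mathrm{TF}>0$) and the surplus case $N>Z$ (where $\mu_\mathrm{TF}=0$) on the same footing without appealing to ionization bounds. The paper's path is shorter because the one-body semiclassics is outsourced to \cite{Sorensen2005}; yours is more self-contained and makes the dependence on $N$ transparent, but you then have to establish the coherent-state lower bound with the non-relativistic constant and a Lieb--Thirring-type kinetic inequality \emph{directly} for the projected operator $\Lambda_+(|D_0|-c^2-W)\Lambda_+$, rather than first reducing to the scalar operator as the paper does via \eqref{eq:32b}--\eqref{eq:32c}. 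That reduction is worth borrowing even within your scheme.
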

In the following, we will assume that the ratio
$\kappa\in[0,\kappa_\mathrm{crit})$ is fixed. Note that according to
\cite[Formula (9c)]{LiebSimon1977} the minimal energy
$E_\mathrm{TF}(N,Z)$ fulfills the scaling relation
\begin{equation}
  \label{eq:15a-1}
  E_\mathrm{TF}(N,Z)=E_\mathrm{TF}(N/Z,1)Z^{7/3}.
\end{equation}

 The article is
structured as follows: first we show how the treatment of the
Brown-Ravenhall model can be reduced from Dirac spinors (4-spinors)
to Pauli spinors (2-spinors). Then we separately give the upper and
lower bounds for the case $N>Z$ in Section 3 and for the case where
$N/Z$ is fixed in Section 4. Throughout the paper, we use the letter
$k$ for any constants independent of $c$, $N$, $R$, or $Z$.

We now indicate, how to reduce to Pauli spinors. To this end we
parameterize the allowed states: Any $\psi\in \gH$ can be written as
\begin{equation}
  \label{eq:6}
  \displaystyle
  \psi :=
  \begin{pmatrix}
    {E_c(\p)+c^2\over N_c(\p)}u\\
    {c\p\cdot\boldsymbol\sigma\over N_c(\p)}u
  \end{pmatrix}
\end{equation}
for some $u\in \gh:=L^2(\rz^3)\otimes \cz^2$. Here,
$\boldsymbol{\sigma}$ are the three Pauli matrices,
$$\p:=-\ri\nabla,\ \ E_c(\p):= (c^2\p^2+c^4)^{1/2},\
N_c(\p):=[2E_c(\p)(E_c(\p)+c^2)]^{1/2}.$$  In fact, the map
\begin{equation}
  \label{eq:7}
  \begin{split}
    \Phi: \gh &\rightarrow \gH\\
  u&\mapsto
  \begin{pmatrix}
    \Phi_1u\\
    \Phi_2u
  \end{pmatrix}
:=
\begin{pmatrix}
    {E_c(\p)+c^2\over N_c(\p)}u\\
    {c\p\cdot\boldsymbol\sigma\over N_c(\p)}u
  \end{pmatrix}
\end{split}
\end{equation}
embeds $\gh$ unitarily into $\gH$ and its restriction onto
$H^1(\rz^3)\otimes\cz^2$ is also a unitary mapping to $\gH\cap
H^1(\rz^3)\otimes\cz^4$ (Evans et al. \cite{Evansetal1996}).

It suffices to study the energy as function of $u$
\begin{equation}
  \label{eq:8}
  \cE\circ (\otimes_{\nu=1}^N\Phi): \bigwedge_{\nu=1}^N \gh \rightarrow \rz.
\end{equation}

The one-particle Brown-Ravenhall operator $B_\gamma$ for an electron
the external electric potential of a point nucleus acting on Pauli
spinors is then (see Appendix \ref{sa3})
\begin{equation}
  \label{eq:9}
  B_{c,Z}= E_c(\p)- Z \varphi_1 -Z \varphi_2
\end{equation}
where
\begin{equation}
  \label{eq:9.5}
  B_{c,Z}:= \chi_{(0,\infty)}(D_{c,0})D_{c,Z}
\end{equation}
and we have split the potential into
\begin{equation}
  \label{eq:10}
  \varphi_1:=\Phi_1^*|\cdot|^{-1}\Phi_1,\ \
  \varphi_2:=\Phi_2^*|\cdot|^{-1}\Phi_2.
\end{equation}
As we will see the first part $\varphi_1$ is contributing to the
nonrelativistic limit whereas the second part turns out to give the
energy contribution that does not even affect the first correction
term.

\section{Case I: $N>Z$}
\label{sec3} In this section, we prove Theorem \ref{t:haupt} for
negatively charged atoms.

\subsection{Coherent States\label{ss3.1}}
We obtain the upper bound by constructing a trial density matrix in
the Hartree-Fock functional for the Brown-Ravenhall operator. To
this end we introduce spinor valued coherent states.

Given functions $f, \tilde f\in H^{3/2}(\rz^3)$ and an element
$\alpha=(\bp,\bq,\tau)$ of the phase space
$\Gamma:=\rz^3\times\rz^3\times\{1,2\}$, we define coherent states
in $\gh$ as
\begin{equation}
  \label{eq:11n}
   F_\alpha(x):= (\varphi_{\bp,\bq}\otimes e_\tau)(x)
  :=f(\bx-\bq)\exp(i\bp\cdot\bx)\delta_{\tau,\sigma},
\end{equation}
where $x=(\bx,\sigma)\in\rz^3\times\{1,2\}$ and the vectors $e_\tau$
are the canonical basis vectors in $\cz^2$ (see Lieb \cite{Lieb1981}
and Evans et al. \cite{Evansetal1996}). We also define
\begin{equation}
  \label{eq:11n-1}
  \phi_{\tilde{k}}(\bx):=(2^{\tilde{k}})^{-3/2}\tilde f(\tfrac{\bx}{2^{\tilde k}}),
\end{equation} where $\tilde{k}\in\nz$.
We will pick $f$ depending on a dilation parameter. More
specifically, we will choose
\begin{equation}
  \label{eq:11an}
   f(\bx):=g_R(\bx):=R^{-3/2}g(R^{-1}\bx)
\end{equation}
(see \cite[Formula (11)]{CassanasSiedentop2006}),
\begin{equation}
  \label{eq:11an-1}
   \tilde f(\bx):=
         \begin{cases}
            (2 \pi \tilde R)^{-1/2} \mid |\bx|-\tilde R \mid ^{-1} \sin ({\pi \mid |\bx|-\tilde R \mid}/\tilde R),  & \tilde R \leq |\bx| \leq 2\tilde R;\\
               0,      & otherwise,
         \end{cases}
\end{equation}
where $R:=Z^{-\delta}$ with $\delta\in(1/3,2/3)$, $\tilde R \in
\rz_+$ and $g\in H^{3/2}$, spherically symmetric, normalized, and
with support in the unit ball.

The natural measure on $\Gamma$ counting the number of electrons per
phase space volume in the spirit of Planck is $\int_\Gamma
\dbar\Omega(\alpha) := (2\pi)^{-3}\int \rd \bp \int \rd\bq
\sum_{\tau=1}^2$. The essential properties needed are the following.
For $A, \tilde A\in L^1(\Gamma,\dbar \Omega)$, let
$$\tilde{\gamma_1}:=\int_\Gamma\dbar \Omega(\alpha)
  \tilde A(\alpha)|F_\alpha\rangle\langle F_\alpha|,$$
$$\gamma_2:=\sum\limits_{ \tilde{k}=K+1}^{K+N-Z} |\phi_{
\tilde{k}}\rangle\langle \phi_{\tilde{k}}|,$$
$$\tilde{\gamma_2}:=\epsilon_{\tilde R}
\sum\limits_{\tilde{k}=K+1}^{K+Z}
|\phi_{N-Z+\tilde{k}}\rangle\langle \phi_{N-Z+\tilde{k}}|, \quad
where \ \  \epsilon_{\tilde R}:=1-\frac{\int_\Gamma\dbar
\Omega(\alpha)
  \tilde A(\alpha)}{Z}.$$
Then
$$\gamma:=\tilde{\gamma_1}+\gamma_2+\tilde{\gamma_2}$$
and
\begin{equation}
  \label{eq:11bn}
  \gamma_1:= \int_\Gamma\dbar \Omega(\alpha)
  A(\alpha)|F_\alpha\rangle\langle F_\alpha|
\end{equation}
(see \cite[Formula (12)]{CassanasSiedentop2006}) are trace class
operators.

We will pick
\begin{equation}
  \label{eq:14an}
  A(\alpha):=\chi_{\{(\bp,\bq)\in\rz^6|\bp^2/2 -
  V_Z(\bq)\leq 0\}}(\bp,\bq)
\end{equation}
(see \cite[Formula (16)]{CassanasSiedentop2006}),
\begin{equation}
  \label{eq:14an2}
  \tilde A(\alpha):=\chi_{\{(\bp,\bq)\in\rz^6|\bp^2/2 -
  V_Z(\bq)\leq 0,\ |\bq|\leq \tilde R - R\}}(\bp,\bq),
\end{equation}
where $V_Z:=Z/|\cdot|-|\cdot|^{-1}*\rho^{(N,Z)}_\mathrm{TF}$; here
$\rho^{(N,Z)}_\mathrm{TF}$ is the unique minimizer of the
Thomas-Fermi functional
\begin{equation}
  \label{eq:15-1}
  \cE_\mathrm{TF}(\rho):= \int_{\rz^3}\left[\frac35\gamma_\mathrm{TF}\rho(\bx)^{5/3} - \frac Z{|\bx|} \rho(\bx)\right]\rd \bx +
  D(\rho,\rho)
\end{equation}
in the set of functions $\rho \in \{\rho\ |\rho\geq0,\
\int_{\rz^3}\rho\leq N,\ \rho \in L^{5/3}(\rz^3)\}$. Here
$$D(\rho,\rho):=\frac12\int_{\rz^3}\int_{\rz^3}\rho(\bx)|\bx-\by|^{-1}\rho(\by)\rd\bx\rd\by$$
is the Coulomb scalar product. For fermions with $q$ spin states per
particle, $\gamma_\mathrm{TF}:=(6\pi^2/q)^{2/3}\hbar^2/(2m)$(see
Lieb \cite[Formula (2.6)]{Lieb1981}), i.e., in our units,
$\gamma_\mathrm{TF}=(3\pi^2)^{2/3}/2$.

According to \cite{Loss2005},
$E_\mathrm{TF}(N,Z):=\inf\cE_\mathrm{TF}(\rho)$ is strictly monotone
decreasing for $N\leq Z$ and constant for $N>Z$. Thus for $N>Z$, the
minimizer $\rho$ of the Thomas-Fermi functional \eqref{eq:15-1}
coincides the minimizer in the case $N=Z$, which we denote by
$\rho_\mathrm{Z}$. Hence
$V_Z=Z/|\cdot|-|\cdot|^{-1}*\rho_\mathrm{Z}$ for $N>Z$.

  By Lemma \ref{a1},
\begin{equation}
  \label{eq:12-1}
  0\leq A\leq1 \implies 0\leq \gamma_1\leq 1,
\end{equation}
according to the definition of $\tilde A$ and the support sets of
$\phi_k$,
\begin{equation}
  \label{eq:12n}
  0\leq \tilde A\leq1 \implies 0\leq \gamma\leq 1;
\end{equation}
by Lemma \ref{a2}
\begin{equation}
\label{eq:13-1} \tr\gamma_1 = \int_\Gamma \dbar \Omega(\alpha)
A(\alpha) \leq N,
\end{equation}
\begin{equation}
\label{eq:13-1n} \tr\gamma =
\tr\tilde{\gamma_1}+\tr\gamma_2+\tr\tilde{\gamma_2} =N.
\end{equation}
Using $\Phi$ we can lift $\gamma$ to an operator on $\gH$
\begin{equation}
  \label{eq:14-1}
  \gamma_\Phi := \Phi\gamma\Phi^*.
\end{equation}
Note that
\begin{equation}
 \label{eq:13-2n}
 V^{(N)}_Z(\bq):= Z^{4/3}V^{(N/Z)}_1(Z^{1/3}\bq)
\end{equation}
(see Gomb\'as \cite{Gombas1949} and \cite{LiebSimon1977}). Note also
that \eqref{eq:13-2n} does not depend on $N$ for $N>Z$.

\subsection{Upper Bound on ${\cE}^R_\mathrm{HF}(\gamma)$\label{ss3.2}}

We begin by noting that the Hartree-Fock functional -- with or
without exchange energy -- bounds $E(c,N,Z)$ from above. To be exact
we introduce the set of density matrices
\begin{equation}
  \label{eq:16n}
  S_{\partial N}:=\{\gamma\in \gS^1(\gh)\ | E_c(\hat\bp)\gamma\in \gS^1(\gh),\ 0\leq\gamma\leq1,\ \tr\gamma=N \}
\end{equation}
where $\gS^1(\gh)$ denotes the trace class operators on $\gh$.
\begin{equation}
  \label{eq:17-1}
  \begin{split}
    {\cE}^R_\mathrm{HF}: S_{\partial N}&\rightarrow \rz\\
    \gamma&\mapsto \tr[(E_c(\hat\bp)-c^2 - Z/|\bx|)\gamma_\Phi]+ D(\rho_{\gamma_\Phi},\rho_{\gamma_\Phi})
  \end{split}
\end{equation}
where -- as usual -- $\rho_\gamma(\bx):=\gamma(\bx,\bx)$ is the
density associated to $\gamma$ and $D$ is the Coulomb scalar
product. By the analogue of Lieb's result \cite[Formula
(9)]{Lieb1981V} and \cite{Lieb1981E} (see also Bach \cite{Bach1992})
-- which trivially transcribes from the Schr\"odinger setting to the
present one -- we have for all $\gamma\in S_{\partial N}$
\begin{equation}
  \label{eq:18-1}
  E(c,N,Z)\leq {\cE}^R_\mathrm{HF}(\gamma).
\end{equation}

\subsubsection{Kinetic Energy of $\tilde{\gamma_1}$\label{sss3.2.1}}

\begin{lemma}
  \label{3.2.1}
  The kinetic energy of $\tilde{\gamma_1}$ does not exceed
  the kinetic energy of $\gamma_1$, i.e.,
  \begin{equation}
    \label{eq:3.2.1-1}
    \tr[(E_c(\hat\bp)-c^2)\tilde{\gamma_1}_\Phi]\leq
    \tr[(E_c(\hat\bp)-c^2){\gamma_1}_\Phi].
  \end{equation}
\end{lemma}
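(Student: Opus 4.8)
The plan is to compare the two density matrices directly via the coherent-state integral representations. By definition $\gamma_1 = \int_\Gamma \dbar\Omega(\alpha) A(\alpha)\,|F_\alpha\rangle\langle F_\alpha|$ and $\tilde\gamma_1 = \int_\Gamma \dbar\Omega(\alpha)\,\tilde A(\alpha)\,|F_\alpha\rangle\langle F_\alpha|$, and the cutoffs defined in \eqref{eq:14an} and \eqref{eq:14an2} satisfy $\tilde A(\alpha)\le A(\alpha)$ pointwise on $\Gamma$, since $\tilde A$ carries the extra constraint $|\bq|\le\tilde R - R$. Because $|F_\alpha\rangle\langle F_\alpha|$ is a positive operator and, after lifting by $\Phi$, so is $\Phi|F_\alpha\rangle\langle F_\alpha|\Phi^*$, the difference $\gamma_1 - \tilde\gamma_1 = \int_\Gamma \dbar\Omega(\alpha)\,(A(\alpha)-\tilde A(\alpha))\,|F_\alpha\rangle\langle F_\alpha|$ is a positive (trace-class) operator, and hence so is $(\gamma_1)_\Phi - (\tilde\gamma_1)_\Phi = \Phi(\gamma_1-\tilde\gamma_1)\Phi^*$.

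Next I would observe that the kinetic-energy operator $E_c(\hat\bp) - c^2$ is a nonnegative self-adjoint multiplication operator in momentum space (indeed $E_c(p)-c^2 = \sqrt{c^2 p^2 + c^4} - c^2 \ge 0$). Writing the trace of a product of a nonnegative operator with a nonnegative trace-class operator as $\tr[(E_c(\hat\bp)-c^2)K] = \tr[(E_c(\hat\bp)-c^2)^{1/2} K (E_c(\hat\bp)-c^2)^{1/2}] \ge 0$ for any $0\le K\in\gS^1$, we apply this with $K = (\gamma_1)_\Phi - (\tilde\gamma_1)_\Phi$. This yields
\begin{equation*}
  \tr[(E_c(\hat\bp)-c^2)\,(\gamma_1)_\Phi] - \tr[(E_c(\hat\bp)-c^2)\,(\tilde\gamma_1)_\Phi] \ge 0,
\end{equation*}
which is exactly \eqref{eq:3.2.1-1}. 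One should check integrability: finiteness of both kinetic energies follows because $f = g_R \in H^{3/2}$ (so $E_c(\hat\bp)^{1/2}F_\alpha$ is in $\gh$ uniformly), the $\bq$-integration is over the support of $A$ (a bounded region once convolved against the Thomas-Fermi potential behaviour), and the $\bp$-integration is weighted by $\sqrt{c^2\bp^2+c^4}-c^2 \le c^2 + \tfrac12 \bp^2$-type bounds combined with the momentum cutoff $\bp^2/2 \le V_Z(\bq)$ implicit in $A$; these are the same estimates that make $\gamma_1$ trace class in the cited references.

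The only genuine subtlety — and the step I expect to need the most care — is the passage under the integral sign: one must justify that $\tr[(E_c(\hat\bp)-c^2)(\cdot)_\Phi]$ commutes with the Bochner integral $\int_\Gamma\dbar\Omega(\alpha)$, i.e. that $\tr[(E_c(\hat\bp)-c^2)(\gamma_1)_\Phi] = \int_\Gamma\dbar\Omega(\alpha)\,A(\alpha)\,\langle F_\alpha, \Phi^*(E_c(\hat\bp)-c^2)\Phi\, F_\alpha\rangle$ and similarly for $\tilde\gamma_1$. This is standard given that the integrand $\alpha\mapsto A(\alpha)\langle \Phi F_\alpha, (E_c(\hat\bp)-c^2)\Phi F_\alpha\rangle$ is nonnegative and $L^1(\Gamma,\dbar\Omega)$ (Tonelli), but it is the place where the hypotheses on $f$, $R$, and the support of $A$ are actually consumed; alternatively, one avoids the issue entirely by noting $A - \tilde A \ge 0$ and invoking monotonicity of the trace of $(E_c(\hat\bp)-c^2)^{1/2}(\cdot)_\Phi(E_c(\hat\bp)-c^2)^{1/2}$ on the cone of positive operators, which is the cleaner route and the one I would write up.
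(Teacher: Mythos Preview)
Your proposal is correct and follows essentially the same route as the paper: both use that $\tilde A\le A$ together with the nonnegativity $E_c(\hat\bp)-c^2\ge0$, and both invoke the square-root factorization $(E_c(\hat\bp)-c^2)^{1/2}$ to conclude that the difference of kinetic energies is nonnegative. The paper writes the argument out as an explicit phase-space integral over the region $|\bq|>\tilde R-R$, while you phrase it operator-theoretically via positivity of $(\gamma_1-\tilde\gamma_1)_\Phi$, but these are the same computation.
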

\begin{proof}
According to
\begin{equation}
  \label{eq:19-1}
  E_c(\p)-c^2\geq 0,
\end{equation}
we have
  \begin{multline}
   \label{eq:3.2.1}
   \tr[(E_c(\hat\bp)-c^2)(\tilde{\gamma_1}_\Phi-{\gamma_1}_\Phi)]=\int \int_\Gamma\dbar
\Omega(\alpha)
  [\tilde A(\alpha)-A(\alpha)][(E_c(\hat\bp)-c^2)F_\alpha(x)] \overline {F_\alpha(x)}\rd x
  \\
  = -\frac12 \int (2\pi)^{-3} \int\limits_{{\bp}^2/2 - V_Z(\bq)\leq 0 \atop |\bq|> \tilde R - R} \sum_{\tau=1}^2 \rd\bp \rd\bq [(E_c(\hat\bp)-c^2)^{1/2}F_\alpha(x)] \overline {[(E_c(\hat\bp)-c^2)^{1/2}F_\alpha(x)]}\rd
  x \\
    =-(2\pi)^{-3} \int\limits_{{\bp}^2/2 - V_Z(\bq)\leq 0 \atop |\bq|> \tilde R - R} \int |(E_c(\hat\bp)-c^2)^{1/2}F_\alpha(x)|^2 \rd x \rd\bp \rd\bq \leq
    0. \\
  \end{multline}
  Thus $\tr[(E_c(\hat\bp)-c^2)\tilde{\gamma_1}_\Phi]\leq \tr[(E_c(\hat\bp)-c^2){\gamma_1}_\Phi]$.
\end{proof}

\subsubsection{External Potential of $\tilde{\gamma_1}$\label{sss3.2.2}}

\begin{lemma}
  \label{3.2.2}
  For any $\varepsilon >0$, there exists $\tilde{R}$
  large enough such that\\ $\tr[(- Z/|\bx|)\tilde{\gamma_1}_\Phi]\leq \tr[(- Z/|\bx|){\gamma_1}_\Phi]+\varepsilon$
, obviously for any fixed $\tilde{R}$, we have\\ $\tr[(-
Z/|\bx|)\tilde{\gamma_1}_\Phi]\geq \tr[(- Z/|\bx|){\gamma_1}_\Phi]$,
i.e.,
  \begin{equation}
    \label{eq:3.2.2-1}
    \tr[(- Z/|\bx|)\tilde{\gamma_1}_\Phi]\longrightarrow
    \tr[(- Z/|\bx|){\gamma_1}_\Phi], \ \ \tilde R \rightarrow
    \infty.
  \end{equation}
\end{lemma}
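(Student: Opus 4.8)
The reverse (``obvious'') inequality is free of charge: by \eqref{eq:14an}--\eqref{eq:14an2} one has $\tilde A\leq A$ pointwise, so $\gamma_1-\tilde{\gamma_1}=\int_\Gamma\dbar \Omega(\alpha)\,[A(\alpha)-\tilde A(\alpha)]\,|F_\alpha\rangle\langle F_\alpha|\geq0$, hence $({\gamma_1}_\Phi-\tilde{\gamma_1}_\Phi)=\Phi(\gamma_1-\tilde{\gamma_1})\Phi^*\geq0$, and pairing this nonnegative operator against the nonpositive multiplication operator $-Z/|\bx|$ gives $\tr[(-Z/|\bx|)\tilde{\gamma_1}_\Phi]\geq\tr[(-Z/|\bx|){\gamma_1}_\Phi]$. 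Thus everything rests on the upper bound. Using that $\Phi$ is an isometry and $\Phi^*|\cdot|^{-1}\Phi=\varphi_1+\varphi_2$ (see \eqref{eq:10}), I would rewrite the difference as a coherent-state integral over the annular region removed by $\tilde A$:
$$\tr[(-Z/|\bx|)\tilde{\gamma_1}_\Phi]-\tr[(-Z/|\bx|){\gamma_1}_\Phi]=Z\int\limits_{{\bp}^2/2 - V_Z(\bq)\leq 0\atop|\bq|> \tilde R - R}\dbar \Omega(\alpha)\,\langle F_\alpha,(\varphi_1+\varphi_2)F_\alpha\rangle\ \geq\ 0.$$
So everything reduces to showing that this last integral tends to $0$ as $\tilde R\to\infty$ (with $c$, $N$, $R$, $Z$ held fixed).

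The plan is to dominate the integrand (extended by the indicator of $\{\bp^2/2-V_Z(\bq)\leq0\}$) by an $L^1(\Gamma,\dbar \Omega)$ function independent of $\tilde R$ and then invoke dominated convergence, noting that the regions $\{\bp^2/2-V_Z(\bq)\leq0,\ |\bq|>\tilde R-R\}$ decrease to $\emptyset$ as $\tilde R\to\infty$. The natural majorant is the full integral $Z\,\tr[(\varphi_1+\varphi_2)\gamma_1]=\tr[(Z/|\bx|){\gamma_1}_\Phi]$, which I would check is finite as follows. Kato's inequality $|\bx|^{-1}\leq(\pi/2)|\p|$ on $L^2(\rz^3)\otimes\cz^4$ together with $\Phi^*|\p|\Phi=|\p|\leq c^{-1}E_c(\p)$ gives $\varphi_1+\varphi_2\leq(\pi/2c)E_c(\p)$ as quadratic forms on $\gh$, hence $\tr[(Z/|\bx|){\gamma_1}_\Phi]\leq(Z\pi/2c)\tr[E_c(\p)\gamma_1]$. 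Evaluating the coherent-state integral for $\tr[E_c(\p)\gamma_1]$ and using $E_c(\bxi+\bp)\leq\sqrt2\,E_c(\bxi)+\sqrt2\,c|\bp|$ with $|\bp|\leq(2V_Z(\bq))^{1/2}$ on the support of $A$ (recall $\|f\|=1$ since $g$ is normalized), one gets a bound of the form $k\int_{\rz^3}V_Z(\bq)^{3/2}\big(\langle f,E_c(\p)f\rangle+c\,V_Z(\bq)^{1/2}\big)\,\rd\bq$; this is finite because $V_Z^{3/2}$ is proportional to $\rho^{(N,Z)}_\mathrm{TF}\in L^1(\rz^3)$ (Thomas--Fermi equation) and $V_Z^2\in L^1(\rz^3)$ (since $\rho^{(N,Z)}_\mathrm{TF}\in L^1\cap L^{5/3}$ and $V_Z$ is square integrable near the nucleus). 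Dominated convergence then forces the displayed integral to $0$, which yields simultaneously the $\varepsilon$-estimate for $\tilde R$ large and the convergence \eqref{eq:3.2.2-1}.

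The one genuinely delicate point is the uniform control of $\langle F_\alpha,(\varphi_1+\varphi_2)F_\alpha\rangle$: since the Fourier multipliers $\Phi_1,\Phi_2$ are nonlocal, the lifted state $\Phi F_\alpha$ is not concentrated near $\bq$, so one cannot argue directly from the smallness of $Z/|\bx|$ on $\supp F_\alpha$ when $|\bq|$ is large. Trading the Coulomb potential for the kinetic energy $E_c(\p)$ via Kato's inequality circumvents this, because on $\supp A$ the momenta obey $|\bp|\leq(2V_Z(\bq))^{1/2}$ and $f=g_R$ is a fixed $H^{3/2}$ function, so $\langle F_\alpha,E_c(\p)F_\alpha\rangle$ stays bounded there and is integrable against the phase-space measure of $\supp A$; equivalently, $\gamma_1$ has finite kinetic energy.
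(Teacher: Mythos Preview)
Your argument is correct and follows the same route as the paper: both proofs use Kato's inequality to trade the Coulomb singularity for a momentum bound on the coherent states, then reduce to the vanishing of a tail integral in phase space via integrability of Thomas--Fermi quantities ($\rho_Z\in L^1$, $\rho_Z^{4/3}\in L^1$, equivalently $V_Z^{3/2},V_Z^2\in L^1$). The only cosmetic difference is that the paper works directly with $|\p|$ and writes out an explicit bound $k Z^{1+4\delta}\int_{|\bq|>\tilde R-R}[\rho_\mathrm{Z}(\bq)+R\rho_\mathrm{Z}^{4/3}(\bq)]\,\rd\bq\to0$, whereas you pass through $|\p|\leq c^{-1}E_c(\p)$ and invoke dominated convergence after checking $\tr[E_c(\p)\gamma_1]<\infty$; the latter packaging is slightly cleaner for this lemma in isolation, while the paper's explicit form is in the spirit of the quantitative $Z$-estimates used elsewhere.
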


\begin{proof}
According to Lieb \cite[Formula (2.18)]{Lieb1981}, we can write the
Thomas-Fermi equation as
\begin{equation}
  \label{eq:a00}
  \gamma_\mathrm{TF}\rho_\mathrm{Z}^{2/3}(\bx)=[V_Z(\bx)-u']_+,
\end{equation}
where $u'\geq0$ is some constant and for $t\in\rz$, we set
$[t]_+:=\max\{ t,0 \}$. Then for $N>Z$, the Thomas-Fermi potential
$V_Z:=Z/|\cdot|-\rho_\mathrm{Z}*|\cdot|^{-1}$ can be written as
\begin{equation}
  \label{eq:a0}
  \gamma_\mathrm{TF}\rho_\mathrm{Z}^{2/3}=V_Z
\end{equation}
(see, e.g., Gomb\'as \cite{Gombas1949}).

 Using Kato's inequality (see
\cite[Formula (2.9)]{Herbst1977}), passing from $\bxi$ to
$\tfrac{\bxi}R+\bp$ and taking into account that $$ |\hat
g(\bxi)|^2[|\bxi|^2 +
  2|\bxi| +1]$$ is integrable
because $g\in H^{3/2}(\rz^3)$, we can get
  \begin{multline}
   \label{eq:3.2.2}
    \tr[(- Z/|\bx|)(\tilde{\gamma_1}_\Phi-{\gamma_1}_\Phi)]\leq
    \frac\pi2 Z \int \int_\Gamma\dbar
\Omega(\alpha)
  [A(\alpha)-\tilde A(\alpha)]|\hat F_\alpha(\bxi)|^2 |\bxi|\rd
  \bxi \\ =\frac\pi2 Z \int (2\pi)^{-3} \int\limits_{{\bp}^2/2 - V_Z(\bq)\leq 0 \atop |\bq|> \tilde R - R} \sum_{\tau=1}^2 \rd\bp \rd\bq |\hat F_\alpha(\bxi)|^2 |\bxi|\rd \bxi \\
  \leq k Z \int_\Gamma\dbar \Omega(\alpha)[A(\alpha)-\tilde A(\alpha)]R^{-4}\int |\hat g(\bxi)|^2|\bxi+R\bp|
  \rd\bxi \\
  \leq k Z \int_\Gamma\dbar \Omega(\alpha)[A(\alpha)-\tilde
  A(\alpha)]R^{-4}(1+R|\bp|) \\
  \leq k Z^{1+4\delta}\int\limits_{{\bp}^2/2 - \gamma_\mathrm{TF}\rho_\mathrm{Z}^{2/3}(\bq)\leq 0 \atop |\bq|> \tilde R - R}\rd\bp
  \rd\bq (1+R|\bp|) \\
  \leq k Z^{1+4\delta}\int\limits_{|\bq|> \tilde R -
  R}[\rho_\mathrm{Z}(\bq)+R\rho_\mathrm{Z}^{4/3}(\bq)]\rd\bq\longrightarrow 0, \quad when \quad \tilde R \rightarrow\infty. \\
  \end{multline}
  The last step is according to absolute continuity of Lebesgue
  integral.\\
  Thus $\tr[(- Z/|\bx|)\tilde{\gamma_1}_\Phi]\longrightarrow \tr[(-
  Z/|\bx|){\gamma_1}_\Phi]$,  when $\tilde R \rightarrow \infty$.
\end{proof}

\subsubsection{The Electron-Electron Interaction of $\tilde{\gamma_1}$\label{sss3.2.3}}

\begin{lemma}
  \label{3.2.3}
  The electron-electron interaction of $\tilde{\gamma_1}$ does not
  exceed the electron-electron interaction of $\gamma_1$, i.e.,
  \begin{equation}
    \label{eq:3.2.3-1}
    D(\rho_{\tilde{\gamma_1}_\Phi},\rho_{\tilde{\gamma_1}_\Phi})\leq D(\rho_{{\gamma_1}_\Phi},\rho_{{\gamma_1}_\Phi}).
  \end{equation}
\end{lemma}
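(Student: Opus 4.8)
The plan is to reduce the claim to two elementary positivity facts: that $A-\tilde A\geq 0$ pointwise on $\Gamma$, and that the Coulomb kernel $|\bx-\by|^{-1}$ is strictly positive. By \eqref{eq:14an} and \eqref{eq:14an2} one has $\tilde A(\alpha)=A(\alpha)$ when $|\bq|\leq\tilde R-R$ and $\tilde A(\alpha)=0$ otherwise, so $A-\tilde A\geq0$; consequently
\[
  \gamma_1-\tilde{\gamma_1}=\int_\Gamma\dbar\Omega(\alpha)\,[A(\alpha)-\tilde A(\alpha)]\,|F_\alpha\rangle\langle F_\alpha|
\]
is a nonnegative trace-class operator on $\gh$. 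Since $\Phi$ is linear and isometric, conjugation preserves positivity, so ${\gamma_1}_\Phi-\tilde{\gamma_1}_\Phi=\Phi(\gamma_1-\tilde{\gamma_1})\Phi^*\geq 0$ on $\gH$. Passing to densities — a linear operation under which a nonnegative operator has a nonnegative density, explicitly
\[
  \bigl(\rho_{{\gamma_1}_\Phi}-\rho_{\tilde{\gamma_1}_\Phi}\bigr)(\bx)=\int_\Gamma\dbar\Omega(\alpha)\,[A(\alpha)-\tilde A(\alpha)]\,\bigl|(\Phi F_\alpha)(\bx)\bigr|_{\cz^4}^{2}\geq 0
\]
— I would obtain the pointwise bound $0\leq\rho_{\tilde{\gamma_1}_\Phi}\leq\rho_{{\gamma_1}_\Phi}$, the lower bound holding because $\tilde{\gamma_1}_\Phi\geq0$ too.

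The second step is the monotonicity of $D$ on the cone of nonnegative densities. Setting $\nu:=\rho_{{\gamma_1}_\Phi}-\rho_{\tilde{\gamma_1}_\Phi}\geq 0$ and using bilinearity and symmetry of the Coulomb form,
\[
  D(\rho_{{\gamma_1}_\Phi},\rho_{{\gamma_1}_\Phi})-D(\rho_{\tilde{\gamma_1}_\Phi},\rho_{\tilde{\gamma_1}_\Phi})=D\bigl(\nu,\,\rho_{{\gamma_1}_\Phi}+\rho_{\tilde{\gamma_1}_\Phi}\bigr)=\tfrac12\int_{\rz^3}\int_{\rz^3}\frac{\nu(\bx)\,\bigl(\rho_{{\gamma_1}_\Phi}+\rho_{\tilde{\gamma_1}_\Phi}\bigr)(\by)}{|\bx-\by|}\,\rd\bx\,\rd\by\geq 0,
\]
the last inequality because the integrand is a product of nonnegative functions. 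Rearranging is exactly \eqref{eq:3.2.3-1}.

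I do not anticipate a genuine obstacle: the content of the lemma is entirely carried by the two positivity statements above, and the rest is bookkeeping. The one point that deserves a word is the finiteness of the Coulomb integrals appearing, which legitimises the bilinear manipulation: ${\gamma_1}_\Phi$ and $\tilde{\gamma_1}_\Phi$ are nonnegative trace-class operators of finite kinetic energy (implicit in the proof of Lemma~\ref{3.2.1}), so their densities lie in $L^1(\rz^3)\cap L^{6/5}(\rz^3)$ and $D(\rho,\rho)<\infty$ by the Hardy--Littlewood--Sobolev inequality.
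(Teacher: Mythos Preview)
Your argument is correct and coincides with the paper's proof: both rest on the pointwise inequality $\tilde A\leq A$, which gives $\rho_{\tilde{\gamma_1}_\Phi}(\bx)=\int_\Gamma\dbar\Omega(\alpha)\,\tilde A(\alpha)\,|\Phi F_\alpha(\bx)|^2\leq\int_\Gamma\dbar\Omega(\alpha)\,A(\alpha)\,|\Phi F_\alpha(\bx)|^2=\rho_{{\gamma_1}_\Phi}(\bx)$, and then on the positivity of the Coulomb kernel. You spell out the operator-level positivity and the finiteness of $D$ more carefully than the paper does, but the content is the same.
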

\begin{proof}
  \begin{multline}
   \label{eq:3.2.3}
   D(\rho_{\tilde{\gamma_1}_\Phi},\rho_{\tilde{\gamma_1}_\Phi})\\=\frac12 \iint \int_\Gamma\dbar
\Omega(\alpha)
  \tilde A(\alpha)|\Phi F_\alpha(x)|^2 |x-y|^{-1} \int_\Gamma\dbar
\Omega(\alpha)
  \tilde A(\alpha)|\Phi F_\alpha(y)|^2 \rd x \rd y \\ \leq \frac12 \iint \int_\Gamma\dbar
\Omega(\alpha)
   A(\alpha)|\Phi F_\alpha(x)|^2 |x-y|^{-1} \int_\Gamma\dbar
\Omega(\alpha)
   A(\alpha)|\Phi F_\alpha(y)|^2 \rd x \rd y\\=D(\rho_{{\gamma_1}_\Phi},\rho_{{\gamma_1}_\Phi}).
  \end{multline}
\end{proof}

\subsubsection{Kinetic Energy of $\gamma_2$\label{sss3.2.4}}

We introduce the set of density matrices
\begin{equation}
  \label{eq:16}
  S_N:=\{\gamma_1\in \gS^1(\gh)\ | E_c(\hat\bp)\gamma_1\in \gS^1(\gh),\ 0\leq\gamma_1\leq1,\ \tr\gamma_1\leq N \}
\end{equation}
where $\gS^1(\gh)$ denotes the trace class operators on $\gh$.
\begin{lemma}
  \label{3.2.4}
  For any $\varepsilon >0$, there exists a large enough $K$
  such that\\ $\tr[(E_c(\hat\bp)-c^2){\gamma_2}_\Phi]\leq \varepsilon$
, i.e.,
  \begin{equation}
    \label{eq:3.2.4-1}
    \tr[(E_c(\hat\bp)-c^2){\gamma_2}_\Phi]\longrightarrow
    0, \ \ K \rightarrow \infty.
  \end{equation}
\end{lemma}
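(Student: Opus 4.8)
The plan is to compute the kinetic energy of $\gamma_2$ explicitly in terms of the dilated functions $\phi_{\tilde k}$ and show that each summand is small once $K$ is large, uniformly in the (fixed) parameters $c$, $N$, $Z$, $\tilde R$. First I would write
\begin{equation*}
\tr[(E_c(\hat\bp)-c^2){\gamma_2}_\Phi]
 = \sum_{\tilde k=K+1}^{K+N-Z}\bigl(\Phi\phi_{\tilde k},(E_c(\hat\bp)-c^2)\Phi\phi_{\tilde k}\bigr)_{\gH}
 = \sum_{\tilde k=K+1}^{K+N-Z}\bigl(\phi_{\tilde k},(E_c(\hat\bp)-c^2)\phi_{\tilde k}\bigr)_{\gh},
\end{equation*}
where the last equality uses that $\Phi$ is unitary from $\gh$ onto $\gH$ and that, under $\Phi$, the one-particle kinetic operator $E_c(\hat\bp)-c^2$ on $\gh$ corresponds to the same multiplier acting on the upper component — this is exactly the content of \eqref{eq:9}--\eqref{eq:10} together with the construction of $\Phi$ in \eqref{eq:7}. (Alternatively one may simply keep $\Phi\phi_{\tilde k}$ and note $\|(E_c(\hat\bp)-c^2)^{1/2}\Phi_j\phi_{\tilde k}\|^2$ is controlled by $\|(E_c(\hat\bp)-c^2)^{1/2}\phi_{\tilde k}\|^2$ since the symbols $(E_c+c^2)/N_c$ and $c|\bp|/N_c$ are bounded by $1$.)

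Next I would pass to Fourier space. By \eqref{eq:11n-1}, $\widehat{\phi_{\tilde k}}(\bxi)=(2^{\tilde k})^{3/2}\,\widehat{\tilde f}(2^{\tilde k}\bxi)$, so
\begin{equation*}
\bigl(\phi_{\tilde k},(E_c(\hat\bp)-c^2)\phi_{\tilde k}\bigr)
 = \int_{\rz^3}(E_c(\bxi)-c^2)\,(2^{\tilde k})^{3}\,|\widehat{\tilde f}(2^{\tilde k}\bxi)|^2\,\rd\bxi
 = \int_{\rz^3}(E_c(2^{-\tilde k}\boldsymbol\eta)-c^2)\,|\widehat{\tilde f}(\boldsymbol\eta)|^2\,\rd\boldsymbol\eta,
\end{equation*}
after the substitution $\boldsymbol\eta=2^{\tilde k}\bxi$. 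Now use the elementary bound $E_c(\bp)-c^2=c^2(\sqrt{1+\bp^2/c^2}-1)\le \bp^2/2$, valid for all $\bp$ and all $c$, to get
\begin{equation*}
\bigl(\phi_{\tilde k},(E_c(\hat\bp)-c^2)\phi_{\tilde k}\bigr)
 \le \tfrac12\,(2^{-\tilde k})^{2}\int_{\rz^3}|\boldsymbol\eta|^2|\widehat{\tilde f}(\boldsymbol\eta)|^2\,\rd\boldsymbol\eta
 = \tfrac12\,4^{-\tilde k}\,\|\nabla\tilde f\|_{L^2}^2,
\end{equation*}
and $\|\nabla\tilde f\|_{L^2}^2<\infty$ because $\tilde f\in H^{3/2}(\rz^3)$ (indeed $\tilde f$ is the explicit compactly supported function in \eqref{eq:11an-1}, depending only on the fixed parameter $\tilde R$). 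Summing the geometric series,
\begin{equation*}
\tr[(E_c(\hat\bp)-c^2){\gamma_2}_\Phi]
 \le \tfrac12\,\|\nabla\tilde f\|_{L^2}^2\sum_{\tilde k=K+1}^{\infty}4^{-\tilde k}
 = \tfrac16\,\|\nabla\tilde f\|_{L^2}^2\,4^{-K}\xrightarrow[K\to\infty]{}0.
\end{equation*}
Thus for any $\varepsilon>0$ it suffices to take $K$ so large that $\tfrac16\|\nabla\tilde f\|_{L^2}^2\,4^{-K}\le\varepsilon$, which proves \eqref{eq:3.2.4-1}.

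The only delicate point is the reduction from $\Phi\phi_{\tilde k}$ back to $\phi_{\tilde k}$, i.e.\ the claim that conjugation by $\Phi$ turns the $4$-spinor kinetic energy $\tr[(E_c(\hat\bp)-c^2)\gamma_\Phi]$ into $\tr[(E_c(\hat\bp)-c^2)\gamma]$ on Pauli spinors; but this is built into the model reduction \eqref{eq:8}--\eqref{eq:10} and is used throughout Section~3, so I would invoke it rather than reprove it. Everything else is a one-line scaling computation, and the crucial structural feature making it work is that the doubling in \eqref{eq:11n-1} forces the momentum support of $\phi_{\tilde k}$ to shrink like $2^{-\tilde k}$, so the (nonrelativistic-dominated) kinetic energy decays geometrically in $\tilde k$ and the finite sum over $\tilde k\in\{K+1,\dots,K+N-Z\}$ is bounded by the tail of a convergent series independently of how many terms it has.
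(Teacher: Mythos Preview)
Your proof is correct and follows essentially the same route as the paper: you bound $E_c(\hat\bp)-c^2\le\tfrac12\hat\bp^2$, use the dilation $\phi_{\tilde k}(\bx)=2^{-3\tilde k/2}\tilde f(2^{-\tilde k}\bx)$ to get $(\phi_{\tilde k},-\tfrac12\Delta\phi_{\tilde k})=\tfrac12\,4^{-\tilde k}\|\nabla\tilde f\|^2$, and sum the geometric tail to obtain $\tfrac16\|\nabla\tilde f\|^2\,4^{-K}$, which is exactly the paper's constant $\tfrac23(\tfrac14)^{K+1}$. The only difference is cosmetic: you spell out the Fourier substitution and the unitary reduction $\gamma_{2,\Phi}\to\gamma_2$ more explicitly than the paper does.
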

\begin{proof}
By concavity we have
\begin{equation}
  \label{eq:19}
  E_c(\p)-c^2\leq \tfrac12\p^2=-\tfrac12 \Delta,
\end{equation}
which implies that the Brown-Ravenhall kinetic energy is bounded by
the non-relativistic one, i.e., for all $\gamma_2 \in S_N$ with
$-\Delta\gamma_2\in \gS^1(\gh)$
\begin{equation}
  \label{eq:20-3.2.4}
  \tr[(E_c(\hat\bp)-c^2)\gamma_2]\leq \tr(-\tfrac12\Delta\gamma_2).
\end{equation}
Then
  \begin{multline}
   \label{eq:3.2.4}
    \tr[(E_c(\hat\bp)-c^2){\gamma_2}_\Phi]\leq \tr(-\tfrac12
    \Delta\gamma_2) =\frac12 \sum\limits_{\tilde{k}=K+1}^{K+N-Z} (2^{\tilde{k}})^{-2} \|\nabla \tilde f\|^2 \\
    \leq \frac12 \sum\limits_{\tilde{k}=K+1}^{\infty} 4^{-\tilde{k}} \|\nabla \tilde f\|^2
    = \frac12 {{(\frac14)^{K+1}}\over{1-\frac14}} \|\nabla \tilde f\|^2 \\
    =\frac23 (\frac14)^{K+1} \|\nabla \tilde f\|^2 \longrightarrow 0, \quad when \quad K \rightarrow \infty. \\
  \end{multline}
    Thus $\tr[(E_c(\hat\bp)-c^2){\gamma_2}_\Phi]\longrightarrow 0$,  when $K \rightarrow \infty$.
\end{proof}

\subsubsection{External Potential of $\gamma_2$\label{sss3.2.5}}

\begin{lemma}
  \label{3.2.5}
  The external potential of $\gamma_2$ does not exceed zero, i.e.,
  \begin{equation}
    \label{eq:3.2.5-1}
    \tr[(- Z/|\bx|){\gamma_2}_\Phi]\leq 0.
  \end{equation}
\end{lemma}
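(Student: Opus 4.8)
The plan is to show that the density $\rho_{{\gamma_2}_\Phi}$ is supported outside a large ball, so that the Coulomb potential $-Z/|\bx|$ integrated against it is manifestly nonpositive. Recall that $\gamma_2 = \sum_{\tilde k = K+1}^{K+N-Z} |\phi_{\tilde k}\rangle\langle\phi_{\tilde k}|$, and that by \eqref{eq:11n-1} each $\phi_{\tilde k}(\bx) = (2^{\tilde k})^{-3/2}\tilde f(\bx/2^{\tilde k})$ with $\tilde f$ given by \eqref{eq:11an-1}. Since $\supp \tilde f \subseteq \{\tilde R \leq |\bx| \leq 2\tilde R\}$, dilation gives $\supp \phi_{\tilde k} \subseteq \{2^{\tilde k}\tilde R \leq |\bx| \leq 2^{\tilde k+1}\tilde R\}$, so in particular $\phi_{\tilde k}$ vanishes on a neighbourhood of the origin for every $\tilde k \geq K+1$. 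Hence $\rho_{\gamma_2}(\bx) = \sum_{\tilde k=K+1}^{K+N-Z}|\phi_{\tilde k}(\bx)|^2$ is supported away from $\bx = 0$.

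The one remaining point is to transfer this support property through the map $\Phi$, i.e.\ to control $\rho_{{\gamma_2}_\Phi}$ rather than $\rho_{\gamma_2}$. Here I would note that $\Phi_1$ and $\Phi_2$ are Fourier multipliers (the operators $(E_c(\p)+c^2)/N_c(\p)$ and $c\p\cdot\boldsymbol\sigma/N_c(\p)$), and crucially that $\Phi$ is an \emph{isometry}: the four-spinor $\Phi\phi_{\tilde k}$ has $|\Phi\phi_{\tilde k}(x)|^2 = |\Phi_1\phi_{\tilde k}(x)|^2 + |\Phi_2\phi_{\tilde k}(x)|^2$ and $\int |\Phi\phi_{\tilde k}(x)|^2\rd x = \|\phi_{\tilde k}\|^2$. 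Thus $\rho_{{\gamma_2}_\Phi}(\bx) = \sum_{\tilde k}|\Phi\phi_{\tilde k}(\bx)|^2 \geq 0$ and $\int_{\rz^3}\rho_{{\gamma_2}_\Phi} = \tr\gamma_2 < \infty$. Since $-Z/|\bx| \leq 0$ pointwise and $\rho_{{\gamma_2}_\Phi} \geq 0$, we immediately get
\[
\tr[(-Z/|\bx|){\gamma_2}_\Phi] = -Z\int_{\rz^3}\frac{\rho_{{\gamma_2}_\Phi}(\bx)}{|\bx|}\rd\bx \leq 0,
\]
which is exactly \eqref{eq:3.2.5-1}.

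In fact, with this observation the claim is essentially immediate: it does not even require the support property of the $\phi_{\tilde k}$, only that $\rho_{{\gamma_2}_\Phi}$ is a nonnegative function (which holds because $\gamma_2 \geq 0$ and $\Phi\gamma_2\Phi^* \geq 0$) and that $-Z/|\bx|$ is a nonpositive multiplication operator. The main thing to be careful about is the interchange $\tr[(-Z/|\bx|){\gamma_2}_\Phi] = -Z\int \rho_{{\gamma_2}_\Phi}(\bx)/|\bx|\,\rd\bx$, i.e.\ that the density of a finite sum of rank-one projections is computed termwise and the integral converges — but since $\tilde f \in H^{3/2}$ is bounded and compactly supported away from $0$, each $|\Phi\phi_{\tilde k}(\bx)|^2/|\bx|$ is integrable, so there is no subtlety. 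The support-away-from-origin remark is still worth recording, since it is what one would need if the potential were not sign-definite, but for this lemma nonnegativity of $\rho_{{\gamma_2}_\Phi}$ alone suffices.
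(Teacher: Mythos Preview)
Your proposal is correct and, once you arrive at the key point, uses essentially the same argument as the paper: one simply writes $\tr[(Z/|\bx|){\gamma_2}_\Phi] = Z\sum_{\tilde k}\int |\Phi\phi_{\tilde k}(\bx)|^2/|\bx|\,\rd\bx \geq 0$, which is exactly what the paper does in a single line. Your initial detour through the support properties of the $\phi_{\tilde k}$ is unnecessary here (as you yourself note), since nonnegativity of the density and nonpositivity of $-Z/|\bx|$ already suffice.
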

\begin{proof}
  \begin{multline}
   \label{eq:3.2.5}
    \tr[(Z/|\bx|){\gamma_2}_\Phi]=Z \sum_{\tilde{k}=K+1}^{K+N-Z}(\Phi\phi_{\tilde{k}}, {1\over{|\bx|}}\Phi\phi_{\tilde{k}}) =Z \sum_{\tilde{k}=K+1}^{K+N-Z}\int|\Phi\phi_{\tilde{k}}(\bx)|^2 {1\over{|\bx|}}\rd\bx\geq
    0.
  \end{multline}
\end{proof}

\subsubsection{The Electron-Electron Interaction of $\gamma_2$\label{sss3.2.6}}

\begin{lemma}
  \label{3.2.6}
  For any $\varepsilon >0$, there exists a large enough $K$
  such that\\ $D(\rho_{{\gamma_2}_\Phi},\rho_{{\gamma_2}_\Phi})\leq \varepsilon$
, i.e.,
  \begin{equation}
    \label{eq:3.2.6-1}
    D(\rho_{{\gamma_2}_\Phi},\rho_{{\gamma_2}_\Phi})\longrightarrow
    0, \ \ K \rightarrow \infty.
  \end{equation}
\end{lemma}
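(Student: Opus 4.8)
The plan is to follow the template of Lemma~\ref{3.2.4}: I would first dominate the Coulomb energy of the compound density $\rho_{{\gamma_2}_\Phi}=\sum_{\tilde k=K+1}^{K+N-Z}|\Phi\phi_{\tilde k}|^2$ by a sum over the one-particle self-energies $D(|\Phi\phi_{\tilde k}|^2,|\Phi\phi_{\tilde k}|^2)$, and then show that each of these decays geometrically in $\tilde k$, so that the resulting geometric series absorbs the number $N-Z$ of terms exactly as $\sum 4^{-\tilde k}$ did there. The point to watch is that $\Phi$ is a Fourier multiplier, so $\Phi\phi_{\tilde k}$ is \emph{not} supported in the shell $2^{\tilde k}\tilde R\le|\bx|\le2^{\tilde k+1}\tilde R$ that contains $\phi_{\tilde k}$, and one therefore cannot simply argue that the supports escape to infinity. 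Instead I would trade the spatial spreading of $\phi_{\tilde k}$ for the decay of a Sobolev seminorm, which $\Phi$ \emph{does} respect.

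First I would record that, since the Coulomb kernel has positive Fourier transform, $D(\cdot,\cdot)$ is a positive definite symmetric bilinear form on $L^{6/5}(\rz^3)$ and $\mu\mapsto D(\mu,\mu)^{1/2}$ is a norm there; as the summands $|\Phi\phi_{\tilde k}|^2$ are nonnegative and lie in $L^{6/5}$ (by the estimate below), the triangle inequality gives
\begin{equation*}
  D(\rho_{{\gamma_2}_\Phi},\rho_{{\gamma_2}_\Phi})^{1/2}\le\sum_{\tilde k=K+1}^{K+N-Z}D\bigl(|\Phi\phi_{\tilde k}|^2,|\Phi\phi_{\tilde k}|^2\bigr)^{1/2}.
\end{equation*}
For a single term the Hardy--Littlewood--Sobolev inequality yields $D(|\Phi\phi_{\tilde k}|^2,|\Phi\phi_{\tilde k}|^2)\le k\,\bigl\|\,|\Phi\phi_{\tilde k}|^2\,\bigr\|_{6/5}^2=k\,\|\Phi\phi_{\tilde k}\|_{12/5}^4$, and the homogeneous Sobolev embedding $\dot H^{1/4}(\rz^3)\hookrightarrow L^{12/5}(\rz^3)$ gives $\|\Phi\phi_{\tilde k}\|_{12/5}\le k\,\|\Phi\phi_{\tilde k}\|_{\dot H^{1/4}}$. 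The structural input is now that $\widehat{\Phi u}(\bxi)=\bigl(\tfrac{E_c(\bxi)+c^2}{N_c(\bxi)}\hat u(\bxi),\ \tfrac{c\,\bxi\cdot\boldsymbol\sigma}{N_c(\bxi)}\hat u(\bxi)\bigr)$, and since $(\bxi\cdot\boldsymbol\sigma)^2=|\bxi|^2$ and $\bigl(\tfrac{E_c(\bxi)+c^2}{N_c(\bxi)}\bigr)^2+\tfrac{c^2|\bxi|^2}{N_c(\bxi)^2}=1$ one gets $|\widehat{\Phi u}(\bxi)|=|\hat u(\bxi)|$ pointwise; hence $\Phi$ preserves every homogeneous Sobolev seminorm, so $\|\Phi\phi_{\tilde k}\|_{\dot H^{1/4}}=\|\phi_{\tilde k}\|_{\dot H^{1/4}}$. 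Finally $\phi_{\tilde k}$ is the $L^2$-normalized dilate of $\tilde f\in H^{3/2}\subset\dot H^{1/4}$ by the factor $2^{\tilde k}$, so by scaling $\|\phi_{\tilde k}\|_{\dot H^{1/4}}=2^{-\tilde k/4}\|\tilde f\|_{\dot H^{1/4}}$. Combining the three estimates,
\begin{equation*}
  D\bigl(|\Phi\phi_{\tilde k}|^2,|\Phi\phi_{\tilde k}|^2\bigr)\le k\,\|\tilde f\|_{\dot H^{1/4}}^4\,2^{-\tilde k}.
\end{equation*}

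Plugging this into the triangle-inequality bound and extending the sum to infinity yields
\begin{equation*}
  D(\rho_{{\gamma_2}_\Phi},\rho_{{\gamma_2}_\Phi})^{1/2}\le k\,\|\tilde f\|_{\dot H^{1/4}}^2\sum_{\tilde k=K+1}^{\infty}2^{-\tilde k/2}=\frac{k\,\|\tilde f\|_{\dot H^{1/4}}^2}{1-2^{-1/2}}\,2^{-(K+1)/2}\longrightarrow 0\quad\text{as }K\to\infty,
\end{equation*}
and squaring finishes the proof; given $\varepsilon>0$ it suffices to pick $K$ so large that the right-hand side is below $\varepsilon^{1/2}$. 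The only step that needs genuine care is the verification of $|\widehat{\Phi u}(\bxi)|=|\hat u(\bxi)|$ (equivalently, that $\Phi$ commutes with $E_c(\p)$ and with every radial Fourier multiplier), which is precisely the property of $\Phi$ already used implicitly in Lemma~\ref{3.2.4}; everything else is a routine application of Hardy--Littlewood--Sobolev, Sobolev embedding and the scaling of $\phi_{\tilde k}$. Thus the main obstacle is conceptual rather than technical: recognizing that the spatial localization of $\phi_{\tilde k}$ is useless once $\Phi$ has been applied and must be replaced by the $\dot H^{1/4}$-scaling, which is what produces the summable geometric decay.
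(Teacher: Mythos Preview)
Your argument is correct, and the key algebraic identity $|\widehat{\Phi u}(\bxi)|=|\hat u(\bxi)|$ is exactly right (since $(E_c+c^2)^2+c^2|\bxi|^2=N_c^2$), so $\Phi$ indeed preserves every homogeneous Sobolev seminorm and the $\dot H^{1/4}$ scaling of $\phi_{\tilde k}$ gives the geometric decay $2^{-\tilde k/2}$ you need.

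The paper takes a different, slightly cruder route. Instead of Hardy--Littlewood--Sobolev and the embedding $\dot H^{1/4}\hookrightarrow L^{12/5}$, it invokes a density-matrix inequality of Barbaroux--Farkas--Helffer--Siedentop,
\[
  D(\rho_{{\gamma_2}_\Phi},\rho_{{\gamma_2}_\Phi})\le\frac{\pi}{4}\,\tr|{\gamma_2}_\Phi|\;\tr\bigl(\sqrt{-\Delta}\,|{\gamma_2}_\Phi|\bigr),
\]
then uses the same Fourier identity to replace $(\Phi\phi_{\tilde k},|\p|\Phi\phi_{\tilde k})$ by $\int|\bxi|\,|\hat\phi_{\tilde k}|^2$, and finally the $\dot H^{1/2}$ scaling $\|\phi_{\tilde k}\|_{\dot H^{1/2}}^2=2^{-\tilde k}\|\tilde f\|_{\dot H^{1/2}}^2$. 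This yields the bound $\tfrac{\pi}{4}(N-Z)\,2^{-K}\|\tilde f\|_{\dot H^{1/2}}^2$, which carries an explicit factor $N-Z$ coming from $\tr\gamma_2$. Your approach avoids that factor entirely because the triangle inequality in the $D$-norm already produces a summable series, so your bound depends only on $\tilde f$ and $K$; in exchange you use two harmonic-analysis inequalities rather than one operator inequality. Either way the mechanism is the same: trade spatial localization (lost under $\Phi$) for Sobolev scaling (preserved by $\Phi$).
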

\begin{proof}
According to \cite[Equation (12)]{Barbaroux2004}, we can get
  \begin{multline}
   \label{eq:3.2.6}
   D(\rho_{{\gamma_2}_\Phi},\rho_{{\gamma_2}_\Phi})\leq \frac\pi4 \tr |{\gamma_2}_\Phi| \tr (\sqrt{-\Delta} |{\gamma_2}_\Phi|) \\
   \leq \frac\pi4(N-Z) \sum_{\tilde{k}=K+1}^{K+N-Z}(\Phi\phi_{\tilde{k}}, |\p|\Phi\phi_{\tilde{k}}) \\
   \leq \frac\pi4(N-Z) \sum_{\tilde{k}=K+1}^{K+N-Z}\int|\hat \phi_{\tilde{k}}(\p)|^2 |\p| \rd\p \\
   \leq \sum_{\tilde{k}=K+1}^{\infty} \frac\pi4(N-Z)\int (2^{\tilde{k}})^{-1} |\hat {\tilde f}(\bxi)|^2 |\bxi| \rd\bxi \\
   = \frac\pi4 (N-Z) (\frac12)^K \int |\hat {\tilde f}(\bxi)|^2 |\bxi|\rd
   \bxi \longrightarrow 0, \quad when \quad K \rightarrow \infty. \\
  \end{multline}
  Thus $D(\rho_{{\gamma_2}_\Phi},\rho_{{\gamma_2}_\Phi})\longrightarrow 0$,  when $K \rightarrow \infty$.
\end{proof}

\subsubsection{The Total Energy of ${\cE}^R_\mathrm{HF}(\gamma)$\label{sss3.2.7}}

We define the reduced Hartree-Fock functional on $\gamma_1$ as
following
\begin{equation}
  \label{eq:17}
  \begin{split}
    {\cE}^R_\mathrm{HF}: S_N&\rightarrow \rz\\
    \gamma_1&\mapsto \tr[(E_c(\hat\bp)-c^2 - Z/|\bx|){\gamma_1}_\Phi]+ D(\rho_{{\gamma_1}_\Phi},\rho_{{\gamma_1}_\Phi})
  \end{split}
\end{equation}
where -- as usual -- $\rho_{\gamma_1}$ is the density associated to
$\gamma_1$ and $D$ is the Coulomb scalar product. Gathering our
above estimates allows us to get Theorem \ref{3.2.7}.
\begin{theorem}
  \label{3.2.7}
  The reduced Hartree-Fock functional of $\gamma$ does not exceed the reduced Hartree-Fock functional of $\gamma_1$, i.e.,
  \begin{equation}
    \label{eq:1-3.2.7}
    {\cE}^R_\mathrm{HF}(\gamma)\leq {\cE}^R_\mathrm{HF}(\gamma_1),
  \end{equation}
  when $\tilde R$ and $K$ are tending to infinity.
\end{theorem}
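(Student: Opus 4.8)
The plan is to expand ${\cE}^R_\mathrm{HF}(\gamma)$ along the decomposition $\gamma=\tilde{\gamma_1}+\gamma_2+\tilde{\gamma_2}$, bound the $\tilde{\gamma_1}$-contribution by the corresponding $\gamma_1$-contribution by means of Lemmas~\ref{3.2.1}, \ref{3.2.2} and \ref{3.2.3}, and show that everything else is negligible as $\tilde R,K\to\infty$. Since $\Phi$ and the density map $\eta\mapsto\rho_\eta$ are linear, $\gamma_\Phi=\tilde{\gamma_1}_\Phi+{\gamma_2}_\Phi+\tilde{\gamma_2}_\Phi$ and $\rho_{\gamma_\Phi}=\rho_{\tilde{\gamma_1}_\Phi}+\rho_{{\gamma_2}_\Phi}+\rho_{\tilde{\gamma_2}_\Phi}$, the three summand densities being non-negative. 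Using linearity of the trace and bilinearity of the Coulomb product $D$, ${\cE}^R_\mathrm{HF}(\gamma)$ decomposes into three ``diagonal'' blocks, obtained by evaluating $\eta\mapsto\tr[(E_c(\hat\bp)-c^2-Z/|\bx|)\eta_\Phi]+D(\rho_{\eta_\Phi},\rho_{\eta_\Phi})$ at $\eta=\tilde{\gamma_1}$, $\eta=\gamma_2$ and $\eta=\tilde{\gamma_2}$, plus three Coulomb cross terms $2D(\rho_{\tilde{\gamma_1}_\Phi},\rho_{{\gamma_2}_\Phi})$, $2D(\rho_{\tilde{\gamma_1}_\Phi},\rho_{\tilde{\gamma_2}_\Phi})$ and $2D(\rho_{{\gamma_2}_\Phi},\rho_{\tilde{\gamma_2}_\Phi})$. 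We may assume $D(\rho_{{\gamma_1}_\Phi},\rho_{{\gamma_1}_\Phi})<\infty$, since otherwise ${\cE}^R_\mathrm{HF}(\gamma_1)=+\infty$ and there is nothing to prove.

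For the $\tilde{\gamma_1}$-block I would apply Lemma~\ref{3.2.1} to the kinetic part, Lemma~\ref{3.2.3} to the Coulomb part, and Lemma~\ref{3.2.2} to the external-potential part (which costs at most an arbitrary $\varepsilon>0$ once $\tilde R$ is large enough), so that this block is at most ${\cE}^R_\mathrm{HF}(\gamma_1)+\varepsilon$; note that Lemma~\ref{3.2.3} moreover bounds $D(\rho_{\tilde{\gamma_1}_\Phi},\rho_{\tilde{\gamma_1}_\Phi})$ by $D(\rho_{{\gamma_1}_\Phi},\rho_{{\gamma_1}_\Phi})$ uniformly in $\tilde R$, which I shall need below. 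For the $\gamma_2$-block, Lemmas~\ref{3.2.4} and \ref{3.2.6} make the kinetic and Coulomb parts smaller than $\varepsilon$ once $K$ is large enough, while Lemma~\ref{3.2.5} shows the external-potential part to be non-positive, so this block is $\leq o(1)$ as $K\to\infty$.

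The genuinely new piece is the $\tilde{\gamma_2}$-block. The key point is that $\epsilon_{\tilde R}=1-\tfrac1Z\int_\Gamma\dbar\Omega(\alpha)\,\tilde A(\alpha)\to 0$ as $\tilde R\to\infty$: indeed $A\in L^1(\Gamma,\dbar\Omega)$ with $\int_\Gamma\dbar\Omega(\alpha)\,A(\alpha)=\tr\gamma_1=\int_{\rz^3}\rho_\mathrm{Z}=Z$ (the exact semiclassical value, with $\rho_\mathrm{Z}$ the neutral Thomas--Fermi minimizer, which is the relevant one because $N>Z$), and $\tilde A(\alpha)\uparrow A(\alpha)$ pointwise as $\tilde R\to\infty$, so dominated convergence gives $\int_\Gamma\dbar\Omega(\alpha)\,\tilde A(\alpha)\to Z$. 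Since $\tilde{\gamma_2}=\epsilon_{\tilde R}\sum_{\tilde k=K+1}^{K+Z}|\phi_{N-Z+\tilde k}\rangle\langle\phi_{N-Z+\tilde k}|$, I would rerun the estimates of Lemmas~\ref{3.2.4}, \ref{3.2.5} and \ref{3.2.6} almost verbatim, the only change being an overall factor $\epsilon_{\tilde R}$ in the kinetic and external-potential parts and $\epsilon_{\tilde R}^2$ in the Coulomb part (plus a harmless shift of the summation index); since $\epsilon_{\tilde R}\to 0$ and the remaining factors stay bounded as $\tilde R,K\to\infty$, the kinetic and Coulomb parts tend to $0$ and the external-potential part is $\leq 0$. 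In particular $D(\rho_{\tilde{\gamma_2}_\Phi},\rho_{\tilde{\gamma_2}_\Phi})\to 0$.

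It remains to handle the Coulomb cross terms. Each is non-negative (all densities are non-negative) and, by the Cauchy--Schwarz inequality for the positive-definite form $D$, is bounded by the product of a factor that stays bounded (either $D(\rho_{{\gamma_1}_\Phi},\rho_{{\gamma_1}_\Phi})^{1/2}$, or the uniform bound on $D(\rho_{\tilde{\gamma_1}_\Phi},\rho_{\tilde{\gamma_1}_\Phi})^{1/2}$ from the second paragraph) and a factor tending to $0$ (either $D(\rho_{{\gamma_2}_\Phi},\rho_{{\gamma_2}_\Phi})^{1/2}\to 0$ by Lemma~\ref{3.2.6}, or $D(\rho_{\tilde{\gamma_2}_\Phi},\rho_{\tilde{\gamma_2}_\Phi})^{1/2}\to 0$ from the previous paragraph); hence all three cross terms vanish as $\tilde R,K\to\infty$. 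Adding the blocks then yields ${\cE}^R_\mathrm{HF}(\gamma)\leq {\cE}^R_\mathrm{HF}(\gamma_1)+\varepsilon+o(1)$ as $\tilde R,K\to\infty$, and sending $\varepsilon\downarrow 0$ gives the claim. The only real work is the $\tilde{\gamma_2}$-block and the cross-term bookkeeping: one must verify $\epsilon_{\tilde R}\to 0$ (this is where the identity $\int_{\rz^3}\rho_\mathrm{Z}=Z$, valid precisely because $N>Z$, is used) and check that the bounds are uniform enough for the joint limit $\tilde R,K\to\infty$ to be unambiguous --- which they are, the $K$-dependent bounds being uniform in $\tilde R$ and the $\tilde R$-dependent bounds uniform in $K$.
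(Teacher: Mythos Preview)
Your proposal is correct and follows essentially the same route as the paper's own proof: decompose along $\gamma=\tilde{\gamma_1}+\gamma_2+\tilde{\gamma_2}$, invoke Lemmata~\ref{3.2.1}--\ref{3.2.6} for the first two blocks, and use $\epsilon_{\tilde R}\to 0$ (which the paper draws from Appendix~\ref{sa2}) to kill the $\tilde{\gamma_2}$-block. The paper's argument is terser---it simply asserts that ${\cE}^R_\mathrm{HF}(\tilde{\gamma_2})\to 0$ and that Lemmata~\ref{3.2.1}--\ref{3.2.6} yield the conclusion---whereas you explicitly treat the three Coulomb cross terms via the Cauchy--Schwarz inequality for $D$, which is the right bookkeeping and is left implicit in the paper.
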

\begin{proof}
According to the definition of $\tilde{\gamma_2}$ and Appendix
\ref{sa2}, we get that as $\tilde R$ tends to infinity,
$\epsilon_{\tilde R}$ tends to zero, and thus
${\cE}^R_\mathrm{HF}(\tilde{\gamma_2})$ tends to zero. Using the
results obtained in Lemmata \ref{3.2.1} through \ref{3.2.6}, we get
for any $\varepsilon >0$, there exist large enough $\tilde R$ and
$K$
  such that
  \begin{multline}
    \label{eq:3.2.7}
{\cE}^R_\mathrm{HF}(\gamma)=\tr[(E_c(\hat\bp)-c^2 -
Z/|\bx|)\gamma_\Phi]+ D(\rho_{\gamma_\Phi},\rho_{\gamma_\Phi})
\\ \leq \tr[(E_c(\hat\bp)-c^2 - Z/|\bx|){\gamma_1}_\Phi]+
D(\rho_{{\gamma_1}_\Phi},\rho_{{\gamma_1}_\Phi})+\varepsilon={\cE}^R_\mathrm{HF}(\gamma_1)+\varepsilon.
  \end{multline}
\end{proof}

\subsection{Upper Bound\label{ss3.3}}

We begin by noting that the Hartree-Fock functional -- with or
without exchange energy -- bounds $E(c,N,Z)$ from above. To be exact
we introduce the set of density matrices \eqref{eq:16}, where
$\gS^1(\gh)$ denotes the trace class operators on $\gh$. We define
the reduced Hartree-Fock functional of $\gamma_1$ as \eqref{eq:17}.
By Theorem \ref{3.2.7} and the analogon of Lieb's result
\cite{Lieb1981V,Lieb1981E} (see also Bach \cite{Bach1992}) -- which
trivially transcribes from the Schr\"odinger setting to the present
one -- we have for all $\gamma_1\in S_N$
\begin{equation}
  \label{eq:18}
  E(c,N,Z)\leq {\cE}^R_\mathrm{HF}(\gamma)\leq {\cE}^R_\mathrm{HF}(\gamma_1).
\end{equation}

\subsubsection{Kinetic Energy\label{sss3.3.1}}

\eqref{eq:19} implies that the Brown-Ravenhall kinetic energy is
bounded by the non-relativistic one, i.e., for all $\gamma_1 \in
S_N$ with $-\Delta\gamma_1\in \gS^1(\gh)$
\begin{equation}
  \label{eq:20}
  \tr[(E_c(\hat\bp)-c^2)\gamma_1]\leq \tr(-\tfrac12\Delta\gamma_1).
\end{equation}
Insertion of $\gamma_1$ (see Equations \eqref{eq:11n},
\eqref{eq:11an}, \eqref{eq:11bn}, and \eqref{eq:14an}) turns the
right hand side into the Thomas-Fermi kinetic energy modulo the
positive error $(\tr\gamma_1) \|\nabla g\|^2 R^{-2}$ (see Lieb
\cite[Formula (5.9)]{Lieb1981}).

In fact, we choose $$f(\bx)=(2\pi R)^{-1/2}|\bx|^{-1} {\sin
(\pi|\bx|/R)}$$ (see Lieb \cite[Formula (5.11)]{Lieb1981}). Because
of $f(\bx)=R^{-3/2}g(R^{-1}\bx)$, we know
$$g(\bx)=(2\pi)^{-1/2}|\bx|^{-1} \sin (\pi|\bx|).$$ Let
$\boldsymbol{\eta}=R^{-1}\bx,$ we can calculate
$$\nabla f=R^{-3/2}\nabla g(R^{-1}\bx)=R^{-3/2}\nabla_{\boldsymbol{\eta}} {g(\boldsymbol{\eta})}{{\partial\boldsymbol{\eta}}\over{\partial\bx}}=R^{-3/2}\nabla_{\boldsymbol{\eta}} {g(\boldsymbol{\eta})} {1\over R} {{\bx}\over{|\bx|}},$$
$$|\nabla f|=R^{-5/2}|\nabla_{\boldsymbol{\eta}} {g(\boldsymbol{\eta})}|,\  \int |\nabla f|^2\rd^3\bx=\int R^{-5}|\nabla_{\boldsymbol{\eta}} {g(\boldsymbol{\eta})}|^2(R^3\rd^3\boldsymbol{\eta})=R^{-2}\|\nabla g\|^2.$$
Thus we obtain
\begin{equation}
  \label{eq:21}
  \tr[(E_c(\hat\bp)-c^2)\gamma_1]
  \leq \frac35\gamma_\mathrm{TF}\int \rho_\mathrm{Z}^{5/3}(\bx)\rd\bx
  + ZR^{-2}\|\nabla g\|^2.
\end{equation}

\subsubsection{External Potential\label{sss3.3.2}}

Since $-Z\tr(\varphi_2\gamma_1)$ is negative, we can and will
estimate this term by zero. This estimate will be good, if this term
is of smaller order. Although, logically unnecessary for the upper
bound, it is interesting to see that $\varphi_2$ does indeed not
significantly contribute to the energy, if $\gamma_1$ is chosen as
above. Moreover, the proof will be also useful for the proof of
Lemma \ref{l2}.
\begin{lemma}
  \label{l1}
  For our choice of
  $\gamma_1=\int_\Gamma\dbar\Omega(\alpha)A(\alpha)|F_\alpha\rangle\langle
  F_\alpha|$ and $\delta\in(1/3,2/3)$ we have
  \begin{equation}
    \label{eq:22}
    0\leq Z\tr(\varphi_2\gamma_1)
    \leq k Z\int_\Gamma\dbar\Omega(\alpha)A(\alpha)
    \iint\rd \bxi \rd \bxi'{c^2|\bxi||\bxi'||\hat F_\alpha(\bxi)||\hat F_\alpha(\bxi')| \over |\bxi-\bxi'|^2 N_c(\bxi)N_c(\bxi')}
    = O(Z^{4/3+\delta}).
  \end{equation}
\end{lemma}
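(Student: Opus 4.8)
The plan is to obtain the displayed chain in three moves: an algebraic reduction to the middle expression, an estimate of the inner double $\bxi$-integral for a single coherent state, and the phase-space integration using the Thomas--Fermi scaling.

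First I would dispose of the two inequalities. Since $\varphi_2=\Phi_2^*|\cdot|^{-1}\Phi_2\geq0$ and $A\geq0$, one has $Z\tr(\varphi_2\gamma_1)=Z\int_\Gamma\dbar\Omega(\alpha)A(\alpha)\int_{\rz^3}|\bx|^{-1}|(\Phi_2F_\alpha)(\bx)|^2\,\rd\bx\geq0$, which is the left inequality. For the right one I pass to momentum space: $\Phi_2$ is the matrix multiplier $c(\bxi\cdot\boldsymbol\sigma)/N_c(\bxi)$, and $|\cdot|^{-1}$ has, between Fourier transforms, the integral kernel $4\pi|\bxi-\bxi'|^{-2}$ (in a suitable normalisation), so
$$\langle F_\alpha,\varphi_2F_\alpha\rangle=k\iint\rd\bxi\,\rd\bxi'\,\frac{c^2\,\langle(\bxi\cdot\boldsymbol\sigma)\hat F_\alpha(\bxi),(\bxi'\cdot\boldsymbol\sigma)\hat F_\alpha(\bxi')\rangle_{\cz^2}}{N_c(\bxi)N_c(\bxi')|\bxi-\bxi'|^2}.$$
Using the Cauchy--Schwarz inequality in $\cz^2$ together with $\|\bxi\cdot\boldsymbol\sigma\|_{\cz^2\to\cz^2}=|\bxi|$, the integrand is bounded by $c^2|\bxi||\bxi'||\hat F_\alpha(\bxi)||\hat F_\alpha(\bxi')|/(N_c(\bxi)N_c(\bxi')|\bxi-\bxi'|^2)$; multiplying by $Z$ and integrating against $\dbar\Omega(\alpha)A(\alpha)$ yields exactly the middle term.

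Next is the single-state estimate. From $E_c(\bxi)\geq c^2$ one has $N_c(\bxi)^2=2E_c(\bxi)(E_c(\bxi)+c^2)\geq4c^4$, hence $c^2|\bxi||\bxi'|/(N_c(\bxi)N_c(\bxi'))\leq|\bxi||\bxi'|/(4c^2)$, and since $c=Z/\kappa$ the prefactor $Z\cdot c^{-2}$ equals $k/Z$. Writing $|\hat F_\alpha(\bxi)|=|\widehat{g_R}(\bxi-\bp)|$ with $\widehat{g_R}(\bxi)=R^{3/2}\hat g(R\bxi)$, and $Q(h):=\iint\rd\bxi\,\rd\bxi'\,h(\bxi)h(\bxi')|\bxi-\bxi'|^{-2}$, it remains to bound, for $(\bp,\bq,\tau)\in\supp A$,
$$I(\alpha):=Q\bigl(|\bxi||\widehat{g_R}(\bxi-\bp)|\bigr)=Q(g_1+g_2),\qquad g_1:=|\bxi||\widehat{g_R}(\bxi-\bp)|\,\chi_{\{|\bxi-\bp|\leq|\bp|\}},\ \ g_2:=|\bxi||\widehat{g_R}(\bxi-\bp)|\,\chi_{\{|\bxi-\bp|>|\bp|\}}.$$
Since $|\bxi-\bxi'|^{-2}$ is a positive multiple of the Fourier transform of $|\bx|^{-1}$, the form $Q$ is positive semidefinite, so $I(\alpha)\leq2Q(g_1)+2Q(g_2)$. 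On $\{|\bxi-\bp|\leq|\bp|\}$ one has $|\bxi|\leq2|\bp|$, so $g_1\leq2|\bp||\widehat{g_R}(\cdot-\bp)|$ and the Hardy--Littlewood--Sobolev inequality gives $Q(g_1)\leq k|\bp|^2\|\widehat{g_R}\|_{L^{3/2}}^2=k|\bp|^2R^{-1}\|\hat g\|_{L^{3/2}}^2$; on $\{|\bxi-\bp|>|\bp|\}$ one has $|\bxi|\leq2|\bxi-\bp|$, so $g_2\leq2|\bxi-\bp||\widehat{g_R}(\bxi-\bp)|$ and Kato's inequality in the form $Q(h)\leq k\int|\bxi||h(\bxi)|^2\,\rd\bxi$ gives $Q(g_2)\leq k\int|\bxi-\bp|^3|\widehat{g_R}(\bxi-\bp)|^2\,\rd\bxi=kR^{-3}\int|\bxi|^3|\hat g(\bxi)|^2\,\rd\bxi$, which is finite because $g\in H^{3/2}$. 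Hence $I(\alpha)\leq k(|\bp|^2R^{-1}+R^{-3})$. Finally, integrating over $\Gamma$,
$$Z\tr(\varphi_2\gamma_1)\leq\frac kZ\int_\Gamma\dbar\Omega(\alpha)A(\alpha)\,I(\alpha)\leq\frac kZ\Bigl(R^{-1}\int_\Gamma\dbar\Omega(\alpha)A(\alpha)|\bp|^2+R^{-3}\int_\Gamma\dbar\Omega(\alpha)A(\alpha)\Bigr),$$
and I evaluate the two integrals directly: $\int_\Gamma\dbar\Omega(\alpha)A(\alpha)|\bp|^2=k\int_{\rz^3}[V_Z(\bq)]_+^{5/2}\rd\bq=kZ^{7/3}$ by the scaling $V^{(N)}_Z(\bq)=Z^{4/3}V^{(N/Z)}_1(Z^{1/3}\bq)$ (it is the Thomas--Fermi kinetic energy of the trial state), while $\int_\Gamma\dbar\Omega(\alpha)A(\alpha)=k\int_{\rz^3}[V_Z(\bq)]_+^{3/2}\rd\bq=kZ$. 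Thus $Z\tr(\varphi_2\gamma_1)\leq k(Z^{4/3}R^{-1}+R^{-3})=k(Z^{4/3+\delta}+Z^{3\delta})=O(Z^{4/3+\delta})$, using $3\delta<4/3+\delta$, which holds since $\delta<2/3$ (only $\delta<2/3$ is needed here).

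The step I expect to be the main obstacle is the single-state estimate, and specifically getting $I(\alpha)\lesssim|\bp|^2R^{-1}$ rather than $|\bp|^3$. A direct application of Kato's inequality to the full weight $|\bxi||\hat F_\alpha(\bxi)|$ would only give $Q\lesssim\int|\bxi|^3|\hat F_\alpha(\bxi)|^2\rd\bxi\gtrsim|\bp|^3$, but $\int_\Gamma\dbar\Omega(\alpha)A(\alpha)|\bp|^3=k\int V_Z^3$ \emph{diverges}, since $V_Z$ behaves like $Z/|\bq|$ near the nucleus. The resolution is that $|\hat F_\alpha(\bxi)|$ is concentrated at $\bxi=\bp$ on the scale $R^{-1}\ll|\bp|$, so the ``location'' factor $|\bp|$ must be pulled out before the remaining convolution is estimated; this is exactly what the Hardy--Littlewood--Sobolev bound on $g_1$ accomplishes, while the genuinely singular far part $g_2$ is harmless and contributes only the lower-order term $Z^{3\delta}$.
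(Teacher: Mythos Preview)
Your argument is correct, but it is more elaborate than the paper's. After the common reduction to
\[
(F_\alpha,\varphi_2F_\alpha)\leq k\,c^{-2}\iint\frac{|\bxi||\bxi'||\hat F_\alpha(\bxi)||\hat F_\alpha(\bxi')|}{|\bxi-\bxi'|^2}\,\rd\bxi\,\rd\bxi',
\]
the paper simply substitutes $\bxi\mapsto\bxi/R+\bp$ (so that $|\hat F_\alpha|$ becomes $R^{3/2}|\hat g|$ and $|\bxi|$ becomes $R^{-1}|\bxi+R\bp|$), expands $|\bxi+R\bp||\bxi'+R\bp|\leq(|\bxi|+R|\bp|)(|\bxi'|+R|\bp|)$, and observes that
\[
\frac{|\hat g(\bxi)||\hat g(\bxi')|}{|\bxi-\bxi'|^2}\bigl[|\bxi||\bxi'|+|\bxi|+|\bxi'|+1\bigr]
\]
is integrable because $g\in H^{3/2}(\rz^3)$. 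This gives $(F_\alpha,\varphi_2F_\alpha)\leq kc^{-2}R^{-3}(1+R|\bp|+R^2|\bp|^2)$ in one stroke, and the phase-space integration then produces the three contributions $Z^{3\delta}$, $Z^{2/3+2\delta}$, $Z^{4/3+\delta}$.

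Your near/far splitting with Hardy--Littlewood--Sobolev on $g_1$ and Kato on $g_2$ is a valid alternative and yields the same leading term $Z^{4/3+\delta}$ (your intermediate term is absorbed). The obstacle you flag is genuine --- applying Kato directly to $|\bxi||\hat F_\alpha(\bxi)|$ would produce $\int|\bp|^3A$, which diverges because $\int V_Z^3=\infty$ --- but the paper circumvents it more cheaply: the change of variables pulls the dangerous $|\bp|$ out of the convolution \emph{before} any operator inequality is applied, so only Kato-type bounds on the fixed function $\hat g$ are needed and no splitting is required. What your route buys is that it stays in the original coordinates and makes the role of the localisation scale $R^{-1}\ll|\bp|$ explicit; what the paper's route buys is brevity and the avoidance of HLS (in particular, you implicitly use $\hat g\in L^{3/2}$, which does follow from $g\in H^{3/2}$ with compact support, but is an extra ingredient the paper does not need).
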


\begin{proof}
  We begin by estimating the expectation of $\varphi_2$ on the coherent
  state \eqref{eq:11n}.
  \begin{multline}
    \label{eq:23}
    0 \leq (F_\alpha, \varphi_2 F_\alpha) \leq k \iint \rd \bxi \rd
    \bxi' {c^2|\bxi||\bxi'| |\hat F_\alpha(\bxi)| |\hat F_\alpha(\bxi')| \over
      N_c(\bxi)|\bxi-\bxi'|^2
      N_c(\bxi')} \\
    \leq k c^{-2}R^{-3} \iint \rd\bxi\rd\bxi'{|\hat g(\bxi)||\hat
      g(\bxi')|\over |\bxi-\bxi'|^2}|\bxi+R\bp||\bxi'+R\bp| \leq k
    c^{-2}R^{-3} (1+ R|\bp|+R^2|\bp|^2).
  \end{multline}
  Here, we have used that $N_c(\bxi)\geq\sqrt2c^2$ and, at the last step,
  that
  $$
  {|\hat g(\bxi)||\hat g(\bxi')|\over |\bxi-\bxi'|^2}[|\bxi||\bxi'| +
  |\bxi|+|\bxi'| +1]
  $$
  is integrable in $\bxi$ and $\bxi'$ because $g\in H^{3/2}(\rz^3)$. Thus according to \eqref{eq:23}, we get
  \begin{multline}
    \label{eq:24}
    0\leq Z\tr(\varphi_2\gamma_1)=Z\int \dbar\Omega(\alpha) A(\alpha)(F_\alpha, \varphi_2 F_\alpha)\\
    \leq k{Z\over c^2R^3} \int \dbar\Omega(\alpha) A(\alpha)(1+ R|\bp|+R^2|\bp|^2)\\
    \leq  k{Z\over c^2R^3}\left\{Z+ R\int\rd \bq \left[Z^{4/3} V_1(Z^{1/3}\bq)\right]^2+R^2\int\rd \bq \left[Z^{4/3} V_1(Z^{1/3}\bq)\right]^{5/2}\right\}\\
    = O(Z^{3\delta} + Z^{2/3+2\delta} + Z^{4/3+\delta})
  \end{multline}
  (see \cite[Formula (27)]{CassanasSiedentop2006}).
\end{proof}
\begin{lemma}
  \label{l2}
  For our choice of $\gamma_1$ and $\delta\in(1/3,2/3)$ we have
  \begin{multline}
    \label{eq:25}
    \left|Z\tr[(|\cdot|^{-1}-\varphi_1)\gamma_1]\right| \\
    \leq k Z\int\dbar\Omega(\alpha)A(\alpha) \iint\frac{\rd\boldsymbol{\xi}\rd\boldsymbol{\xi}'}{|\boldsymbol{\xi}-\boldsymbol{\xi}'|^2}
    \left(1-{(E_c(\boldsymbol{\xi})+c^2)(E_c(\boldsymbol{\xi}')+c^2) \over
      N_c(\boldsymbol{\xi})N_c(\boldsymbol{\xi}')}\right) |\hat F_\alpha(\boldsymbol{\xi})||\hat F_\alpha(\boldsymbol{\xi}')| \\
    = O(Z^{5/3+\delta}).
  \end{multline}
\end{lemma}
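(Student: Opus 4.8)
The plan is to repeat, for $\varphi_1$ in place of $\varphi_2$, the two-step pattern of the proof of Lemma \ref{l1}: first estimate the expectation of $|\cdot|^{-1}-\varphi_1$ on a single coherent state $F_\alpha$, then integrate that estimate against $A(\alpha)$ over $\Gamma$, and finally substitute $c=Z/\kappa$ and $R=Z^{-\delta}$. For the first step I would write $\Phi_1=a(\p)$ with the scalar Fourier multiplier $a(\bxi):=(E_c(\bxi)+c^2)/N_c(\bxi)$; then, in the momentum representation, $|\cdot|^{-1}-\varphi_1=|\cdot|^{-1}-a(\p)\,|\cdot|^{-1}\,a(\p)$ is an integral operator with kernel proportional to $\bigl(1-a(\bxi)a(\bxi')\bigr)|\bxi-\bxi'|^{-2}$, which is pointwise non-negative because $0\le a\le 1$. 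Hence, by the triangle inequality applied under the double integral,
\[
\left|(u,(|\cdot|^{-1}-\varphi_1)u)\right|\le k\iint\frac{\rd\bxi\,\rd\bxi'}{|\bxi-\bxi'|^2}\bigl(1-a(\bxi)a(\bxi')\bigr)\,|\hat u(\bxi)|\,|\hat u(\bxi')|
\]
for every $u\in\gh$. Taking $u=F_\alpha$, recalling $\gamma_1=\int_\Gamma\dbar\Omega(\alpha)A(\alpha)|F_\alpha\rangle\langle F_\alpha|$, and using linearity of the trace together with $A\ge0$ yields exactly the first inequality of the lemma. (The absolute value is genuinely needed: $\varphi_1\le|\cdot|^{-1}$ fails in general because $a(\p)$ and $|\cdot|^{-1}$ do not commute.)

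For the single-state bound I would use the elementary inequality
\[
1-a(\bxi)\le 1-a(\bxi)^2=\frac{c^2\bxi^2}{N_c(\bxi)^2}\le\frac{|\bxi|}{2c},
\]
the middle equality coming from $a(\bxi)^2=(E_c(\bxi)+c^2)/(2E_c(\bxi))$ and $N_c(\bxi)^2=2E_c(\bxi)(E_c(\bxi)+c^2)$, the last step from $N_c(\bxi)^2\ge 2E_c(\bxi)c^2\ge 2c^3|\bxi|$; consequently $1-a(\bxi)a(\bxi')\le(1-a(\bxi))+(1-a(\bxi'))\le(|\bxi|+|\bxi'|)/2c$. Inserting $F_\alpha=\varphi_{\bp,\bq}\otimes e_\tau$ with $|\hat F_\alpha(\bxi)|=R^{3/2}|\hat g(R(\bxi-\bp))|$ and passing from $\bxi$ to $\bxi/R+\bp$ (as in \eqref{eq:3.2.2}), the Jacobians and the factor $|\bxi-\bxi'|^{-2}$ combine to an overall $R^{-1}$, and after $|\bp+\bxi/R|\le|\bp|+|\bxi|/R$ only the two $R$-independent integrals $\iint|\hat g(\bxi)||\hat g(\bxi')||\bxi-\bxi'|^{-2}\rd\bxi\rd\bxi'$ and $\iint|\bxi|\,|\hat g(\bxi)||\hat g(\bxi')||\bxi-\bxi'|^{-2}\rd\bxi\rd\bxi'$ remain. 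Both are finite: by Kato's inequality $|\cdot|^{-1}\le\tfrac\pi2|\p|$ they are bounded by $k\|g\|_{H^{1/2}}^2$ and $k\|g\|_{H^{1/2}}\|g\|_{H^{3/2}}$, which are finite since $g\in H^{3/2}$. This gives
\[
\left|(F_\alpha,(|\cdot|^{-1}-\varphi_1)F_\alpha)\right|\le\frac{k}{c}\bigl(R^{-1}|\bp|+R^{-2}\bigr).
\]

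Integrating against $A(\alpha)$ and carrying out the $\bp$-integral over $\{\bp^2/2\le V_Z(\bq)\}$: since $V_Z\ge0$ is, for $N>Z$, the $N$-independent neutral Thomas-Fermi potential $\gamma_\mathrm{TF}\rho_\mathrm{Z}^{2/3}$, one finds $\int_\Gamma\dbar\Omega(\alpha)A(\alpha)=\int_{\rz^3}\rho_\mathrm{Z}=Z$ and $\int_\Gamma\dbar\Omega(\alpha)A(\alpha)|\bp|=\pi^{-2}\int_{\rz^3}V_Z^2=O(Z^{5/3})$ by the scaling \eqref{eq:13-2n} (the rescaled constant being finite because $V_Z$ behaves like $|\bx|^{-1}$ at the origin and decays at infinity). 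Therefore
\[
\left|Z\tr[(|\cdot|^{-1}-\varphi_1)\gamma_1]\right|\le\frac{kZ}{c}\Bigl(R^{-1}\!\int_\Gamma\dbar\Omega(\alpha)A(\alpha)|\bp|+R^{-2}\!\int_\Gamma\dbar\Omega(\alpha)A(\alpha)\Bigr),
\]
which, upon inserting $c=Z/\kappa$ and $R=Z^{-\delta}$, is $k\kappa\bigl(Z^{\delta}O(Z^{5/3})+Z^{1+2\delta}\bigr)=O(Z^{5/3+\delta})$, using $1+2\delta\le 5/3+\delta$ for $\delta\le 2/3$.

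The main delicate point is the choice of the bound on $1-a(\bxi)a(\bxi')$. The sharper estimate $1-a(\bxi)\le\bxi^2/(4E_c(\bxi))$ — of order $\bxi^2/c^2$ in the relevant range $|\bxi|\ll c$ — would lead to a smaller power of $Z$ (roughly $Z^{4/3-\delta}$), but after the dilation substitution it produces the quantity $\iint|\bxi|^2|\hat g(\bxi)||\hat g(\bxi')||\bxi-\bxi'|^{-2}$, which is controlled only by $\|g\|_{H^{5/2}}$ and therefore unavailable for $g$ of the regularity at hand. Settling for the weaker linear bound $|\bxi|/2c$ costs one power $Z^{1/3}$ in the final exponent but keeps every auxiliary integral within $H^{3/2}$, and $O(Z^{5/3+\delta})$ is still $o(Z^{7/3})$ for $\delta<2/3$, which is all that Theorem \ref{t:haupt} needs. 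The only other thing to check — that the rescaled $\int V_Z^2$ is finite and that the Thomas-Fermi scaling is applied correctly — is routine.
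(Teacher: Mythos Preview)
Your proof is correct and follows the same two-step strategy as the paper: bound the Fourier-space kernel $1-a(\bxi)a(\bxi')$ pointwise, change variables $\bxi\mapsto\bxi/R+\bp$, and integrate the resulting single-state estimate against $A(\alpha)$ over $\Gamma$. The only substantive difference is the pointwise kernel bound itself. The paper uses the algebraic manipulation \eqref{eq:26a}--\eqref{eq26b} together with $E_c(\bxi)-c^2\le c|\bxi|$ to obtain
\[
1-a(\bxi)a(\bxi')\le \frac{3|\bxi||\bxi'|}{2c^{2}}+\frac{|\bxi|+|\bxi'|}{c},
\]
whereas your route via $1-ab\le(1-a)+(1-b)$ and $1-a\le1-a^{2}=c^{2}\bxi^{2}/N_c(\bxi)^{2}\le|\bxi|/(2c)$ gives only the linear term $(|\bxi|+|\bxi'|)/(2c)$. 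Your bound is therefore strictly sharper and shorter: the quadratic term $c^{-2}|\bxi||\bxi'|$ in the paper's estimate is exactly what produces the extra contributions $Z^{3\delta}$, $Z^{2/3+2\delta}$, $Z^{4/3+\delta}$ in \eqref{eq:27}, all of which you sidestep. In both arguments the dominant contribution $Z^{5/3+\delta}$ comes from the $c^{-1}(|\bxi|+|\bxi'|)$ part, so the final order agrees. Your closing remark about the $|\bxi|^{2}/c^{2}$ bound being unavailable at $H^{3/2}$ regularity is also correct and explains why neither proof can do better here.
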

\begin{proof}
  We first note that
  \begin{multline}
    \label{eq:26a}
    \left|1-{(E_c(\boldsymbol{\xi})+c^2)(E_c(\boldsymbol{\xi}')+c^2)
        \over N_c(\boldsymbol{\xi})N_c(\boldsymbol{\xi}')}\right| \\
    \leq
    {\left|3E_c(\boldsymbol{\xi})E_c(\boldsymbol{\xi}')-c^2(E_c(\boldsymbol{\xi})+E_c(\boldsymbol{\xi}')+c^2)\right|
      \over N_c(\boldsymbol{\xi})N_c(\boldsymbol{\xi}')}.
  \end{multline}
  Then, noting that $E_c(\boldsymbol{\xi})-c^2\leq
  c|\boldsymbol{\xi}|$, we obtain
  \begin{multline}
  \label{eq26b}
  \left|1-{(E_c(\boldsymbol{\xi})+c^2)(E_c(\boldsymbol{\xi}')+c^2) \over
      N_c(\boldsymbol{\xi})N_c(\boldsymbol{\xi}')}\right|
  \leq {3c^2|\boldsymbol{\xi}||\boldsymbol{\xi}'|+2c^3(|\boldsymbol{\xi}|+|\boldsymbol{\xi}'|)\over
    N_c(\boldsymbol{\xi})N_c(\boldsymbol{\xi}')}\\
  \leq{3c^2|\boldsymbol{\xi}||\boldsymbol{\xi}'|+2c^3(|\boldsymbol{\xi}|+|\boldsymbol{\xi}'|)\over
      2c^4} .
  \end{multline}
  Using this last equation, we estimate
  \begin{multline}
    \label{eq:26}
    |(F_\alpha, (\frac1{|\cdot|}-\varphi_1) F_\alpha)|
    \\
    \leq k
    \iint\frac{\rd\boldsymbol{\xi}\rd\boldsymbol{\xi}'}{|\boldsymbol{\xi}-\boldsymbol{\xi}'|^2}
    \left(1-{(E_c(\boldsymbol{\xi})+c^2)(E_c(\boldsymbol{\xi}')+c^2)
        \over
        N_c(\boldsymbol{\xi})N_c(\boldsymbol{\xi}')}\right) |\hat F_\alpha(\boldsymbol{\xi})||\hat F_\alpha(\boldsymbol{\xi}')| \\
    \leq k \int_{\rz^6} \rd \boldsymbol{\xi}\rd\boldsymbol{\xi}'
    {|\hat g_R(\boldsymbol{\xi}-\bp)| |\hat g_R(\boldsymbol{\xi}'-\bp)|
      \over |\boldsymbol{\xi}-\boldsymbol{\xi}'|^2}   (c^{-2}|\boldsymbol{\xi}||\boldsymbol{\xi}'| +c^{-1} (|\boldsymbol{\xi}|+|\boldsymbol{\xi}'|))\\
    \leq k c^{-2}R^{-3} \int
    \rd\boldsymbol{\xi}\int\rd\boldsymbol{\xi}'{|\hat
      g(\boldsymbol{\xi})||\hat
      g(\boldsymbol{\xi}')|\over |\boldsymbol{\xi}-\boldsymbol{\xi}'|^2}( |\boldsymbol{\xi}+R\bp||\boldsymbol{\xi}'+R\bp| + c R|\boldsymbol{\xi}+R\bp|)\\
    \leq k c^{-2}R^{-3} \int
    \rd\boldsymbol{\xi}\int\rd\boldsymbol{\xi}'{|\hat
      g(\boldsymbol{\xi})||\hat g(\boldsymbol{\xi}')|\over
      |\boldsymbol{\xi}-\boldsymbol{\xi}'|^2}(
    |\boldsymbol{\xi}||\boldsymbol{\xi}'|+R|\bp||\boldsymbol{\xi}|+|R\bp|^2
    + cR|\boldsymbol{\xi}|+cR^2|\bp|)\\
    \leq k c^{-2}R^{-3} (1+ R|\bp|+R^2|\bp|^2+cR +cR^2|\bp|)
  \end{multline}
  (see \cite[Formula (31)]{CassanasSiedentop2006}).
  Thus
  \begin{multline}
    \label{eq:27}
    Z|\tr[(|\cdot|^{-1}-\varphi_1)\gamma_1]\leq
    Z|\int_\Gamma\dbar\Omega(\alpha)
    A(\alpha)(F_\alpha, (|\cdot|^{-1}-\varphi_1) F_\alpha)|\\
    \leq k Z\int\dbar\Omega(\alpha)A(\alpha) c^{-2}R^{-3} (1+ R|\bp|+R^2|\bp|^2+cR
    +cR^2|\bp|)\\
    \leq  k{Z\over c^2R^3}\bigg\{Z+ R\int\rd \bq \left[Z^{4/3} V_1(Z^{1/3}\bq)\right]^2+R^2\int\rd \bq \left[Z^{4/3}
    V_1(Z^{1/3}\bq)\right]^{5/2}\\
    +cR+cR^2\int\rd \bq \left[Z^{4/3} V_1(Z^{1/3}\bq)\right]^2\bigg\}\\
    \leq k(Z^{3\delta}+ Z^{2\delta + 2/3}+ Z^{\delta+4/3}+
    Z^{2\delta}+ Z^{\delta+5/3})
  \end{multline}
  which yields the desired estimate.
\end{proof}

\subsubsection{The Electron-Electron Interaction\label{sss3.3.3}}

We will roll back the treatment of the electron-electron interaction
to the treatment of nucleus-electron interaction.

\begin{lemma}
  \label{l3}
  For our choice of $\gamma_1$ and $\delta\in(1/3,2/3)$ we have
  \begin{equation}
    \label{eq:28}
    D(\rho_{{\gamma_1}_\Phi},\rho_{{\gamma_1}_\Phi})-D(\rho_{\gamma_1},\rho_{\gamma_1})= O(Z^{5/3+\delta}),
  \end{equation}
  where $\rho_{\gamma_1}$ is the density of $\gamma_1$ and $\rho_{{\gamma_1}_\Phi}$ is the
  density of ${\gamma_1}_\Phi$.
\end{lemma}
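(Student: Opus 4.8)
The plan is to \emph{linearize} the Coulomb form and thereby reduce \eqref{eq:28} to the two estimates already proved in Lemmata \ref{l2} and \ref{l1}. Writing $a:=\rho_{{\gamma_1}_\Phi}$, $b:=\rho_{\gamma_1}$ and $W:=|\cdot|^{-1}*(a+b)$, the bilinearity and symmetry of $D$ give $D(a,a)-D(b,b)=D(a-b,\,a+b)=\tfrac12\int_{\rz^3}(a-b)\,W$. Since $a-b$ is the change in the density of the \emph{same} trial density matrix $\gamma_1$ upon applying $\Phi$, one has $\int_{\rz^3} a\,W=\tr_\gh[\Phi^*W\Phi\,\gamma_1]$ and $\int_{\rz^3} b\,W=\tr_\gh[W\gamma_1]$, while $\Phi^*W\Phi=\Phi_1^*W\Phi_1+\Phi_2^*W\Phi_2$; hence
\begin{equation}
  \label{eq:l3plan}
  D(\rho_{{\gamma_1}_\Phi},\rho_{{\gamma_1}_\Phi})-D(\rho_{\gamma_1},\rho_{\gamma_1})
  =\tfrac12\,\tr_\gh\!\big[(\Phi_1^*W\Phi_1-W)\gamma_1\big]+\tfrac12\,\tr_\gh\!\big[\Phi_2^*W\Phi_2\,\gamma_1\big].
\end{equation}
By \eqref{eq:10} the first trace is the analogue of $Z\tr[(\varphi_1-|\cdot|^{-1})\gamma_1]$ and the second of $Z\tr(\varphi_2\gamma_1)$, with the external potential $Z/|\cdot|$ replaced by the self-consistent mean field $W$ -- this is the announced reduction of the electron--electron to a nucleus--electron interaction.

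To carry out the reduction I would note that $W$ enters both traces, after passing to Fourier variables, only through $\hat W(k)=\tfrac{4\pi}{|k|^2}\,\widehat{(a+b)}(k)$, and that
$$|\hat W(k)|\;\le\;\frac{4\pi}{|k|^2}\,\|a+b\|_{L^1(\rz^3)}\;=\;\frac{8\pi}{|k|^2}\,\tr\gamma_1\;=\;\frac{8\pi Z}{|k|^2}\,,$$
using that $\Phi$ is an isometry (so $\int\rho_{{\gamma_1}_\Phi}=\tr\gamma_1$) and that, for $N>Z$, $\tr\gamma_1=\int_\Gamma\dbar\Omega\,A=\int_{\rz^3}\rho_\mathrm{Z}=Z$ by \eqref{eq:13-1}, \eqref{eq:a0} and the value of $\gamma_\mathrm{TF}$. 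Therefore $|\hat W(k)|\le k\,|\widehat{Z/|\cdot|}(k)|$ pointwise, and expanding the per-coherent-state expectations $(F_\alpha,(W-\Phi_1^*W\Phi_1)F_\alpha)$ and $(F_\alpha,\Phi_2^*W\Phi_2F_\alpha)$ in $\bxi,\bxi'$ reproduces, up to a constant times $Z$, exactly the right-hand sides of the first line of \eqref{eq:26} and of \eqref{eq:23}, with the same relativistic weights $1-(E_c(\bxi)+c^2)(E_c(\bxi')+c^2)N_c(\bxi)^{-1}N_c(\bxi')^{-1}$ and $c^2|\bxi||\bxi'|N_c(\bxi)^{-1}N_c(\bxi')^{-1}$ controlled in the proofs of Lemmata \ref{l1}--\ref{l2} (see \eqref{eq:26a}--\eqref{eq26b} and the estimate $N_c\ge\sqrt2\,c^2$ used there). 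The remaining $\bxi$-integrations (using that $|\hat g(\bxi)|^2(|\bxi|^2+|\bxi|+1)$ is integrable because $g\in H^{3/2}$) and the phase-space integrations over $\alpha\in\supp A$ are then \emph{verbatim} \eqref{eq:24} and \eqref{eq:27}, giving $0\le\tr[\Phi_2^*W\Phi_2\gamma_1]=O(Z^{4/3+\delta})$ and $|\tr[(\Phi_1^*W\Phi_1-W)\gamma_1]|=O(Z^{5/3+\delta})$; inserting these into \eqref{eq:l3plan} yields \eqref{eq:28}.

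The point I expect to require the most care -- though it is technical rather than deep -- is making \eqref{eq:l3plan} and the bound on $\hat W$ legitimate: one must check that $W=|\cdot|^{-1}*(\rho_{{\gamma_1}_\Phi}+\rho_{\gamma_1})$ is an admissible multiplication operator for which the three traces in question are finite and combine as stated (this follows from $\rho_{{\gamma_1}_\Phi}+\rho_{\gamma_1}\in L^1\cap L^{5/3}(\rz^3)$, itself a consequence of the Thomas--Fermi scaling of $\rho_\mathrm{Z}$ and the boundedness of the multipliers $\Phi_1,\Phi_2$, together with $\gamma_1,E_c(\p)\gamma_1\in\gS^1(\gh)$), and one must be careful that it is $Z$, not the possibly much larger $N$, that governs the size of $W$ -- which is exactly why one uses that for $N>Z$ the relevant Thomas--Fermi minimizer is the neutral density $\rho_\mathrm{Z}$ with $\tr\gamma_1=\int\rho_\mathrm{Z}=Z$. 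I would resist bounding the left-hand side of \eqref{eq:l3plan} instead by a naive double Cauchy--Schwarz in the Coulomb inner product ($|D(a-b,a+b)|\le D(a-b,a-b)^{1/2}D(a+b,a+b)^{1/2}$, then a second Cauchy--Schwarz over $\Gamma$ for $D(a-b,a-b)$): although $D(a+b,a+b)=O(Z^{7/3})$ is harmless, that route loses too much on the single--coherent--state self-energies $D(|\Phi F_\alpha|^2-|F_\alpha|^2,\,|\Phi F_\alpha|^2-|F_\alpha|^2)$ to recover the exponent $5/3+\delta$, which is the reason the argument is routed through \eqref{eq:l3plan} and Lemmata \ref{l1}--\ref{l2}.
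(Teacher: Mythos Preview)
Your proposal is correct and follows essentially the same route as the paper: both linearize via $D(a,a)-D(b,b)=D(a-b,a+b)=\tfrac12\int(a-b)W$ with $W=|\cdot|^{-1}*(a+b)$, bound $|\hat W(\bxi)|\le kZ|\bxi|^{-2}$ using $\|\rho_{\gamma_1}+\rho_{{\gamma_1}_\Phi}\|_1=2Z$, and then reduce the resulting integral to the two relativistic kernels already controlled in Lemmata~\ref{l1} and~\ref{l2} (the paper packages these as the single kernel $K(\bxi,\bxi')$). Your operator-theoretic reformulation \eqref{eq:l3plan} and your explicit remark that $\tr\gamma_1=Z$ (not $N$) governs the size of $W$ are exactly the content of the paper's \eqref{eq:29} and the subsequent display.
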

\begin{proof}
  We have
  \begin{equation}
    \label{eq:29}
    |\cF[(\rho_{\gamma_1}+\rho_{{\gamma_1}_\Phi})*|\cdot|^{-1}](\bxi)|\leq 2^{-3/2}\pi^{-5/2}\|\rho_{\gamma_1} + \rho_{{\gamma_1}_\Phi}\|_1|\bxi|^{-2} =2^{-1/2}\pi^{-5/2} Z|\bxi|^{-2}.
  \end{equation}
  Now (see \cite[Lemma 3]{CassanasSiedentop2006}),
\begin{multline*}
  |D(\rho_{{\gamma_1}_\Phi},\rho_{{\gamma_1}_\Phi})-D(\rho_{\gamma_1},\rho_{\gamma_1})|=
  |D(\rho_{{\gamma_1}_\Phi}-\rho_{\gamma_1},\rho_{{\gamma_1}_\Phi}+\rho_{\gamma_1})|\\
  \leq \frac12\left|\int_{\rz^3}(\rho_{\gamma_1}(\bx)-\rho_{{\gamma_1}_\Phi}(\bx))[(\rho_{\gamma_1}+\rho_{{\gamma_1}_\Phi})*|\cdot|^{-1}](\bx)\rd \bx\right|\\
  \leq \frac12\int_\Gamma \dbar\Omega(\alpha)A(\alpha)\\
  \times\iint \rd\bxi \rd\bxi'|
  \cF[(\rho_{\gamma_1}+\rho_{{\gamma_1}_\Phi})*|\cdot|^{-1}](\bxi-\bxi')|K(\bxi,\bxi')
  |\hat
  F_\alpha(\bxi)||\hat F_\alpha(\bxi')| \rd\bxi\rd\bxi'\\
  \leq 2^{-3/2}\pi^{-5/2} Z\int_\Gamma \dbar\Omega(\alpha)A(\alpha)\iint \rd\bxi
  \rd\bxi'| |\bxi-\bxi'|^{-2} K(\bxi,\bxi') |\hat F_\alpha(\bxi)||\hat
  F_\alpha(\bxi')| \rd\bxi\rd\bxi'
\end{multline*}
where
$$K(\bxi,\bxi')=\left|\frac{(E_c(\bxi)+c^2)(E_c(\bxi')+c^2)}{N_c(\bxi)N_c(\bxi')}-1 \right| +  \frac{c^2|\bxi||\bxi'|}{N_c(\bxi)N_c(\bxi')}$$
and where we used \eqref{eq:29} in the last step. Eventually,
Lemmata \ref{l1} and \ref{l2} yield the desired result.
\end{proof}

\subsubsection{The Total Energy\label{sss3.3.4}}

Gathering our above estimates allows us to reduce the problem to the
non-relativistic result of Lieb \cite{Lieb1981}
\begin{theorem}
    \label{th:3}
  There exist a constant $k$ such that for all $Z\geq 1$ we have $$E(Z/\kappa , N,Z)\leq E_\mathrm{TF}(N/Z,1)Z^{7/3} + k Z^{20/9}.$$
\end{theorem}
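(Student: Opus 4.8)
The plan is to assemble the upper bound from the ingredients already developed in Section \ref{sec3}. By \eqref{eq:18}, $E(c,N,Z)\leq {\cE}^R_\mathrm{HF}(\gamma_1)$ for our chosen trial matrix $\gamma_1=\int_\Gamma\dbar\Omega(\alpha)A(\alpha)|F_\alpha\rangle\langle F_\alpha|$, where $c=Z/\kappa$ and $\kappa\in[0,\kappa_\mathrm{crit})$ is fixed. I would expand ${\cE}^R_\mathrm{HF}(\gamma_1)$ according to \eqref{eq:17} and compare it termwise with the Thomas-Fermi functional evaluated on $\rho_\mathrm{Z}=\rho^{(N,Z)}_\mathrm{TF}$ (recall that for $N>Z$ this minimizer coincides with the $N=Z$ minimizer). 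Concretely, I would write
\begin{equation*}
  {\cE}^R_\mathrm{HF}(\gamma_1)
  = \tr[(E_c(\hat\bp)-c^2)\gamma_1]
    - Z\tr(\varphi_1\gamma_1) - Z\tr(\varphi_2\gamma_1)
    + D(\rho_{{\gamma_1}_\Phi},\rho_{{\gamma_1}_\Phi}),
\end{equation*}
and then insert and subtract the non-relativistic analogues $-Z\tr(|\cdot|^{-1}\gamma_1)$ and $D(\rho_{\gamma_1},\rho_{\gamma_1})$.

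The key steps, in order, are: (i) bound the kinetic energy by \eqref{eq:21}, $\tr[(E_c(\hat\bp)-c^2)\gamma_1]\leq \tfrac35\gamma_\mathrm{TF}\int\rho_\mathrm{Z}^{5/3} + ZR^{-2}\|\nabla g\|^2$, where $R=Z^{-\delta}$; (ii) replace the relativistic potential term by the Coulomb one at the cost of Lemma \ref{l2}, $Z|\tr[(|\cdot|^{-1}-\varphi_1)\gamma_1]|=O(Z^{5/3+\delta})$, and discard $-Z\tr(\varphi_2\gamma_1)\leq 0$ (Lemma \ref{l1} shows this is in any case only $O(Z^{4/3+\delta})$); (iii) replace $D(\rho_{{\gamma_1}_\Phi},\rho_{{\gamma_1}_\Phi})$ by $D(\rho_{\gamma_1},\rho_{\gamma_1})$ using Lemma \ref{l3}, $D(\rho_{{\gamma_1}_\Phi},\rho_{{\gamma_1}_\Phi})-D(\rho_{\gamma_1},\rho_{\gamma_1})=O(Z^{5/3+\delta})$; (iv) recognize that the resulting expression $\tfrac35\gamma_\mathrm{TF}\int\rho_\mathrm{Z}^{5/3} - Z\tr(|\cdot|^{-1}\gamma_1) + D(\rho_{\gamma_1},\rho_{\gamma_1})$ is exactly the non-relativistic reduced Hartree-Fock energy of the coherent-state trial matrix, which Lieb \cite{Lieb1981} estimates by $E_\mathrm{TF}(N,Z)+O(Z^{7/3-\text{something}})$; the localization errors there (from the coherent-state smearing, e.g. a term like $kZ^{4/3}R$ from the potential and $D$-terms, plus the kinetic error $ZR^{-2}$) combine to $O(Z^{7/3-\delta}+Z^{4/3+\delta})$. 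Using the scaling relation \eqref{eq:15a-1}, $E_\mathrm{TF}(N,Z)=E_\mathrm{TF}(N/Z,1)Z^{7/3}$.

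Collecting the error terms, the relativistic corrections contribute $O(Z^{5/3+\delta})$ and the coherent-state localization contributes $O(Z^{7/3-\delta}+Z^{4/3+\delta})$; the kinetic localization error $ZR^{-2}=Z^{1+2\delta}$ is dominated once $\delta<2/3$. Optimizing over $\delta\in(1/3,2/3)$: the dominant competing exponents are $5/3+\delta$ and $7/3-\delta$, balanced at $\delta=1/3$, giving exponent $2$; but $\delta=1/3$ is excluded, so one picks $\delta$ slightly above $1/3$. A clean admissible choice is $\delta=4/9$, yielding $5/3+\delta=19/9<20/9$, $7/3-\delta=17/9<20/9$, and $1+2\delta=17/9<20/9$, so every error is $O(Z^{20/9})$ and the theorem follows. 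The main obstacle is bookkeeping: one must make sure the non-relativistic Hartree-Fock estimate of Lieb is quoted with the correct localization error in terms of $R=Z^{-\delta}$ (rather than Lieb's own optimized radius) and that the exchange term, which we drop, is indeed negative so that the reduced functional still bounds $E(c,N,Z)$ from above via \eqref{eq:18} — both are already guaranteed by the setup, so the argument is essentially a matter of assembling the stated lemmas and choosing $\delta$.
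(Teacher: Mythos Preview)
Your overall strategy matches the paper's exactly: bound $E(c,N,Z)$ by ${\cE}^R_\mathrm{HF}(\gamma_1)$ via \eqref{eq:18}, use \eqref{eq:21} for the kinetic term, Lemmata \ref{l1}--\ref{l3} for the relativistic corrections, and Lieb's non-relativistic coherent-state estimate for the remaining Schr\"odinger-type Hartree-Fock expression, then optimize in $\delta$. The assembly and the use of \eqref{eq:15a-1} are correct.

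The gap is in step (iv): you misidentify the dominant non-relativistic localization error from Lieb \cite[Section V.A.1]{Lieb1981}. The leading smearing error does not come in the form $Z^{7/3-\delta}$ (nor $Z^{4/3+\delta}$); it arises from convolving the Coulomb singularity $Z/|\bx|$ against $\rho_\mathrm{Z}*g_R^2$ near the origin. Since $\rho_\mathrm{Z}(\bx)\leq (Z/\gamma_\mathrm{TF})^{3/2}|\bx|^{-3/2}$ (cf.\ \eqref{eq:a1}) and the smeared potential agrees with $|\bx|^{-1}$ for $|\bx|>R$ by Newton's theorem, the error is bounded by $Z\int_{|\bq|<R}\rho_\mathrm{Z}(\bq)|\bq|^{-1}\rd\bq = O(Z^{5/2}R^{1/2})=O(Z^{5/2-\delta/2})$. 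Thus the correct collection of errors is
\[
E(c,N,Z)\leq {\cE}^R_\mathrm{HF}(\gamma_1)\leq E_\mathrm{TF}(N,Z)+O\big(Z^{1+2\delta}+Z^{5/2-\delta/2}+Z^{5/3+\delta}\big),
\]
exactly as in \eqref{eq:31}. Balancing $5/2-\delta/2$ against $5/3+\delta$ forces $\delta=5/9$, and all three exponents are then $\leq 20/9$. Your choice $\delta=4/9$ would give $5/2-\delta/2=41/18>20/9$, so the claimed bound $kZ^{20/9}$ would \emph{not} follow. Once you replace your localization exponent by $Z^{5/2-\delta/2}$ and take $\delta=5/9$, the argument is complete and identical to the paper's.
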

\begin{proof}
  Following Lieb \cite[Section V.A.1]{Lieb1981} with the remainder
  terms given there (putting $R=Z^{-\delta}$ as in our estimate),
  using the remainder terms obtained in Lemmata \ref{l1} through
  \ref{l3}, and applying \eqref{eq:3.2.7} and \eqref{eq:21} we get
  \begin{equation}
    \label{eq:31}
    E(c,N,Z)\leq {\cE}^R_\mathrm{HF}(\gamma)\leq {\cE}^R_\mathrm{HF}(\gamma_1) \leq E_\mathrm{TF}(N,Z) +
    O(Z^{1+2\delta} + Z^{\frac52-\frac\delta2} + Z^{\frac53+\delta})
  \end{equation}
  (see \cite[Formula (35)]{CassanasSiedentop2006}) which is optimized for $\delta=5/9$ giving the claimed result.
\end{proof}

\subsection{Lower Bound\label{ss3.4}}

The lower bound is -- contrary to the usual folklore -- easy. As we
will see, it is a corollary of S\o rensen's \cite{Sorensen2005}
result for the Chandrasekhar operator and an estimate on the
potential generated by the exchange hole \cite{Mancasetal2004}.  The
exchange hole of a density $\sigma$ at a point $\bx\in\rz^3$ is
defined as the ball $B_{R_\sigma(\bx)}(\bx)$ of radius
$R_\sigma(\bx)$ centered at $\bx$ where $R_\sigma(\bx)$ is the
smallest radius $R$ fulfilling
  \begin{equation}
    \label{eq:31a}
    \frac12=\int_{B_R} \sigma.
  \end{equation}
The hole potential $L_\sigma $ of $\sigma$ is defined through
\begin{equation}
  \label{eq:31b}
  L_\sigma(\bx):=  \int_{|\bx-\by|<R_\sigma(\bx)}{\sigma(\by) \over |\bx-\by| }\rd\by.
\end{equation}

\subsubsection{$L^\infty$-Bound on the Exchange Hole Potential\label{sss3.4.1}}

We begin with the following remark: the Thomas-Fermi potential
$V_Z:=Z/|\cdot|-\rho_\mathrm{Z}*|\cdot|^{-1}$ can be written as
\eqref{eq:a0}. This equation yields immediately the upper bound
\begin{equation}
  \label{eq:a1}
  \rho_\mathrm{Z}(\bx)\leq (Z/\gamma_\mathrm{TF})^{3/2}|\bx|^{-3/2}.
\end{equation}
This bound allows us to prove the following $L^\infty$-bounds on
potentials of exchange holes.
\begin{lemma}
  \label{l4}
  $$\|L_{\rho_\mathrm{Z}}\|_\infty = O(Z).$$
\end{lemma}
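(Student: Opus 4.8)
The plan is to estimate $L_{\rho_\mathrm{Z}}(\bx)$ uniformly in $\bx$ by combining the pointwise bound \eqref{eq:a1} on the Thomas--Fermi density with an elementary control of the exchange-hole radius $R_{\rho_\mathrm{Z}}(\bx)$. The starting point is the scaling relation implicit in \eqref{eq:13-2n}: writing $\rho_\mathrm{Z}(\bx)=Z^2\rho_1(Z^{1/3}\bx)$ (Gomb\'as \cite{Gombas1949}, \cite{LiebSimon1977}), one sees that $L_{\rho_\mathrm{Z}}(\bx)=Z\,L_{\rho_1}(Z^{1/3}\bx)$, so it suffices to prove $\|L_{\rho_1}\|_\infty<\infty$, i.e. the case $Z=1$; the factor $Z$ then reappears by scaling. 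Thus the whole lemma reduces to showing that the hole potential of the $Z=1$ Thomas--Fermi density is a bounded function.

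First I would record the two regimes of $\rho_1$. Near the origin, \eqref{eq:a1} gives $\rho_1(\bx)\leq k|\bx|^{-3/2}$, which is locally integrable in $\rz^3$; in particular the normalization $\tfrac12=\int_{B_{R_{\rho_1}(\bx)}(\bx)}\rho_1$ forces the hole radius to stay bounded below, $R_{\rho_1}(\bx)\geq r_0>0$, uniformly in $\bx$, because a ball of vanishing radius cannot capture mass $\tfrac12$ against a density with a locally integrable singularity. At large distances one uses the Sommerfeld asymptotics $\rho_1(\bx)\sim c|\bx|^{-6}$ (equivalently $V_1(\bx)\sim k|\bx|^{-4}$ via \eqref{eq:a0}), so $\rho_1$ decays fast enough that the hole radius cannot stay bounded as $|\bx|\to\infty$; in fact $R_{\rho_1}(\bx)\to\infty$ there, and $\rho_1$ restricted to the hole has total mass exactly $\tfrac12$.

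Next I would split the hole integral by the Cauchy--Schwarz/Hardy trick that is standard for such potentials: for any ball $B$ of radius $R$ centered at $\bx$,
\begin{equation}
  \label{eq:l4-split}
  \int_{B}\frac{\rho_1(\by)}{|\bx-\by|}\,\rd\by
  \leq \int_{|\bx-\by|<\lambda}\frac{\rho_1(\by)}{|\bx-\by|}\,\rd\by
     + \frac1\lambda\int_{B}\rho_1(\by)\,\rd\by
  \leq k\Bigl(\int_{|\by'|<\lambda}|\by'|^{-3/2-1}\,\rd\by'\Bigr)^{?} + \frac1{2\lambda},
\end{equation}
and here one must be a little careful: $|\by'|^{-5/2}$ is \emph{not} integrable at the origin in $\rz^3$, so the naive inner estimate via \eqref{eq:a1} alone diverges. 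The fix is to use instead the crude universal bound coming from \eqref{eq:a0}, $\rho_1(\by)=(V_1(\by)/\gamma_\mathrm{TF})^{3/2}$ together with $V_1(\by)\leq 1/|\by|$ only away from the nucleus, but near the nucleus to use that $\rho_1\in L^{5/3}$ and a H\"older estimate: $\int_{|\by-\bx|<\lambda}\rho_1(\by)|\bx-\by|^{-1}\rd\by\leq \|\rho_1\|_{5/3}\,\||\cdot|^{-1}\|_{L^{5/2}(B_\lambda)}$, and $|\by'|^{-1}\in L^{5/2}(B_\lambda)$ since $5/2<3$, giving a bound $k\lambda^{1/5}$. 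Choosing $\lambda$ comparable to $R_{\rho_1}(\bx)$ when the latter is small and $\lambda$ a fixed constant when it is large, and using $R_{\rho_1}(\bx)\geq r_0$ in the first regime and the mass bound $\int_B\rho_1=\tfrac12$ in the second, yields $L_{\rho_1}(\bx)\leq k$ uniformly. Undoing the scaling gives $\|L_{\rho_\mathrm{Z}}\|_\infty\leq kZ$, which is the claim.

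The main obstacle is exactly the inner, short-range part of \eqref{eq:l4-split}: the pointwise bound \eqref{eq:a1} just barely fails to make $\rho_1/|\cdot|$ locally integrable with a uniform constant, so one has to route around it either through the $L^{5/3}$ membership (as above) or through a slightly refined pointwise bound near the singularity. Everything else — the scaling reduction, the lower bound on the hole radius near the nucleus, and the mass bound $\tfrac12$ at infinity — is routine. (This is the analogue of the estimate used in \cite{Mancasetal2004}; the same argument will reappear, with the density $\sigma$ in place of $\rho_\mathrm{Z}$, when we treat the density matrix in the lower-bound section.)
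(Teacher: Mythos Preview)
Your proposal contains two genuine errors.

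\textbf{The scaling reduction fails.} The relation $L_{\rho_\mathrm{Z}}(\bx)=Z\,L_{\rho_1}(Z^{1/3}\bx)$ is false. The density does scale as $\rho_\mathrm{Z}(\bx)=Z^2\rho_1(Z^{1/3}\bx)$, and a change of variables gives
\[
L_{\rho_\mathrm{Z}}(\bx)=Z^{4/3}\int_{|Z^{1/3}\bx-\bz|<Z^{1/3}R_{\rho_\mathrm{Z}}(\bx)}\frac{\rho_1(\bz)}{|Z^{1/3}\bx-\bz|}\,\rd\bz,
\]
but the scaled radius $Z^{1/3}R_{\rho_\mathrm{Z}}(\bx)$ is \emph{not} $R_{\rho_1}(Z^{1/3}\bx)$: it is the radius of the $\rho_1$--ball about $Z^{1/3}\bx$ carrying mass $\tfrac1{2Z}$, not $\tfrac12$, because the defining condition \eqref{eq:31a} fixes the mass at $\tfrac12$ independently of $Z$. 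The exchange-hole potential therefore does not inherit the Thomas--Fermi scaling, and the reduction to $Z=1$ is invalid.

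\textbf{The pointwise bound does not fail.} You assert that $|\by'|^{-5/2}$ is not locally integrable in $\rz^3$, but $\int_{|\by'|<\lambda}|\by'|^{-5/2}\,\rd\by'=4\pi\int_0^\lambda r^{-1/2}\,\rd r=8\pi\lambda^{1/2}<\infty$. More to the point, inserting \eqref{eq:a1} into the near part $\int_{|\by|<\lambda}\rho_\mathrm{Z}(\bx+\by)|\by|^{-1}\rd\by$ produces $kZ^{3/2}|\by|^{-1}|\bx+\by|^{-3/2}$, whose two singularities (at $\by=0$ and $\by=-\bx$) are each locally integrable, and remain so even when they coalesce at $\bx=0$. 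The detour through $\|\rho_1\|_{5/3}$ is thus unnecessary --- and, once the scaling reduction is abandoned, carrying that H\"older argument out directly on $\rho_\mathrm{Z}$ only yields $O(Z^{7/6})$, weaker than the claimed $O(Z)$.

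The paper's proof uses exactly the pointwise bound \eqref{eq:a1} you discarded. The key is to split at the $Z$--dependent radius $1/Z$: the near part $A_1(\bx):=\int_{|\by|\leq 1/Z}\rho_\mathrm{Z}(\bx+\by)|\by|^{-1}\rd\by$ is controlled via \eqref{eq:a1} and a rescaling expressing it as $(Z/\gamma_\mathrm{TF})^{3/2}Z^{-1/2}f(|\bx|Z)$ for a single bounded auxiliary function $f(t)=\sqrt t\int_{|\by|<1/t}|\by|^{-1}|\by+(0,0,1)|^{-3/2}\rd\by$, so $A_1=O(Z)$; the far part $A_2(\bx):=\int_{1/Z\leq|\by|\leq R_{\rho_\mathrm{Z}}(\bx)}\rho_\mathrm{Z}(\bx+\by)|\by|^{-1}\rd\by\leq Z\cdot\tfrac12$ by the hole mass constraint. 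The choice of cutoff $1/Z$ is precisely what makes both pieces balance at order $Z$.
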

\begin{proof}
  The function
  \begin{equation}
    \label{eq:a3}
    \begin{split}
      f:\rz_+&\rightarrow \rz\\
      t&\mapsto \sqrt t\int_{|\by|<1/t}|\by|^{-1}|\by+(0,0,1)|^{-3/2}\rd \by
    \end{split}
  \end{equation}
  is obviously continuous on $(0,\infty)$. Moreover, $f(t)$ tends to a
  positive constant for $t\to0$ and to $0$ for $t\to\infty$. Thus,
  $\|f\|_\infty<\infty$.

  This allows us to obtain the desired estimate:
  \begin{equation}
    \label{eq:a4}
    L_{\rho_\mathrm{Z}}(\bx) \leq A_1(\bx) + A_2(\bx)
  \end{equation}
  (see \cite[Formula (46)]{CassanasSiedentop2006}) where
    \begin{multline}
      \label{eq:a5}
      A_1(\bx):=\int_{|y|\leq 1/Z} \frac{\rho_\mathrm{Z}(\bx+\by)}{|\by|} \rd
      \by
      \leq \left(\frac Z{\gamma_\mathrm{TF}}\right)^{3/2}
      \int_{|y|\leq 1/Z} \frac{\rd \by}{|\by||\by+\bx|^{3/2}} \\
      = (Z/\gamma_\mathrm{TF})^{3/2} Z^{-1/2} f(|\bx|Z) \leq
      \|f\|_\infty\gamma_\mathrm{TF}^{-3/2} Z.
\end{multline}
and
\begin{equation}
  \label{eq:a6}
  A_2(\bx):=\int_{\frac1Z\leq |\by| \leq R_{\rho_\mathrm{Z}}(\bx)}
  \frac{\rho_\mathrm{Z} (\bx+\by)}{|\by|} \rd \by \leq Z
  \int_{\frac1Z\leq |\by| \leq R_{\rho_\mathrm{Z}}(\bx)}
  \rho_\mathrm{Z}(\bx+\by)\rd \by
  \leq \frac Z2.
\end{equation}
These two estimates prove the claim.
\end{proof}

Lemma \ref{l4} allows us to estimate the $N$ electron operator
$B_{c,N,Z}$ by the canonical one particle Brown-Ravenhall operator
whose nuclear charge is screened by the the Thomas-Fermi potential.
However, since we would like -- because of mere convenience -- to
take advantage of S\o rensen's result \cite{Sorensen2005}, we derive
an estimate on $L_{\rho_{\delta}}$ (where
$\rho_\delta:=\rho_\mathrm{Z}* g^2_{Z^{-\delta}}$), i.e., the
exchange hole potential of the density occurring in S\o rensen's
proof.
\begin{lemma}
\label{l8}
$$\| L_{\rho_\delta}\|_\infty=O(Z).$$
\end{lemma}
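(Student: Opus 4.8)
The plan is to reduce the claim $\|L_{\rho_\delta}\|_\infty = O(Z)$ to the already-established bound $\|L_{\rho_\mathrm{Z}}\|_\infty = O(Z)$ of Lemma \ref{l4}, exploiting that $\rho_\delta := \rho_\mathrm{Z} * g^2_{Z^{-\delta}}$ is a mollification of $\rho_\mathrm{Z}$ on the length scale $R = Z^{-\delta}$ with $\delta \in (1/3, 2/3)$, so that $R \gg Z^{-1}$. The key structural facts I would use are: (i) the pointwise bound $\rho_\mathrm{Z}(\bx) \leq (Z/\gamma_\mathrm{TF})^{3/2} |\bx|^{-3/2}$ from \eqref{eq:a1}, which survives mollification in the form $\rho_\delta(\bx) \leq k\, Z^{3/2}\,(|\bx| + R)^{-3/2}$ (the convolution with a probability density supported in a ball of radius $R$ cannot make the singularity worse than shifting it by $R$; near the origin one instead uses $\int \rho_\mathrm{Z} \leq Z$ to get $\rho_\delta \lesssim Z \cdot R^{-3}$, but since $R^{-3} = Z^{3\delta}$ with $3\delta < 2$ this is dominated by $Z^{3/2}R^{-3/2} = Z^{3/2+3\delta/2}$ only when... actually one must be a little careful, see below); and (ii) the exchange-hole radius satisfies $\int_{B_{R_{\rho_\delta}(\bx)}} \rho_\delta = 1/2$, so $R_{\rho_\delta}(\bx)$ is controlled by the same Thomas–Fermi scaling as $R_{\rho_\mathrm{Z}}(\bx)$.

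Concretely, I would split $L_{\rho_\delta}(\bx) = A_1(\bx) + A_2(\bx)$ exactly as in the proof of Lemma \ref{l4}, with $A_1$ the integral over $|\by| \leq 1/Z$ and $A_2$ the integral over $1/Z \leq |\by| \leq R_{\rho_\delta}(\bx)$. For $A_2$ one bounds $\rho_\delta(\bx + \by) \leq Z$ pointwise — this needs $\|\rho_\delta\|_\infty \leq \|\rho_\mathrm{Z}\|_\infty \cdot (\text{something})$; actually $\rho_\delta = \rho_\mathrm{Z} * g_R^2$ with $\|g_R^2\|_1 = 1$, so $\|\rho_\delta\|_\infty \leq \|\rho_\mathrm{Z}\|_\infty$ directly, giving $A_2(\bx) \leq Z \int_{1/Z \leq |\by|} \rho_\delta(\bx + \by)\, \rd\by \leq Z \cdot \tfrac12$ as before. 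For $A_1$ one feeds in the mollified pointwise bound on $\rho_\delta$ near a generic point: using $\rho_\delta(\bx + \by) \leq k Z^{3/2}(|\bx + \by| + R)^{-3/2}$ and integrating $|\by|^{-1}$ over $|\by| \leq 1/Z$, one reproduces the scaling function $f$ of \eqref{eq:a3} (now with the shift $R$ regularizing the $|\bx|^{-3/2}$ factor, which only helps), yielding $A_1(\bx) \leq k Z^{3/2} \cdot Z^{-1/2} \cdot \|f\|_\infty = O(Z)$. Summing the two pieces gives $\|L_{\rho_\delta}\|_\infty = O(Z)$.

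The main obstacle — and the step I would treat most carefully — is establishing the mollified pointwise decay $\rho_\delta(\bx) \leq k Z^{3/2}(|\bx| + R)^{-3/2}$ uniformly, and more precisely handling the regime $|\bx| \lesssim R$, where the unmollified bound \eqref{eq:a1} blows up. There one must instead estimate $\rho_\delta(\bx) = \int \rho_\mathrm{Z}(\bx - \by) g_R(\by)^2\, \rd\by$ by combining the mass bound $\int \rho_\mathrm{Z} \leq Z$ with $\|g_R^2\|_\infty = k R^{-3}$ on part of the domain and the pointwise bound on the complement; one gets $\rho_\delta(\bx) \leq k Z R^{-3}$ there, and since $Z R^{-3} = Z^{1+3\delta}$ while $Z^{3/2} R^{-3/2} = Z^{3/2 + 3\delta/2}$, the inequality $1 + 3\delta \leq 3/2 + 3\delta/2$ holds precisely for $\delta \leq 1/3$, which is the boundary of our range. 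This means one cannot simply assert the bound in the stated clean form for all $\delta \in (1/3,2/3)$; instead I would carry through the $A_1$ estimate by splitting its domain $|\by| \leq 1/Z$ at scale $R$ relative to $\bx$, applying the mass-and-$L^\infty$ argument near the origin and \eqref{eq:a1} away from it, and checking that in every regime the $|\by|^{-1}$ weight makes the origin-region contribution $O(Z)$ as well (it is, since $\int_{|\by| \leq 1/Z} |\by|^{-1}\,\rd\by = k Z^{-2}$ and $Z^{-2} \cdot Z R^{-3} = Z^{-1+3\delta} \leq Z$ iff $\delta \leq 2/3$, comfortably inside our range). With that scale decomposition the estimate closes for the full range $\delta \in (1/3, 2/3)$.
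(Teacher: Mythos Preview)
Your argument closes correctly once you carry out the scale decomposition you describe for the $A_1$ piece, but it takes a more laborious route than the paper. One small wobble in your exposition: in the $A_2$ discussion the aside ``$\|\rho_\delta\|_\infty \leq \|\rho_\mathrm{Z}\|_\infty$'' is vacuous, since $\rho_\mathrm{Z}$ is unbounded at the origin by \eqref{eq:a1}. This is harmless, though, because the inequality $A_2 \leq Z\cdot\tfrac12$ you actually write down relies only on $|\by|^{-1}\leq Z$ for $|\by|\geq 1/Z$ together with the exchange-hole normalization \eqref{eq:31a}, not on any $L^\infty$ control of $\rho_\delta$.

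The paper's proof bypasses the pointwise analysis of $\rho_\delta$ entirely. It uses the same split at $|\by|=1/Z$ and treats the outer piece exactly as you do, but for the inner piece it simply unfolds the convolution and applies Fubini:
\[
\int_{|\by|\leq 1/Z}\frac{\rho_\delta(\bx+\by)}{|\by|}\,\rd\by
=\int g^2_{Z^{-\delta}}(\bz)\left[\int_{|\by|\leq 1/Z}\frac{\rho_{\mathrm{Z}}(\bx-\bz+\by)}{|\by|}\,\rd\by\right]\rd\bz
=\int g^2_{Z^{-\delta}}(\bz)\,A_1(\bx-\bz)\,\rd\bz,
\]
where $A_1$ is the quantity already bounded uniformly by $O(Z)$ in \eqref{eq:a5}. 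Since $g^2_{Z^{-\delta}}$ is a probability density, the right-hand side is at most $\|A_1\|_\infty=O(Z)$, with no case analysis and no sensitivity to $\delta$. Your approach trades this one-line reduction for explicit pointwise control of $\rho_\delta$ in the two regimes $|\bx|\lesssim R$ and $|\bx|\gtrsim R$; that extra information is not needed here, but could be useful if one wanted more than the $L^\infty$ bound on $L_{\rho_\delta}$.
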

\begin{proof}
  We proceed analogously to the proof of Lemma \ref{l4}:
  \begin{multline}
    \label{eq:a7}
    L_{\rho_\delta}(\bx) \leq \int_{|\by|\leq1/Z}
    \frac{\rho_\delta(\bx+\by)}{|\by|} \rd \by + \int_{1/Z\leq |\by|
      \leq R_{\rho_{\delta}}(\bx)}
    \frac{\rho_\delta (\bx+\by)}{|\by|} \rd \by\\
    \leq \int \rd \bz g^2_{Z^{-\delta}}(\bz)\int_{|\by|\leq1/Z}
    \frac{\rho_{\mathrm{Z}}(\bx-\bz+\by)}{|\by|} \rd \by +
    Z\int_{|\by|\leq R_{\rho_{\delta}}(\bx)}
    \rho_\delta(\bx+\by) \rd \by \\
    \leq \int \rd \bz g^2_{Z^{-\delta}}(\bz) A_1(\bx-\bz) + \frac Z2\leq k Z
  \end{multline}
  (see \cite[Formula (49)]{CassanasSiedentop2006}) where we used the definition of the radius of the exchange hole
  from the
  second line to the third line, the definition of $A_1$ in the next
  step, and in the last step the $L^\infty$-estimate \eqref{eq:a5} on
  $A_1$.
\end{proof}

\subsubsection{Lower Bound\label{sss3.4.2}}

\begin{theorem}
  \label{t2}
  $$\liminf_{Z\to\infty}[E(c,N,Z)-E_\mathrm{TF}(N,Z)]Z^{-7/3} \geq 0.$$
\end{theorem}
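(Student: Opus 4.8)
The plan is to turn the $N$-body problem into a one-particle one by the exchange-hole (Lieb--Yau) estimate, screening with the mollified Thomas--Fermi density $\rho_\delta:=\rho_\mathrm{Z}*g^2_{Z^{-\delta}}$ of Lemma~\ref{l8}, and then to compare the resulting one-particle Brown--Ravenhall operator with the pseudorelativistic (Chandrasekhar) operator carrying the same screened potential, for which S\o rensen's result~\cite{Sorensen2005} applies. Throughout $c=Z/\kappa$ with $\kappa<\kappa_\mathrm{crit}$ fixed, and we may assume $N=O(Z)$: $E(c,N,Z)$ is nonincreasing in $N$ and constant once $N$ exceeds the (order-$Z$) ionization number, while $E_\mathrm{TF}(N,Z)=E_\mathrm{TF}(Z,Z)$ for $N>Z$, so the general case reduces to $Z<N=O(Z)$. \emph{Step 1 (reduction to one particle).} For normalized $\psi\in\gQ_N$ the exchange-hole inequality~\cite{Mancasetal2004}, applied with $\rho_\delta$, gives
\[
  \Big\langle\psi,\sum_{\mu<\nu}|\bx_\mu-\bx_\nu|^{-1}\psi\Big\rangle
  \ge \Big\langle\psi,\sum_{\nu=1}^N(\rho_\delta*|\cdot|^{-1})(\bx_\nu)\psi\Big\rangle
     -D(\rho_\delta,\rho_\delta)
     -k\Big\langle\psi,\sum_{\nu=1}^N L_{\rho_\delta}(\bx_\nu)\psi\Big\rangle ,
\]
whose last term is at most $kN\|L_{\rho_\delta}\|_\infty=O(NZ)=o(Z^{7/3})$ by Lemma~\ref{l8}. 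Since $\psi\in\gH_N$, $\langle\psi,(D_{c,0})_\nu\psi\rangle=\langle\psi,|D_{c,0}|_\nu\psi\rangle$; writing $\psi=\Phi^{\otimes N}u$, using \eqref{eq:9}, \eqref{eq:10} and the identity $Z\varphi_1+Z\varphi_2-\Phi^*(\rho_\delta*|\cdot|^{-1})\Phi=\Phi^*W_\delta\Phi$ with $W_\delta:=Z/|\cdot|-\rho_\delta*|\cdot|^{-1}$, this becomes
\[
  \cE(\psi)\ \ge\ \Big\langle u,\sum_{\nu=1}^N\mathcal B_{\delta,\nu}\,u\Big\rangle-D(\rho_\delta,\rho_\delta)-o(Z^{7/3}),
  \qquad
  \mathcal B_\delta:= E_c(\p)-c^2-\Phi^*W_\delta\Phi\ \ \text{on}\ \ \gh .
\]
Since $\langle u,\sum_\nu\mathcal B_{\delta,\nu}u\rangle\ge-\tr[(\mathcal B_\delta)_-]$ over normalized fermionic $u$ (minus the trace of the negative part), this yields $E(c,N,Z)\ge-\tr[(\mathcal B_\delta)_-]-D(\rho_\delta,\rho_\delta)-o(Z^{7/3})$.

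\emph{Step 2 (from Brown--Ravenhall to Chandrasekhar).} Let $\mathcal C_\delta:=E_c(\p)-c^2-W_\delta$ be the one-particle Chandrasekhar operator with the same screened potential, and $\mathcal R:=\mathcal B_\delta-\mathcal C_\delta=W_\delta-\Phi^*W_\delta\Phi$. The Fourier kernel of $\mathcal R$ is $\widehat W_\delta(\bxi-\bxi')$ times a matrix multiplier that vanishes on the diagonal $\bxi=\bxi'$ and is of order $c^{-2}(|\bxi|+|\bxi'|)\,|\bxi-\bxi'|$ for $|\bxi|,|\bxi'|\le c$ -- this is exactly the mechanism behind \eqref{eq:23} and \eqref{eq:26a}--\eqref{eq:26} used in Lemmata~\ref{l1}--\ref{l2}. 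Hence, for any density matrix $\gamma$ with $\tr[\p^2\gamma]=O(Z^{7/3})$ one has $|\tr[\mathcal R\,\gamma]|=o(Z^{7/3})$; applying this with $\gamma=\chi_{(-\infty,0)}(\mathcal B_\delta)$, whose kinetic moment satisfies $\tr[\p^2\,\chi_{(-\infty,0)}(\mathcal B_\delta)]=O(Z^{7/3})$ by a relativistic Lieb--Thirring estimate for Brown--Ravenhall operators (valid since $\kappa<\kappa_\mathrm{crit}$) together with $\int W_\delta^{5/2}=O(Z^{7/3})$, gives $-\tr[(\mathcal B_\delta)_-]\ge-\tr[(\mathcal C_\delta)_-]-o(Z^{7/3})$.

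\emph{Step 3 (conclusion).} S\o rensen's analysis~\cite{Sorensen2005} of the Chandrasekhar operator with mollified Thomas--Fermi potential in the regime $c=Z/\kappa$, together with the elementary bounds of \cite{CassanasSiedentop2006} on the mollification errors $D(\rho_\delta,\rho_\delta)-D(\rho_\mathrm{Z},\rho_\mathrm{Z})$ and $W_\delta-V_Z$ (both $o(Z^{7/3})$ for $\delta\in(1/3,2/3)$), gives $-\tr[(\mathcal C_\delta)_-]-D(\rho_\delta,\rho_\delta)\ge E_\mathrm{TF}(N,Z)+o(Z^{7/3})$; chaining this with Steps~1--2 produces $E(c,N,Z)\ge E_\mathrm{TF}(N,Z)+o(Z^{7/3})$, which is the assertion after dividing by $Z^{7/3}$ and letting $Z\to\infty$. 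The one substantial point is Step~2: in the upper bound the negligibility of the $\varphi_2$- and $(|\cdot|^{-1}-\varphi_1)$-contributions was needed only on the single explicit coherent-state density matrix $\gamma_1$, where Lemmata~\ref{l1}--\ref{l2} deliver it at once, whereas here it must hold on the negative spectral subspace of the screened operator $\mathcal B_\delta$, which is not explicit, so one must first control its kinetic-energy moment and treat separately the high-momentum tail, where $\varphi_2$ is no longer small compared with the kinetic energy. Everything else is either already in the paper (Lemmata~\ref{l4}, \ref{l8}) or imported (S\o rensen's theorem, the exchange-hole inequality of \cite{Mancasetal2004}).
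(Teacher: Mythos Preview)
Your Steps~1 and~3 line up with the paper, but Step~2 contains a real gap that the paper's proof avoids entirely. You pass to the Pauli-spinor Brown--Ravenhall operator $\mathcal B_\delta=E_c(\p)-c^2-\Phi^*W_\delta\Phi$ and then try to compare it with the Chandrasekhar operator $\mathcal C_\delta$ by bounding $\tr[\mathcal R\,\gamma]$ for $\gamma=\chi_{(-\infty,0)}(\mathcal B_\delta)$. As you yourself flag, this requires two nontrivial inputs that you do not supply: a kinetic Lieb--Thirring inequality for the screened Brown--Ravenhall operator giving $\tr[\p^2\chi_{(-\infty,0)}(\mathcal B_\delta)]=O(Z^{7/3})$, and an operator bound on $\mathcal R=W_\delta-\Phi^*W_\delta\Phi$ valid for \emph{arbitrary} density matrices with controlled kinetic energy. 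Neither follows from Lemmata~\ref{l1}--\ref{l2}, which rely on the explicit coherent-state structure of $\gamma_1$; the high-momentum regime (where $\varphi_2$ is \emph{not} small relative to the kinetic energy) would have to be handled separately, and you only mention this without doing it.

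The paper never enters this territory. It stays in the four-spinor picture and exploits the charge-conjugation symmetry $\Lambda_-=U^{-1}\Lambda_+U$, $UD_0U^{-1}=-D_0$ from \cite[Appendix~B]{Liebetal1997}. Since $|D_0|-c^2-V_\delta$ is block-scalar, writing $X:=(|D_0|-c^2-V_\delta)I_2$ one has
\[
  \tr\bigl[\Lambda_+\operatorname{diag}(X,X)\Lambda_+\bigr]_-
  \;\ge\;\tr\bigl(\Lambda_+\operatorname{diag}(X_-,X_-)\bigr)
  \;=\;\tr\bigl(\Lambda_-\operatorname{diag}(X_-,X_-)\bigr),
\]
and adding these gives $\tr[\Lambda_+(|D_0|-c^2-V_\delta)\Lambda_+]_-\ge\tr(X_-)=2\,\tr[E_c(\p)-c^2-V_\delta]_-$ with \emph{no error term at all}. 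This lands exactly on S\o rensen's one-body input, so the whole comparison you attempt in Step~2 is unnecessary. The moral: for the lower bound there is no need to quantify the relativistic corrections $\varphi_1,\varphi_2$ on any spectral projection; the symmetry does the work for free.
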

\begin{proof}
  Pick $\delta>0$ and set $\rho_\delta:=\rho_\mathrm{Z}*g_{Z^{-\delta}}^2.$ Then
  the exchange hole correlation bound \cite[Equation (14)]{Mancasetal2004} implies
  the following pointwise estimate
  \begin{equation}
    \label{eq:32}
    \sum_{1\leq\mu<\nu\leq N}{1\over|\bx_\mu-\bx_\nu|}\geq
    \sum_{\nu=1}^N[\rho_\delta*|\cdot|^{-1}(\bx_\nu)- L_{\rho_\delta}(\bx_\nu)]-
    D(\rho_\delta,\rho_\delta).
  \end{equation}
  Because of the spherical symmetry of $g$ we can use Newton's theorem
  \cite{Newton1972} and replace $\rho_\delta$ by $\rho_\mathrm{Z}$ in
  the third summand of the right hand side of \eqref{eq:32}. Then, by
  Lemma \ref{l8}, we get that for all normalized $\psi\in\gQ_N$
  \begin{equation}
    \label{eq:32a}
    \cE(\psi)    \geq \tr[\Lambda_+(|D_0|-c^2-V_\delta)\Lambda_+]_-
    -kNZ -D(\rho_{\mathrm{Z}},\rho_{\mathrm{Z}})
  \end{equation}
  where, for $t\in\rz$, we set $[t]_-:=\min\{ t,0 \}$ and
  $V_\delta=Z/|\cdot|-\rho_\delta*|\cdot|^{-1}$.

  To count the number of spin states per electron correctly, i.e., two
  instead of the apparent four, we use an observation by Lieb et al.
  \cite[Appendix B]{Liebetal1997}: Note that
  \begin{equation}
    \label{eq:32b}
    \Lambda_-=U^{-1}\Lambda_+ \, U,\qquad \mbox{ where }\quad
    U:=\left(
      \begin{array}{cc}
        0    & 1\\
        -1 & 0
      \end{array}\right).
  \end{equation}
  Indeed, we have
  $$\Lambda_-=\frac12\left(1-\frac{D_0}{|D_0|}\right), \qquad
  \Lambda_+=\frac12\left(1+\frac{D_0}{|D_0|}\right)$$
  and
  $$UD_0 \, U^{-1}=
  \left(\begin{array}{cc}
      0    & 1\\
      -1 & 0
    \end{array}\right)
  \left(\begin{array}{cc}
      mc^2         & c\,\boldsymbol\sigma.\p\\
      c\, \boldsymbol\sigma.\hat{\bp} & -mc^2
    \end{array}\right)
  \left(\begin{array}{cc}
      0   & -1\\
      1 & 0
    \end{array}\right)=-D_0.$$
  We set $X:=(|D_0|-c^2-V_\delta(\bx))I_2$, and write
  $$\tr\left[\Lambda_+\left(\begin{array}{cc}
        X   & 0\\
        0   & X
      \end{array}\right)\Lambda_+\right]_-\geq \tr\left( \Lambda_+
    \left(\begin{array}{cc}
        X_-   & 0\\
        0     & X_-
      \end{array}\right)\Lambda_+\right)=
  \tr\left( \Lambda_+
    \left(\begin{array}{cc}
        X_-   & 0\\
        0     & X_-
      \end{array}\right)\right)$$
  $$\tr\left( \Lambda_-
    \left(\begin{array}{cc}
        X_-   & 0\\
        0     & X_-
      \end{array}\right)\right)=
  \tr\left( \Lambda_+ U
    \left(\begin{array}{cc}
        X_-   & 0\\
        0     & X_-
      \end{array}\right)U\right)=
  \tr\left( \Lambda_+
    \left(\begin{array}{cc}
        X_-   & 0\\
        0     & X_-
      \end{array}\right)\right)$$
  Thus
  \begin{multline}
    \label{eq:32c}
    2\tr\left( \Lambda_+ \left(\begin{array}{cc}
          X_-   & 0\\
          0 & X_-
        \end{array}\right)\right)\\
    = \tr\left( \Lambda_+ \left(\begin{array}{cc}
          X_-   & 0\\
          0 & X_-
        \end{array}\right)\right)+
    \tr\left( \Lambda_-
      \left(\begin{array}{cc}
          X_-   & 0\\
          0     & X_-
        \end{array}\right)\right)
    =2\tr(X_-).
  \end{multline}
  Since $|D_0|=E_c(\p)$ and $X$ is a 2 by 2 matrix, we obtain
  \begin{multline}
   \tr\left[\Lambda_+\left(|D_0|-c^2-V_\delta(\bx)\right)\Lambda_+\right]_-=
   \tr\left[\Lambda_+\left(\begin{array}{cc}
        X   & 0\\
        0   & X
      \end{array}\right)\Lambda_+\right]_-\\
   \geq\tr\left( \Lambda_+
    \left(\begin{array}{cc}
        X_-   & 0\\
        0     & X_-
      \end{array}\right)\right)=\tr(X_-)=\,2\,
      \tr[E_c(\p)-c^2-V_\delta(\bx)]_-.
  \end{multline}
Then
  \begin{equation}
    \label{eq:32d}
    E(Z/\kappa ,N,Z)\, \geq \,2\, \tr[E_c(\p)-c^2-V_\delta(\bx)]_-
    -D(\rho_{\mathrm{Z}},\rho_{\mathrm{Z}})-kNZ
  \end{equation}
  (see \cite[Formula (42)]{CassanasSiedentop2006}) where the last trace is spinless. This connects to S\o rensen's
  Equation (3.2) from \cite{Sorensen2005}. It is a fundamental result of \cite{Sigal1982} that $E(Z/\kappa ,N,Z)=E(Z/\kappa ,N_c(Z),Z)$, for any $N \geq N_c(Z)$, where $N_c$ is the number of negative particles that can be bound to an atom of nuclear charge $Z$. Considering $N_c<2Z+1$ (see \cite[Formula (1.2)]{Lieb1984}), when $Z \rightarrow \infty$, we can get
  \begin{equation}
    \label{eq:32e}
    E(Z/\kappa ,N,Z)\, \geq \,2\, \tr[E_c(\p)-c^2-V_\delta(\bx)]_-
    -D(\rho_{\mathrm{Z}},\rho_{\mathrm{Z}})-k\  O(Z)\  Z.
  \end{equation}
  This result then follows using his lower bound.
\end{proof}

\section{Case II: $N/Z=\lambda$ (Constant)}
\label{sec4}

The case $\lambda>1$ has been already solved in Section 3;
$\lambda=1$ is solved in \cite{CassanasSiedentop2006}, so it only
remains to consider $\lambda<1$.

\subsection{Coherent States\label{ss4.1}}

This Section is analogous to \ref{ss3.1}.

\subsection{Upper Bound\label{ss4.2}}

This Division is similar to \ref {ss3.2} and \ref {ss3.3}.
Analogously to Formula \eqref {eq:16n}, we introduce the set of
density matrices
\begin{equation}
  \label{seq:16}
  S_{\partial N}:=\{\gamma\in \gS^1(\gh)\ | E_c(\hat\bp)\gamma\in \gS^1(\gh),\ 0\leq\gamma\leq1,\ \tr\gamma=N=\lambda Z,\ \}
\end{equation}
where $\gS^1(\gh)$ denotes the trace class operators on $\gh$;
$\lambda$ is a number independent of $c$, $N$, $R$, or $Z$. Defining
${\cE}^R_\mathrm{HF}$ as Formula \eqref {eq:17-1}, and as the same
as above, for all $\gamma\in S_{\partial N}$, we have Formula \eqref
{eq:18-1}.

\subsubsection{Kinetic Energy\label{sss4.2.1}}

According to \cite{Loss2005}, we can write
$\rho^{(N,Z)}_\mathrm{TF}$ simply as $\rho_\mathrm{TF}$ in the
following. By concavity, and then analogously to Formula \eqref
{eq:20}, we know
\begin{equation}
  \label{seq:20}
  \tr[(E_c(\hat\bp)-c^2)\gamma]\leq \tr(-\tfrac12\Delta\gamma).
\end{equation}
Similarly to Formula \eqref {eq:21}, we get
\begin{equation}
  \label{seq:21}
  \tr[(E_c(\hat\bp)-c^2)\gamma]
  \leq \frac35\gamma_\mathrm{TF}\int \rho_\mathrm{TF}^{5/3}(\bx)\rd\bx
  + (\tr\gamma)R^{-2}\|\nabla g\|^2.
\end{equation}

\subsubsection{External Potential\label{sss4.2.2}}

As the same as Part \ref {sss3.3.2}, we can obtain following two
lemmata by the analogues of Lemmata \ref {l1} and \ref {l2}.
\begin{lemma}
  \label{l6}
  For our choice of
  $\gamma=\int_\Gamma\dbar\Omega(\alpha)A(\alpha)|F_\alpha\rangle\langle
  F_\alpha|$ and $\delta\in(1/3,2/3)$ we have
  \begin{equation}
    \label{seq:22}
    0\leq Z\tr(\varphi_2\gamma)
    \leq k Z\int_\Gamma\dbar\Omega(\alpha)A(\alpha)
    \iint\rd \bxi \rd \bxi'{c^2|\bxi||\bxi'||\hat F_\alpha(\bxi)||\hat F_\alpha(\bxi')| \over |\bxi-\bxi'|^2 N_c(\bxi)N_c(\bxi')}
    = O(Z^{4/3+\delta}).
  \end{equation}
\end{lemma}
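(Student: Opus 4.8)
This is essentially a transcription of the proof of Lemma~\ref{l1}; the hypothesis $N=\lambda Z$ with $\lambda<1$ enters only through the sizes of certain Thomas--Fermi moments at the very end. First I would establish both inequalities of \eqref{seq:22} at the level of one coherent state $F_\alpha$. The lower bound is immediate since $\varphi_2=\Phi_2^*|\cdot|^{-1}\Phi_2\ge0$ and $\gamma\ge0$. For the first upper bound, write $\varphi_2$ in Fourier variables (Appendix~\ref{sa3}): $|\cdot|^{-1}$ has kernel $k|\bxi-\bxi'|^{-2}$ and $\Phi_2=c\,\boldsymbol\sigma\cdot\p/N_c(\p)$ with $\|\boldsymbol\sigma\cdot\bxi\|=|\bxi|$, whence
$$
0\le(F_\alpha,\varphi_2 F_\alpha)\le k\iint\rd\bxi\,\rd\bxi'\,
\frac{c^2|\bxi||\bxi'|\,|\hat F_\alpha(\bxi)|\,|\hat F_\alpha(\bxi')|}{|\bxi-\bxi'|^2\,N_c(\bxi)\,N_c(\bxi')};
$$
integrating this against $Z\,A(\alpha)\,\dbar\Omega(\alpha)$ yields the first $\le$ of \eqref{seq:22}. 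To go further, I would use $|\hat F_\alpha(\bxi)|=R^{3/2}|\hat g(R(\bxi-\bp))|$, substitute $\bxi\mapsto R^{-1}\bxi+\bp$ and $\bxi'\mapsto R^{-1}\bxi'+\bp$, bound $N_c(\bxi)\ge\sqrt2\,c^2$, and exploit that $\tfrac{|\hat g(\bxi)||\hat g(\bxi')|}{|\bxi-\bxi'|^2}(|\bxi||\bxi'|+|\bxi|+|\bxi'|+1)$ is integrable because $g\in H^{3/2}(\rz^3)$; exactly as in \eqref{eq:23} this gives $(F_\alpha,\varphi_2 F_\alpha)\le k\,c^{-2}R^{-3}(1+R|\bp|+R^2|\bp|^2)$.

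Integrating this last estimate against $Z\,A(\alpha)\,\dbar\Omega(\alpha)$ reduces the claim to the three phase-space moments $M_j:=\int_\Gamma\dbar\Omega(\alpha)A(\alpha)|\bp|^j$, $j=0,1,2$, since then $0\le Z\tr(\varphi_2\gamma)\le k\,Z\,c^{-2}R^{-3}(M_0+RM_1+R^2M_2)$. Here $M_0=\tr\gamma\le N=\lambda Z=O(Z)$ by \eqref{seq:16}; $M_2$ is, up to a universal constant, the Thomas--Fermi kinetic energy that already appears in \eqref{seq:21}, i.e.\ $M_2=\tfrac65\gamma_\mathrm{TF}\int\rho_\mathrm{TF}^{5/3}$, and performing the $\bp$-integral over the ball prescribed by $A$ together with the Thomas--Fermi equation shows $M_1$ proportional to $\int\rho_\mathrm{TF}^{4/3}$ (alternatively $M_1\le M_0^{1/2}M_2^{1/2}$ by Cauchy--Schwarz). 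The scaling relation \eqref{eq:13-2n}, equivalently $\rho^{(N,Z)}_\mathrm{TF}(\bq)=Z^2\rho^{(\lambda,1)}_\mathrm{TF}(Z^{1/3}\bq)$, then gives $\int\rho_\mathrm{TF}^{5/3}=Z^{7/3}\int(\rho^{(\lambda,1)}_\mathrm{TF})^{5/3}$ and $\int\rho_\mathrm{TF}^{4/3}=Z^{5/3}\int(\rho^{(\lambda,1)}_\mathrm{TF})^{4/3}$, so $M_2=O(Z^{7/3})$ and $M_1=O(Z^{5/3})$. Both fixed constants $\int(\rho^{(\lambda,1)}_\mathrm{TF})^{4/3}$ and $\int(\rho^{(\lambda,1)}_\mathrm{TF})^{5/3}$ are finite: for $\lambda<1$ the subneutral minimizer $\rho^{(\lambda,1)}_\mathrm{TF}$ is compactly supported (Lieb--Simon \cite{LiebSimon1977}) and obeys $\rho^{(\lambda,1)}_\mathrm{TF}(\bx)\le\gamma_\mathrm{TF}^{-3/2}|\bx|^{-3/2}$ near the origin (the analogue of \eqref{eq:a1}, from $[V_1-\mu]_+\le V_1\le|\bx|^{-1}$), so $(\rho^{(\lambda,1)}_\mathrm{TF})^{4/3}$ and $(\rho^{(\lambda,1)}_\mathrm{TF})^{5/3}$ are locally integrable.

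Finally I would insert $R=Z^{-\delta}$ and $c=Z/\kappa$:
$$
0\le Z\tr(\varphi_2\gamma)\le k\,\kappa^2 Z^{3\delta-1}\bigl(M_0+Z^{-\delta}M_1+Z^{-2\delta}M_2\bigr)=O\bigl(Z^{3\delta}+Z^{2/3+2\delta}+Z^{4/3+\delta}\bigr),
$$
and for $\delta\in(1/3,2/3)$ the exponent $4/3+\delta$ exceeds both $3\delta$ and $2/3+2\delta$, so $Z\tr(\varphi_2\gamma)=O(Z^{4/3+\delta})$, as claimed. The only step that is not a verbatim copy of Lemma~\ref{l1} is controlling the size of the Thomas--Fermi moments $M_1$ and $M_2$, and this is the (mild) main obstacle: it is precisely here that the subneutral hypothesis $\lambda<1$, via the compact support of $\rho^{(\lambda,1)}_\mathrm{TF}$, is used.
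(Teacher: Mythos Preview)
Your proof is correct and follows essentially the same approach as the paper, which simply invokes the analogy with Lemma~\ref{l1}. The only cosmetic difference is that where the paper (in the proof of Lemma~\ref{l1}) scales the phase-space integrals directly via $V_Z(\bq)=Z^{4/3}V_1^{(\lambda)}(Z^{1/3}\bq)$ and computes $\int V_1^2$, $\int V_1^{5/2}$, you instead pass through the Thomas--Fermi equation to recast the moments $M_1,M_2$ as $\int\rho_\mathrm{TF}^{4/3}$, $\int\rho_\mathrm{TF}^{5/3}$ and then invoke the compact support of the subneutral minimizer; both routes yield the same $Z$-powers and the same final estimate.
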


\begin{lemma}
  \label{l7}
  For our choice of $\gamma$ and $\delta\in(1/3,2/3)$ we have
  \begin{multline}
    \label{seq:25}
    \left|Z\tr[(|\cdot|^{-1}-\varphi_1)\gamma]\right| \\
    \leq k Z\int\dbar\Omega(\alpha)A(\alpha) \iint\frac{\rd\boldsymbol{\xi}\rd\boldsymbol{\xi}'}{|\boldsymbol{\xi}-\boldsymbol{\xi}'|^2}
    \left(1-{(E_c(\boldsymbol{\xi})+c^2)(E_c(\boldsymbol{\xi}')+c^2) \over
      N_c(\boldsymbol{\xi})N_c(\boldsymbol{\xi}')}\right) |\hat F_\alpha(\boldsymbol{\xi})||\hat F_\alpha(\boldsymbol{\xi}')| \\
    = O(Z^{5/3+\delta}).
  \end{multline}
\end{lemma}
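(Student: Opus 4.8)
The plan is to mirror, almost verbatim, the argument of Lemma~\ref{l2}, observing that the only feature of $\gamma_1$ used there was its representation as $\int_\Gamma\dbar\Omega(\alpha)A(\alpha)|F_\alpha\rangle\langle F_\alpha|$ together with the explicit scaling of $A$ via the Thomas--Fermi potential; all of these carry over unchanged to the present $\gamma$ once $\rho_\mathrm{Z}$ is replaced by $\rho_\mathrm{TF}=\rho^{(N,Z)}_\mathrm{TF}$ (as already announced in \ref{sss4.2.1}). First I would record the pointwise bound on the symbol, namely inequality~\eqref{eq:26a} followed by \eqref{eq26b}, which is purely algebraic and independent of the state: using $E_c(\bxi)-c^2\le c|\bxi|$ and $N_c(\bxi)\ge\sqrt2\,c^2$ one gets
\begin{equation*}
  \left|1-{(E_c(\boldsymbol{\xi})+c^2)(E_c(\boldsymbol{\xi}')+c^2)\over N_c(\boldsymbol{\xi})N_c(\boldsymbol{\xi}')}\right|\le\frac{3c^2|\boldsymbol{\xi}||\boldsymbol{\xi}'|+2c^3(|\boldsymbol{\xi}|+|\boldsymbol{\xi}'|)}{2c^4}.
\end{equation*}

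Next I would plug this into $|(F_\alpha,(|\cdot|^{-1}-\varphi_1)F_\alpha)|$ exactly as in \eqref{eq:26}, pass from $\bxi$ to $\bxi/R+\bp$ (equivalently substitute $\hat g_R(\bxi-\bp)$ and rescale), and use that ${|\hat g(\bxi)||\hat g(\bxi')|\over|\bxi-\bxi'|^2}[|\bxi||\bxi'|+|\bxi|+|\bxi'|+1]$ is integrable because $g\in H^{3/2}(\rz^3)$; this yields the state-wise estimate
\begin{equation*}
  |(F_\alpha,(|\cdot|^{-1}-\varphi_1)F_\alpha)|\le k\,c^{-2}R^{-3}\bigl(1+R|\bp|+R^2|\bp|^2+cR+cR^2|\bp|\bigr).
\end{equation*}
Then I would integrate against $A(\alpha)$ over phase space. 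Here the only thing that changes from Lemma~\ref{l2} is the bookkeeping of the support of $A$: by \eqref{eq:14an} the $\bp$-integral runs over $\bp^2/2\le V_Z(\bq)$, so $\int\dbar\Omega(\alpha)A(\alpha)(1+R|\bp|+R^2|\bp|^2)$ is controlled by $\int\rd\bq\,[\rho_\mathrm{TF}(\bq)+R\rho_\mathrm{TF}^{4/3}(\bq)+R^2\rho_\mathrm{TF}^{5/3}(\bq)]$ via the Thomas--Fermi equation $\gamma_\mathrm{TF}\rho_\mathrm{TF}^{2/3}=[V_Z-u']_+\le V_Z$ and the scaling $V^{(N)}_Z(\bq)=Z^{4/3}V^{(\lambda)}_1(Z^{1/3}\bq)$. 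Since $\lambda<1$ here, one has $u'>0$ and $\int\rho_\mathrm{TF}=N=\lambda Z$; the resulting powers of $Z$ are $Z^{3\delta}+Z^{2/3+2\delta}+Z^{4/3+\delta}+Z^{2\delta}+Z^{5/3+\delta}$, exactly as in \eqref{eq:27}, and multiplying by the prefactor $Z/(c^2R^3)=kZ^{1+\delta\cdot\text{(stuff)}}$ and taking the largest exponent for $\delta\in(1/3,2/3)$ gives $O(Z^{5/3+\delta})$.

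The main (and only mild) obstacle is to make sure the Thomas--Fermi integrals are genuinely finite and carry the stated $Z$-power when $\lambda<1$: unlike the neutral case, the minimizer $\rho_\mathrm{TF}$ has compact support and satisfies the equation with a strictly positive chemical potential $u'$, so one must cite \cite{Loss2005} (or \cite{LiebSimon1977}) for existence/uniqueness and for the bound $\rho_\mathrm{TF}^{2/3}\le V_Z/\gamma_\mathrm{TF}$, and then the $L^1$, $L^{4/3}$, $L^{5/3}$ norms of $\rho_\mathrm{TF}$ all scale like a fixed power of $Z$ by \eqref{eq:13-2n}. Everything else is the verbatim computation of Lemma~\ref{l2}; accordingly I would simply state the lemma, reproduce the two displayed chains \eqref{eq:26} and \eqref{eq:27} with $\rho_\mathrm{Z}$ replaced by $\rho_\mathrm{TF}$, and refer to \cite[Formula (31)]{CassanasSiedentop2006} and Lemma~\ref{l2} for the details, noting that the bound is uniform in $N=\lambda Z$.
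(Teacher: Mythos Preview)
Your proposal is correct and is precisely the paper's own approach: the paper gives no independent proof of Lemma~\ref{l7} but simply declares it to follow ``by the analogue of Lemma~\ref{l2}'', and you have written out that analogue in full. One clarification worth making explicit: for $\lambda<1$ the phase-space symbol $A$ must be read as $\chi_{\{\bp^2/2\le\gamma_\mathrm{TF}\rho_\mathrm{TF}^{2/3}(\bq)\}}=\chi_{\{\bp^2/2\le [V_Z(\bq)-u']_+\}}$ rather than the literal \eqref{eq:14an} (otherwise $\tr\gamma=\int V_Z^{3/2}=\infty$ since $V_Z(\bq)\sim Z(1-\lambda)/|\bq|$ at infinity); with that reading your passage from the $\bp$-integral to $\int\rho_\mathrm{TF}^{(3+k)/3}$ is an identity, and the inequality $\gamma_\mathrm{TF}\rho_\mathrm{TF}^{2/3}\le V_Z$ you quote is not needed (and would point the wrong way).
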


\subsubsection{The Electron-Electron Interaction\label{sss4.2.3}}

Similarly to Lemma \ref {l3}, we get Lemma \ref {sl8} below.

\begin{lemma}
  \label{sl8}
  For our choice of $\gamma$ and $\delta\in(1/3,2/3)$ we have
  \begin{equation}
    \label{seq:28}
    D(\rho_{\gamma_\Phi},\rho_{\gamma_\Phi})-D(\rho_\gamma,\rho_\gamma)= O(Z^{5/3+\delta}),
  \end{equation}
  where $\rho_\gamma$ is the density of $\gamma$ and $\rho_{\gamma_\Phi}$ is the
  density of $\gamma_\Phi$.
\end{lemma}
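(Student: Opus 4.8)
The plan is to transcribe the proof of Lemma \ref{l3} essentially line for line, replacing the reduced one-particle density matrix $\gamma_1$ by $\gamma$ and using the Case~II estimates of Lemmata \ref{l6} and \ref{l7} in place of Lemmata \ref{l1} and \ref{l2}.

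First I would record the Fourier bound on the potential generated by the sum of the two densities. Since $\Phi$ is unitary we have $\tr\gamma_\Phi=\tr\gamma=N=\lambda Z$, and, by the choice $A(\alpha)=\chi_{\{\bp^2/2-V_Z(\bq)\le0\}}$, also $\tr\gamma=\int_\Gamma\dbar\Omega(\alpha)A(\alpha)=\lambda Z$; hence $\|\rho_\gamma+\rho_{\gamma_\Phi}\|_1=2\lambda Z=O(Z)$ and the elementary inequality $|\cF[\sigma*|\cdot|^{-1}](\bxi)|\le k\|\sigma\|_1|\bxi|^{-2}$ gives
\begin{equation*}
  \bigl|\cF[(\rho_\gamma+\rho_{\gamma_\Phi})*|\cdot|^{-1}](\bxi)\bigr|\le kZ\,|\bxi|^{-2}.
\end{equation*}
Next, writing $D(\rho_{\gamma_\Phi},\rho_{\gamma_\Phi})-D(\rho_\gamma,\rho_\gamma)=D(\rho_{\gamma_\Phi}-\rho_\gamma,\rho_{\gamma_\Phi}+\rho_\gamma)$ and passing to Fourier space, the coherent-state representation $\gamma=\int_\Gamma\dbar\Omega(\alpha)A(\alpha)|F_\alpha\rangle\langle F_\alpha|$ together with $\widehat{\Phi_1F_\alpha}=\tfrac{E_c(\p)+c^2}{N_c(\p)}\hat F_\alpha$ and $\widehat{\Phi_2F_\alpha}=\tfrac{c\,\p\cdot\boldsymbol\sigma}{N_c(\p)}\hat F_\alpha$ shows, exactly as in the proof of Lemma \ref{l3}, that the difference $\rho_{\gamma_\Phi}-\rho_\gamma$ is controlled, after Fourier transform, by the kernel
\begin{equation*}
  K(\bxi,\bxi')=\left|\frac{(E_c(\bxi)+c^2)(E_c(\bxi')+c^2)}{N_c(\bxi)N_c(\bxi')}-1\right|+\frac{c^2|\bxi||\bxi'|}{N_c(\bxi)N_c(\bxi')},
\end{equation*}
the first summand arising from the upper ($\Phi_1$) component and the second from the lower ($\Phi_2$) component. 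Combining this with the Fourier bound above yields
\begin{equation*}
  \bigl|D(\rho_{\gamma_\Phi},\rho_{\gamma_\Phi})-D(\rho_\gamma,\rho_\gamma)\bigr|\le kZ\int_\Gamma\dbar\Omega(\alpha)A(\alpha)\iint\frac{\rd\bxi\,\rd\bxi'}{|\bxi-\bxi'|^2}\,K(\bxi,\bxi')\,|\hat F_\alpha(\bxi)|\,|\hat F_\alpha(\bxi')|.
\end{equation*}

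Finally I would observe that the right-hand side splits into exactly the two quantities already controlled: the $\tfrac{c^2|\bxi||\bxi'|}{N_c(\bxi)N_c(\bxi')}$-piece reproduces the integrand of \eqref{seq:22}, so Lemma \ref{l6} bounds it by $O(Z^{4/3+\delta})$, while the $\bigl|\tfrac{(E_c(\bxi)+c^2)(E_c(\bxi')+c^2)}{N_c(\bxi)N_c(\bxi')}-1\bigr|$-piece reproduces the integrand of \eqref{seq:25}, so Lemma \ref{l7} bounds it by $O(Z^{5/3+\delta})$; adding the two gives $O(Z^{5/3+\delta})$, the claimed order. I do not expect a genuine obstacle, since the computation is a copy of the proof of Lemma \ref{l3}; the one point that requires a word of care, inherited from that lemma, is the Fourier-space manipulation that produces the kernel $K$ and its term-by-term domination by the integrands of the two preceding lemmata. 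On the bookkeeping side one should note only that here $\tr\gamma=\lambda Z$ rather than $Z$ and that the Thomas-Fermi minimizer $\rho_\mathrm{TF}$ has mass $\lambda Z$ with $\lambda<1$ fixed, so the $\lambda$-dependent prefactors stay bounded in $Z$ and the scaling relation $V^{(N)}_Z(\bq)=Z^{4/3}V^{(N/Z)}_1(Z^{1/3}\bq)$ used inside Lemmata \ref{l6}--\ref{l7} already absorbs the dependence on $\lambda=N/Z$.
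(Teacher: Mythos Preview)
Your proposal is correct and follows exactly the approach the paper takes: the paper proves Lemma~\ref{sl8} only by the sentence ``Similarly to Lemma~\ref{l3}'', and your line-by-line transcription of that proof with $\gamma_1$ replaced by $\gamma$ and Lemmata~\ref{l1}--\ref{l2} replaced by Lemmata~\ref{l6}--\ref{l7} is precisely what is intended. The only cosmetic point is that in Case~II the phase-space weight $A$ must incorporate the positive chemical potential $u'$ (equivalently, $A(\alpha)=\chi_{\{\bp^2/2\le\gamma_\mathrm{TF}\rho_\mathrm{TF}^{2/3}(\bq)\}}$) so that $\tr\gamma=\lambda Z$ as required by \eqref{seq:16}; this does not affect your argument, which only uses $\|\rho_\gamma+\rho_{\gamma_\Phi}\|_1=O(Z)$.
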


\subsubsection{The Total Energy\label{sss4.2.4}}

By the analogue of Theorem \ref {th:3}, we obtain Theorem \ref
{th:5}.
\begin{theorem}
    \label{th:5}
  There exist a constant $k$ such that we have for all $Z\geq 1$
  $$E(Z/\kappa , N,Z)\leq E_\mathrm{TF}(\lambda ,1)Z^{7/3} + k Z^{20/9}.$$
\end{theorem}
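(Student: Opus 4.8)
The plan is to prove Theorem~\ref{th:5} by running the argument of Theorem~\ref{th:3} essentially verbatim, now with $N=\lambda Z$. Fix $\delta\in(1/3,2/3)$ and let $\gamma=\int_\Gamma\dbar\Omega(\alpha)A(\alpha)|F_\alpha\rangle\langle F_\alpha|$ be the coherent-state trial density matrix of Section~\ref{ss4.1}, built from $f=g_R$ with $R=Z^{-\delta}$, from Lieb's choice of $g$, and from the Thomas--Fermi minimizer $\rho_\mathrm{TF}=\rho^{(N,Z)}_\mathrm{TF}$ for the \emph{actual} particle number $N=\lambda Z<Z$. One first checks $\gamma\in S_{\partial N}$: $0\le\gamma\le1$ follows from $0\le A\le1$ and coherent-state completeness (Lemma~\ref{a1}), and $\tr\gamma=N$ from the phase-space volume identity $\int_\Gamma\dbar\Omega(\alpha)A(\alpha)=\int_{\rz^3}\rho_\mathrm{TF}=N$. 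Lieb's Hartree--Fock upper bound \eqref{eq:18-1}, transcribed to the Brown--Ravenhall setting, then gives $E(Z/\kappa,N,Z)\le{\cE}^R_\mathrm{HF}(\gamma)$, so it remains to bound ${\cE}^R_\mathrm{HF}(\gamma)$.

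Next I would run the three-part estimate of Section~\ref{ss3.3}. \emph{Kinetic energy}: by concavity \eqref{seq:20} the Brown--Ravenhall kinetic energy is dominated by $-\tfrac12\Delta$, and evaluating on $\gamma$ yields, as in \eqref{seq:21}, the Thomas--Fermi kinetic term $\tfrac35\gamma_\mathrm{TF}\int\rho_\mathrm{TF}^{5/3}$ plus the positive error $(\tr\gamma)R^{-2}\|\nabla g\|^2=\lambda Z\,R^{-2}\|\nabla g\|^2=O(Z^{1+2\delta})$. \emph{Nuclear attraction}: split $Z/|\bx|=Z\varphi_1+Z\varphi_2$ as in \eqref{eq:10}; the term $-Z\tr(\varphi_2\gamma)$ is negative and of size $O(Z^{4/3+\delta})$ by Lemma~\ref{l6}, so it may be dropped, while replacing $\varphi_1$ by $|\cdot|^{-1}$ costs $O(Z^{5/3+\delta})$ by Lemma~\ref{l7}. \emph{Electron repulsion}: by Lemma~\ref{sl8}, $D(\rho_{\gamma_\Phi},\rho_{\gamma_\Phi})=D(\rho_\gamma,\rho_\gamma)+O(Z^{5/3+\delta})$. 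After these replacements ${\cE}^R_\mathrm{HF}(\gamma)$ is, up to $O(Z^{1+2\delta}+Z^{5/3+\delta})$, the non-relativistic reduced Hartree--Fock functional of $\gamma$, which by Lieb~\cite[Section~V.A.1]{Lieb1981} with $R=Z^{-\delta}$ does not exceed $E_\mathrm{TF}(N,Z)+O(Z^{1+2\delta}+Z^{5/2-\delta/2})$. Collecting everything, $E(Z/\kappa,N,Z)\le E_\mathrm{TF}(N,Z)+O(Z^{1+2\delta}+Z^{5/2-\delta/2}+Z^{5/3+\delta})$; the remainder is optimized at $\delta=5/9$, giving $O(Z^{20/9})$, and the Thomas--Fermi scaling relation \eqref{eq:15a-1} rewrites $E_\mathrm{TF}(N,Z)=E_\mathrm{TF}(N/Z,1)Z^{7/3}=E_\mathrm{TF}(\lambda,1)Z^{7/3}$, which is the assertion.

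The whole argument is a transcription of Section~\ref{ss3.3}, and I expect no real difficulty beyond one point that is genuinely different from Case~I: for $\lambda<1$, i.e.\ $N<Z$, the Thomas--Fermi minimizer is no longer the neutral one but comes with a strictly positive constant $u'$, so the Thomas--Fermi equation \eqref{eq:a00} reads $\gamma_\mathrm{TF}\rho_\mathrm{TF}^{2/3}=[V_Z-u']_+$ with $u'>0$, and the phase-space weight must be taken as the analogue of \eqref{eq:14an} with this shift, namely $A(\alpha)=\chi_{\{(\bp,\bq):\ \bp^2/2-V_Z(\bq)+u'\le0\}}$. This is not merely cosmetic: since $V_Z(\bx)\sim(Z-N)/|\bx|$ at infinity when $N<Z$, the unshifted region would have infinite phase-space volume, so it is precisely the shift that makes $\gamma$ trace class with $\tr\gamma=\int_{\rz^3}\rho_\mathrm{TF}=N$. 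One must then confirm that the estimates of Lemmata~\ref{l6}, \ref{l7} and \ref{sl8}, and of Lieb's non-relativistic bound, use nothing beyond $0\le A\le1$ together with the support and $L^1$/$L^{5/3}$ integrability of $\rho_\mathrm{TF}$ — all unaffected by the shift — so that the same remainder terms and the same optimal choice $\delta=5/9$ carry over unchanged.
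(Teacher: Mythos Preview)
Your proposal is correct and follows essentially the same route as the paper, which for Theorem~\ref{th:5} gives no detailed argument beyond ``by the analogue of Theorem~\ref{th:3}'' after setting up the single coherent-state trial $\gamma$ and Lemmata~\ref{l6}--\ref{sl8}. Your final paragraph on the chemical-potential shift $u'$ is in fact more careful than the paper: for $\lambda<1$ the minimizer has compact support and $V_Z(\bx)\sim(Z-N)/|\bx|$ at infinity, so the unshifted phase-space region has infinite volume and the choice $A(\alpha)=\chi_{\{\bp^2/2\le V_Z(\bq)-u'\}}$ is indeed forced both for $\tr\gamma=N$ and for the convergence of the $\bq$-integrals in the proofs of Lemmata~\ref{l6} and \ref{l7}; the paper leaves this implicit in ``analogous to~\ref{ss3.1}''.
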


\subsection{Lower Bound\label{ss4.3}}

This Part is similar to \ref {ss3.4}.

\subsubsection{$L^\infty$-Bound on the Exchange Hole Potential\label{sss4.3.1}}

\begin{lemma}
  \label{sl4}
  $$\|L_{\rho_\mathrm{TF}}\|_\infty = O(Z).$$
\end{lemma}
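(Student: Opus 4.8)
The plan is to imitate the proof of Lemma~\ref{l4} essentially verbatim, the only new point being to check that the pointwise bound used there for $\rho_\mathrm{Z}$ survives when $\rho_\mathrm{Z}$ is replaced by the Thomas--Fermi minimizer $\rho_\mathrm{TF}=\rho^{(N,Z)}_\mathrm{TF}$ for $N=\lambda Z$ with $\lambda<1$. First I would record the Thomas--Fermi equation in this subcritical regime: as in \eqref{eq:a00}, the minimizer satisfies $\gamma_\mathrm{TF}\rho_\mathrm{TF}^{2/3}(\bx)=[V_Z(\bx)-\mu]_+$, where $V_Z:=Z/|\cdot|-\rho_\mathrm{TF}*|\cdot|^{-1}$ and $\mu\geq 0$ is the chemical potential (here in fact $\mu>0$, since $N<Z$). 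Because the screening term $\rho_\mathrm{TF}*|\cdot|^{-1}$ and the constant $\mu$ are both nonnegative, this yields $\gamma_\mathrm{TF}\rho_\mathrm{TF}^{2/3}(\bx)\leq Z/|\bx|$, so that exactly as in \eqref{eq:a1} one has $\rho_\mathrm{TF}(\bx)\leq(Z/\gamma_\mathrm{TF})^{3/2}|\bx|^{-3/2}$.

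With this decay bound in hand, I would reproduce the two-region decomposition $L_{\rho_\mathrm{TF}}(\bx)=A_1(\bx)+A_2(\bx)$ as in \eqref{eq:a4}, where $A_1$ is the part of the integral \eqref{eq:31b} over $|\by|\leq 1/Z$ and $A_2$ the part over $1/Z\leq|\by|\leq R_{\rho_\mathrm{TF}}(\bx)$. For $A_1$ one inserts the bound $\rho_\mathrm{TF}(\bx+\by)\leq(Z/\gamma_\mathrm{TF})^{3/2}|\bx+\by|^{-3/2}$ and performs the dilation $\by\mapsto\by/Z$, which turns the integral into $(Z/\gamma_\mathrm{TF})^{3/2}Z^{-1/2}f(|\bx|Z)$ with $f$ the function of \eqref{eq:a3}; since $\|f\|_\infty<\infty$ was already established in the proof of Lemma~\ref{l4} and is reused without change, this gives $A_1(\bx)\leq\|f\|_\infty\gamma_\mathrm{TF}^{-3/2}Z$. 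For $A_2$ one uses $|\by|^{-1}\leq Z$ on the region of integration together with the defining relation $\int_{B_{R_{\rho_\mathrm{TF}}(\bx)}(\bx)}\rho_\mathrm{TF}=\tfrac12$ of the exchange hole, obtaining $A_2(\bx)\leq Z\int_{|\by|\leq R_{\rho_\mathrm{TF}}(\bx)}\rho_\mathrm{TF}(\bx+\by)\,\rd\by\leq Z/2$. Summing the two estimates gives $\|L_{\rho_\mathrm{TF}}\|_\infty=O(Z)$.

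I do not expect a genuine obstacle: the whole argument is a transcription of Lemma~\ref{l4}. The one step that deserves the single line of justification above is the persistence of the $|\bx|^{-3/2}$ pointwise bound in the subcritical case $N<Z$ --- which holds because passing from $N=Z$ to $N<Z$ merely subtracts a larger nonnegative constant inside the positive part of the Thomas--Fermi equation, and therefore can only decrease the minimizer (equivalently, one may appeal to the monotonicity of the Thomas--Fermi density in the screened nuclear potential together with Newton's theorem, as in \cite{Lieb1981}). It is precisely this proximity to Lemma~\ref{l4} that is signalled by the remark that Part~\ref{ss4.3} is ``similar to'' Part~\ref{ss3.4}.
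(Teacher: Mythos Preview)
Your proposal is correct and follows exactly the route the paper intends: the paper gives no separate proof of Lemma~\ref{sl4} but merely declares Part~\ref{ss4.3} ``similar to'' Part~\ref{ss3.4}, so the argument is a transcription of the proof of Lemma~\ref{l4}. The one nontrivial point you rightly isolate---that the pointwise bound \eqref{eq:a1} persists for the subcritical minimizer because the Thomas--Fermi equation \eqref{eq:a00} with $\mu\geq0$ still gives $\gamma_\mathrm{TF}\rho_\mathrm{TF}^{2/3}(\bx)\leq Z/|\bx|$---is precisely the extra line needed, and your justification via nonnegativity of the screening term and the chemical potential is adequate.
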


Similarly to Lemma \ref {l8}, since we would like to take advantage
of S\o rensen's result \cite{Sorensen2005}, we derive an estimate on
$L_{\rho_{\delta}}$ (where $\rho_\delta:=\rho_\mathrm{TF}*
g^2_{Z^{-\delta}}$).
\begin{lemma}
\label{tl8}
$$\| L_{\rho_\delta}\|_\infty=O(Z).$$
\end{lemma}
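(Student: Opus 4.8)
The plan is to mimic the proof of Lemma \ref{l8} verbatim, replacing $\rho_\mathrm{Z}$ by $\rho_\mathrm{TF}=\rho^{(N,Z)}_\mathrm{TF}$ with $N=\lambda Z$, $\lambda<1$. The only input from the atomic case that was used there was the pointwise bound $\rho_\mathrm{Z}(\bx)\le (Z/\gamma_\mathrm{TF})^{3/2}|\bx|^{-3/2}$ (Equation \eqref{eq:a1}), which followed from the Thomas-Fermi equation $\gamma_\mathrm{TF}\rho_\mathrm{Z}^{2/3}=V_Z$. So first I would record the analogous pointwise bound for $\rho_\mathrm{TF}$: from the Thomas-Fermi equation in the form \eqref{eq:a00}, $\gamma_\mathrm{TF}\rho_\mathrm{TF}^{2/3}(\bx)=[V_Z(\bx)-u']_+\le V_Z(\bx)\le Z/|\bx|$, since the mean-field term $\rho_\mathrm{TF}*|\cdot|^{-1}$ is nonnegative; this yields $\rho_\mathrm{TF}(\bx)\le (Z/\gamma_\mathrm{TF})^{3/2}|\bx|^{-3/2}$ exactly as before. (This is presumably the content of the already-stated Lemma \ref{sl4}, which I may invoke.)

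Next I would run the two-region split of the exchange hole potential. Write, as in \eqref{eq:a7},
\begin{equation*}
  L_{\rho_\delta}(\bx)\le \int_{|\by|\le 1/Z}\frac{\rho_\delta(\bx+\by)}{|\by|}\rd\by + \int_{1/Z\le|\by|\le R_{\rho_\delta}(\bx)}\frac{\rho_\delta(\bx+\by)}{|\by|}\rd\by.
\end{equation*}
For the outer region, bound $|\by|^{-1}\le Z$ and use $\int_{|\by|\le R_{\rho_\delta}(\bx)}\rho_\delta(\bx+\by)\rd\by = \tfrac12$ by the definition \eqref{eq:31a} of the exchange-hole radius, giving $Z/2$. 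For the inner region, unfold the convolution $\rho_\delta=\rho_\mathrm{TF}*g_{Z^{-\delta}}^2$ and use Fubini:
\begin{equation*}
  \int_{|\by|\le 1/Z}\frac{\rho_\delta(\bx+\by)}{|\by|}\rd\by = \int\rd\bz\, g_{Z^{-\delta}}^2(\bz)\int_{|\by|\le 1/Z}\frac{\rho_\mathrm{TF}(\bx-\bz+\by)}{|\by|}\rd\by,
\end{equation*}
and bound the inner integral by the function $A_1(\bx-\bz)$ of \eqref{eq:a5} with $\rho_\mathrm{Z}$ replaced by $\rho_\mathrm{TF}$; the estimate \eqref{eq:a5} itself only used the $|\bx|^{-3/2}$ bound, so it transcribes, giving $A_1(\cdot)\le \|f\|_\infty\gamma_\mathrm{TF}^{-3/2}Z = O(Z)$ uniformly. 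Since $g_{Z^{-\delta}}^2$ is a probability density, the $\bz$-integral preserves the bound, and adding the $Z/2$ from the outer region gives $\|L_{\rho_\delta}\|_\infty\le kZ$.

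There is essentially no obstacle here: the argument is insensitive to whether one is at or below neutrality, because the TF equation \eqref{eq:a00} holds with $u'\ge 0$ in both regimes and only the upper bound $[V_Z-u']_+\le V_Z\le Z/|\cdot|$ is needed. \textbf{The one point to be slightly careful about} is that the auxiliary function $f$ in \eqref{eq:a3} and its $L^\infty$ bound are purely geometric and do not depend on $N$, $Z$, or $\lambda$, so reusing $\|f\|_\infty<\infty$ is legitimate; similarly the constant $k$ absorbed here is independent of $Z$ (and of $\lambda$, which is fixed). Thus the proof is complete by the same three lines as Lemma \ref{l8}, citing Lemma \ref{sl4} in place of Lemma \ref{l4} and \cite[Formula (49)]{CassanasSiedentop2006}.
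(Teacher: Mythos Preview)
Your proposal is correct and is precisely the approach the paper takes: the paper gives no separate proof of Lemma~\ref{tl8}, merely indicating that the whole subsection is ``similar to~\ref{ss3.4}'', so the intended argument is exactly the verbatim transcription of the proof of Lemma~\ref{l8} with $\rho_\mathrm{Z}$ replaced by $\rho_\mathrm{TF}$ and Lemma~\ref{l4} replaced by Lemma~\ref{sl4}. Your observation that the only model-dependent input is the pointwise bound $\rho_\mathrm{TF}(\bx)\le (Z/\gamma_\mathrm{TF})^{3/2}|\bx|^{-3/2}$, which follows from \eqref{eq:a00} via $[V_Z-u']_+\le V_Z\le Z/|\bx|$ regardless of the sign of $N-Z$, is exactly the point.
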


\subsubsection{Lower Bound\label{sss4.3.2}}

\begin{theorem}
  \label{st2}
  $$\liminf_{Z\to\infty}[E(c,N,Z)-E_\mathrm{TF}(N,Z)]Z^{-7/3} \geq 0.$$
\end{theorem}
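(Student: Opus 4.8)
The plan is to follow the proof of Theorem~\ref{t2} essentially line by line, systematically replacing the neutral Thomas--Fermi density $\rho_\mathrm{Z}$ by the ionic minimizer $\rho_\mathrm{TF}=\rho^{(N,Z)}_\mathrm{TF}$; the only genuinely new feature for $\lambda<1$ is that the Thomas--Fermi equation now reads $\gamma_\mathrm{TF}\rho_\mathrm{TF}^{2/3}=[V_\mathrm{TF}-\mu]_+$ with a strictly positive chemical potential $\mu>0$ (so that $\rho_\mathrm{TF}$ is compactly supported), and this $\mu$ has to be carried through the semiclassical step. Fix $\delta>0$ and put $\rho_\delta:=\rho_\mathrm{TF}*g_{Z^{-\delta}}^2$. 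First I would apply the exchange-hole correlation inequality \cite[Equation~(14)]{Mancasetal2004} to bound $\sum_{1\leq\mu<\nu\leq N}|\bx_\mu-\bx_\nu|^{-1}$ from below by $\sum_{\nu=1}^N[\rho_\delta*|\cdot|^{-1}(\bx_\nu)-L_{\rho_\delta}(\bx_\nu)]-D(\rho_\delta,\rho_\delta)$, then use the spherical symmetry of $g$ and Newton's theorem \cite{Newton1972} to replace $\rho_\delta$ by $\rho_\mathrm{TF}$ in the last summand, and finally invoke Lemma~\ref{tl8}, which supplies $\|L_{\rho_\delta}\|_\infty=O(Z)$. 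This yields, for every normalized $\psi\in\gQ_N$,
\begin{equation}
  \cE(\psi)\ \geq\ \tr\bigl[\Lambda_+(|D_0|-c^2-V_\delta)\Lambda_+\bigr]_-\ -\ kNZ\ -\ D(\rho_\mathrm{TF},\rho_\mathrm{TF}),
\end{equation}
exactly as in \eqref{eq:32a}, with $V_\delta=Z/|\cdot|-\rho_\delta*|\cdot|^{-1}$.

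Next I would reduce the spin count from the apparent four to the correct two by the observation of Lieb et al.\ \cite[Appendix~B]{Liebetal1997}, i.e.\ $\Lambda_-=U^{-1}\Lambda_+U$ with $U$ as in \eqref{eq:32b}: since $|D_0|=E_c(\p)$ and $V_\delta$ act as scalars in the spinor indices, the four-component trace collapses to twice the spinless Chandrasekhar trace, exactly as between \eqref{eq:32b} and \eqref{eq:32d}, giving
\begin{equation}
  E(Z/\kappa,N,Z)\ \geq\ 2\,\tr\bigl[E_c(\p)-c^2-V_\delta(\bx)\bigr]_-\ -\ D(\rho_\mathrm{TF},\rho_\mathrm{TF})\ -\ kNZ.
\end{equation}
Because $N=\lambda Z$ here, the error $kNZ=k\lambda Z^2=o(Z^{7/3})$ automatically, without the ionization-bound reduction $E(Z/\kappa,N,Z)=E(Z/\kappa,N_c(Z),Z)$ of \cite{Sigal1982,Lieb1984} needed in Case~I. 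It then remains to prove $2\,\tr[E_c(\p)-c^2-V_\delta(\bx)]_-\geq E_\mathrm{TF}(N,Z)+D(\rho_\mathrm{TF},\rho_\mathrm{TF})+o(Z^{7/3})$, and this is precisely the content of S\o rensen's lower bound \cite[Equation~(3.2)]{Sorensen2005} for the Chandrasekhar operator -- whose semiclassical machinery is built to accept the smoothed potential $V_\delta$ -- together with the Thomas--Fermi identity relating $2\,\tr[E_c(\p)-c^2-(V_\mathrm{TF}-\mu)]_-$ to $E_\mathrm{TF}(N,Z)$ and the scaling relation \eqref{eq:15a-1}. Combining the two displays gives $E(c,N,Z)\geq E_\mathrm{TF}(N,Z)+o(Z^{7/3})$, which is the assertion.

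The main obstacle is this last invocation: S\o rensen's bound was stated for the neutral pseudorelativistic atom, whereas here $N=\lambda Z<Z$, so his semiclassical argument must be rerun with the chemical-potential shift $2\sum_{j\leq N/2}e_j=2\sum_{j\leq N/2}(e_j+\mu)-\mu N\geq 2\,\tr[E_c(\p)-c^2-(V_\mathrm{TF}-\mu)]_--\mu N$ in place. The inputs to check are: (i) $\mu=O(Z^{4/3})$, so that $\mu N=O(Z^{7/3})$ and reproduces exactly the $-\tfrac35\mu N$ piece of $E_\mathrm{TF}(N,Z)$ in the Lieb--Simon computation \cite{LiebSimon1977}; (ii) the a priori bound $\rho_\mathrm{TF}(\bx)\leq(Z/\gamma_\mathrm{TF})^{3/2}|\bx|^{-3/2}$ and the compact support of $\rho_\mathrm{TF}$ for $\lambda<1$, which control the semiclassical remainders uniformly in $\kappa\in[0,\kappa_\mathrm{crit})$; and (iii) the concavity estimate $E_c(\p)-c^2\leq\tfrac12\p^2$, which shows the relativistic correction to the leading Weyl term is $o(Z^{7/3})$. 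Once these are in place, everything else -- the correlation inequality, Newton's theorem, the spin halving, and the $L^\infty$-estimate on $L_{\rho_\delta}$ from Lemma~\ref{tl8} -- goes through as in Section~\ref{sss3.4.2} with $\rho_\mathrm{Z}$ replaced by $\rho_\mathrm{TF}$ and the $-\mu N$ shift inserted.
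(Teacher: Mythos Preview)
Your strategy is the paper's: Section~\ref{ss4.3} merely declares the lower bound ``similar to \ref{ss3.4}'' with $\rho_\mathrm{Z}$ replaced by the ionic minimizer $\rho_\mathrm{TF}$, and you carry out precisely that substitution, correctly noting in addition that $kNZ=k\lambda Z^2=o(Z^{7/3})$ renders the ionization-bound detour via \cite{Sigal1982,Lieb1984} of Case~I unnecessary.

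The one point that needs repair is the order of your steps. For $\lambda<1$ the screening mass is $\int\rho_\delta=\int\rho_\mathrm{TF}=N<Z$, so $V_\delta(\bx)\sim(Z-N)/|\bx|$ as $|\bx|\to\infty$; the one-body operator $E_c(\p)-c^2-V_\delta$ therefore carries a long-range Coulomb tail and $\tr[E_c(\p)-c^2-V_\delta]_-=-\infty$ (already the hydrogenic spectrum shows $\sum_n n^2\cdot(-c/n^2)$ diverges). Your two displayed inequalities are thus vacuously true but useless, and the sentence ``it remains to prove $2\,\tr[\ldots]_-\geq E_\mathrm{TF}+D+o(Z^{7/3})$'' is false as written. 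You diagnose this yourself in the final paragraph, but the chemical-potential shift has to be introduced \emph{before} one drops from the sum of the $N$ lowest eigenvalues to a full negative trace. Concretely: after the correlation inequality and Lemma~\ref{tl8} one has, for every normalized $\psi\in\gQ_N$,
\[
\cE(\psi)+\mu N\ \geq\ \tr\bigl[\Lambda_+(|D_0|-c^2-(V_\delta-\mu))\Lambda_+\bigr]_-\ -\ kNZ\ -\ D(\rho_\mathrm{TF},\rho_\mathrm{TF});
\]
now the shifted potential $V_\delta-\mu$ is eventually negative, the trace is finite, the spin-halving of \eqref{eq:32b}--\eqref{eq:32d} applies verbatim, and S\o rensen's semiclassical machinery can be invoked on the effective potential $[V_\mathrm{TF}-\mu]_+=\gamma_\mathrm{TF}\rho_\mathrm{TF}^{2/3}$, which is compactly supported. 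The identity $-\tfrac25\gamma_\mathrm{TF}\int\rho_\mathrm{TF}^{5/3}-\mu N-D(\rho_\mathrm{TF},\rho_\mathrm{TF})=E_\mathrm{TF}(N,Z)$ then closes the argument exactly as your items (i)--(iii) anticipate.
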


\appendix
\section{The Proof of $B_{c,Z}= E_c(\p)- Z \varphi_1 -Z \varphi_2$\label{sa3}}

\begin{lemma}
  \label{a3}
  The one-particle Brown-Ravenhall operator $B_\gamma$ for an electron
the external electric potential of a point nucleus acting on Pauli
spinors is
  $$B_{c,Z}= E_c(\p)- Z \varphi_1 -Z \varphi_2.$$
\end{lemma}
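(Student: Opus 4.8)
The plan is to do everything at the one-particle level via the unitary $\Phi\colon\gh\to\gH$ of \eqref{eq:7}, transporting the definition \eqref{eq:9.5} of $B_{c,Z}$ from the positive spectral subspace $\gH\subset L^2(\rz^3)\otimes\cz^4$ to the Pauli space $\gh=L^2(\rz^3)\otimes\cz^2$. Concretely, the asserted formula \eqref{eq:9} is to be read as the identity $\Phi^*B_{c,Z}\Phi=E_c(\p)-Z\varphi_1-Z\varphi_2$ on $\gh$, and this is what I would establish.

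First I would record the two facts about $\Phi$ due to Evans et al.\ \cite{Evansetal1996}. (i) $\Phi$ is a unitary of $\gh$ onto $\gH=\operatorname{ran}\chi_{(0,\infty)}(D_{c,0})$; hence $\Phi^*\Phi=1_\gh$ and $\chi_{(0,\infty)}(D_{c,0})\Phi=\Phi$, so, taking adjoints and using that $\chi_{(0,\infty)}(D_{c,0})$ is a self-adjoint projection, $\Phi^*\chi_{(0,\infty)}(D_{c,0})=\Phi^*$. (ii) In Fourier variables $\Phi$ is multiplication by the $4\times 2$ matrix $M(\bxi):=N_c(\bxi)^{-1}\left(\,(E_c(\bxi)+c^2)I_2\,,\ c\,\boldsymbol\sigma\cdot\bxi\,\right)^{\mathrm t}$, while $D_{c,0}(\bxi)=c\,\boldsymbol\alpha\cdot\bxi+c^2\beta$. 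A short $2\times 2$ block computation, using $(\boldsymbol\sigma\cdot\bxi)^2=|\bxi|^2$ and $c^2|\bxi|^2=E_c(\bxi)^2-c^4=(E_c(\bxi)-c^2)(E_c(\bxi)+c^2)$, yields at once the normalisation $(E_c+c^2)^2+c^2|\bxi|^2=N_c^2$, i.e.\ $M(\bxi)^{\dagger}M(\bxi)=I_2$ (re-proving that $\Phi$ is isometric), and the eigenvalue relation $D_{c,0}(\bxi)M(\bxi)=E_c(\bxi)M(\bxi)$. Multiplying the latter on the left by $M(\bxi)^{\dagger}$ gives $M(\bxi)^{\dagger}D_{c,0}(\bxi)M(\bxi)=E_c(\bxi)I_2$, that is, $\Phi^*D_{c,0}\Phi=E_c(\p)$.

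Next I would split $D_{c,Z}=D_{c,0}-Z|\cdot|^{-1}$ and collapse the spectral projection with (i): since $\Phi u\in\gH$ for all $u$, definition \eqref{eq:9.5} gives
$$\Phi^*B_{c,Z}\Phi=\Phi^*\chi_{(0,\infty)}(D_{c,0})\,D_{c,Z}\,\Phi=\Phi^*D_{c,Z}\Phi=\Phi^*D_{c,0}\Phi-Z\,\Phi^*|\cdot|^{-1}\Phi=E_c(\p)-Z\,\Phi^*|\cdot|^{-1}\Phi.$$
Finally, $|\cdot|^{-1}$ acts on $L^2(\rz^3)\otimes\cz^4$ as the scalar $|\bx|^{-1}$ times $I_{\cz^4}$, so it is block-diagonal in the $2+2$ decomposition; writing $\Phi=(\Phi_1,\Phi_2)^{\mathrm t}$ as in \eqref{eq:7} we obtain $\Phi^*|\cdot|^{-1}\Phi=\Phi_1^*|\cdot|^{-1}\Phi_1+\Phi_2^*|\cdot|^{-1}\Phi_2=\varphi_1+\varphi_2$ by \eqref{eq:10}, which is the claim.

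The only genuine subtlety is that $|\cdot|^{-1}$ is unbounded: the manipulations above should first be understood as identities of quadratic forms on $H^{1/2}(\rz^3)\otimes\cz^2$, using that $\Phi$ restricts to a unitary from $H^1(\rz^3)\otimes\cz^2$ onto $\gH\cap(H^1(\rz^3)\otimes\cz^4)$ (as recalled after \eqref{eq:7}) and that Kato's inequality keeps $(\psi,|\cdot|^{-1}\psi)$ finite there; passing to the Friedrichs realisation \eqref{eq:9.5} then gives the operator identity. I expect this form-domain bookkeeping, rather than the algebra, to be the only point needing care; the substantive content is the elementary Fourier-space block identity of the second paragraph.
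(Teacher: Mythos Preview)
Your proof is correct and follows essentially the same route as the paper: both compute the quadratic form $(\Phi u,\,D_{c,Z}\,\Phi u)$ by splitting $D_{c,Z}=D_{c,0}-Z|\cdot|^{-1}$ and reducing the four-spinor block calculation to $E_c(\p)-Z\varphi_1-Z\varphi_2$. Your organisation via the eigenvalue relation $D_{c,0}(\bxi)M(\bxi)=E_c(\bxi)M(\bxi)$ is a bit cleaner than the paper's direct expansion of the four block entries, and you additionally address the form-domain issue that the paper leaves implicit.
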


\begin{proof}
For any $\psi\in \gH$, we can get
  \begin{multline}
   \label{eq:an3}
   \cE(\psi)=(\psi,B_{c,Z}\psi)=(\psi,D_{c,Z}\psi)\\
   =\begin{pmatrix}
   {E_c(\p)+c^2\over N_c(\p)}u & {c\p\cdot\boldsymbol\sigma\over N_c(\p)}u
   \end{pmatrix}
   \left(\begin{array}{cc}
      c^2-Z|\cdot|^{-1}        & c\,\boldsymbol\sigma\cdot\p\\
      c\, \boldsymbol\sigma\cdot\p & -c^2-Z|\cdot|^{-1}
   \end{array}\right)
   \begin{pmatrix}
    {E_c(\p)+c^2\over N_c(\p)}u\\
    {c\p\cdot\boldsymbol\sigma\over N_c(\p)}u
   \end{pmatrix}\\
   =\begin{pmatrix}
   \Phi_1u & \Phi_2u
   \end{pmatrix}
   \begin{pmatrix}
    (c^2-Z|\cdot|^{-1})\,\Phi_1u+c\,\boldsymbol\sigma\cdot\p\,\Phi_2u\\
    c\,\boldsymbol\sigma\cdot\p\,\Phi_1u+(-c^2-Z|\cdot|^{-1})\,\Phi_2u
   \end{pmatrix}\\
   =\left(u,\left(\Phi_1c^2\Phi_1+\Phi_1c\,\boldsymbol\sigma\cdot\p\,\Phi_2+\Phi_2c\,\boldsymbol\sigma\cdot\p\,\Phi_1-\Phi_2c^2\Phi_2\right)u\right)-Z(u,\varphi_1u)-Z(u,\varphi_2u)\\
   =(u,E_c(\p)u)-Z(u,\varphi_1u)-Z(u,\varphi_2u).
  \end{multline}
\end{proof}

\section{Checking of $\gamma_1$ and $\gamma$\label{sa1}}

\begin{lemma}
  \label{a1}
  $$0\leq \gamma_1\leq 1 \ \ \  and \ \ \ 0\leq \gamma\leq 1.$$
\end{lemma}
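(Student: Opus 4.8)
The plan is to treat the two assertions separately, each time reducing the claim to positivity of a scalar-valued integrand. For the bound $0\le\gamma_1\le1$ I would proceed exactly as in the stated coherent-state formalism: since $F_\alpha$ is an honest element of $\gh$ (being a product of a fixed $f\in H^{3/2}$, a plane-wave factor, and a spin basis vector), each $|F_\alpha\rangle\langle F_\alpha|$ is a nonnegative rank-one operator, and integrating against the nonnegative weight $A(\alpha)$ over $\Gamma$ with respect to the positive measure $\dbar\Omega$ gives $\gamma_1\ge0$. For the upper bound I would use the resolution-of-identity property of the coherent states: for $g$ normalized one has $\int_\Gamma\dbar\Omega(\alpha)\,|F_\alpha\rangle\langle F_\alpha|=\mathbf 1_{\gh}$ (this is the standard Lieb identity, cited from \cite{Lieb1981} and \cite{Evansetal1996}). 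Since $0\le A\le 1$ pointwise, for every $u\in\gh$ we get $0\le(u,\gamma_1 u)=\int_\Gamma\dbar\Omega(\alpha)\,A(\alpha)\,|(F_\alpha,u)|^2\le\int_\Gamma\dbar\Omega(\alpha)\,|(F_\alpha,u)|^2=\|u\|^2$, which is $0\le\gamma_1\le1$. The same argument gives $0\le\tilde\gamma_1\le1$ since $0\le\tilde A\le1$.

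For $0\le\gamma\le1$ with $\gamma=\tilde\gamma_1+\gamma_2+\tilde\gamma_2$, nonnegativity is immediate: $\tilde\gamma_1$ is nonnegative by the argument above, $\gamma_2$ and $\tilde\gamma_2$ are nonnegative sums of rank-one projections $|\phi_{\tilde k}\rangle\langle\phi_{\tilde k}|$ (with $\tilde\gamma_2$ carrying the nonnegative prefactor $\epsilon_{\tilde R}$, which is $\le1$ by $\tr\tilde\gamma_1=\int_\Gamma\dbar\Omega\,\tilde A\le Z$). The upper bound is the point requiring care. The key structural observation is that the three pieces live on \emph{mutually orthogonal} subspaces of $\gh$: $\tilde\gamma_1$ is built from coherent states $F_\alpha=\varphi_{\bp,\bq}\otimes e_\tau$ whose spatial parts are supported, modulo the $R$-translation, in balls of radius $R$ around $\bq$ with $|\bq|\le\tilde R-R$, hence in $\{|\bx|\le\tilde R\}$; the functions $\phi_{\tilde k}$ entering $\gamma_2$ are dilates of $\tilde f$ supported in the annulus $\{2^{\tilde k}\tilde R\le|\bx|\le 2^{\tilde k+1}\tilde R\}$, and the ranges of $\tilde k$ in $\gamma_2$ (namely $K+1,\dots,K+N-Z$) and in $\tilde\gamma_2$ (namely $K+N-Z+1,\dots,K+N$) are disjoint, so all these annuli are pairwise disjoint and disjoint from the ball $\{|\bx|\le\tilde R\}$ once $K$ is large. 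Consequently $\gamma$ is a block-diagonal operator, and $0\le\gamma\le1$ follows block by block: on the coherent-state block $0\le\tilde\gamma_1\le1$ as shown; on each $\phi_{\tilde k}$-block the operator is either $|\phi_{\tilde k}\rangle\langle\phi_{\tilde k}|$ (a projection, $\le1$ since $\|\phi_{\tilde k}\|=1$ by the normalization of $\tilde f$) or $\epsilon_{\tilde R}|\phi_{N-Z+\tilde k}\rangle\langle\phi_{N-Z+\tilde k}|$ with $0\le\epsilon_{\tilde R}\le1$.

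The main obstacle is the orthogonality bookkeeping: one must verify that $\tilde f$ is genuinely normalized in $L^2$ (so that the $\phi_{\tilde k}$ are unit vectors), that the support of $\tilde f$ is the annulus $[\tilde R,2\tilde R]$ so dilation by $2^{\tilde k}$ produces the stated disjoint annuli, and that the constraint $|\bq|\le\tilde R-R$ in the definition of $\tilde A$ together with $\supp f\subset\{|\bx|\le R\}$ confines the coherent-state block inside $\{|\bx|\le\tilde R\}$, keeping it disjoint from the innermost annulus of $\gamma_2$. Granting these support facts—which are built into the definitions \eqref{eq:11n-1}, \eqref{eq:11an-1}, \eqref{eq:11an}, \eqref{eq:14an2}—the proof is the block-diagonal reduction above, and I would present it in that order: first the scalar inequality $0\le A,\tilde A\le1$ and the coherent-state resolution of identity, then the disjointness of supports, then the block-wise conclusion.
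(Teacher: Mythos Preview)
Your proposal is correct and follows essentially the same approach as the paper: both arguments use the coherent-state resolution of the identity (the paper derives it explicitly via Parseval, you cite it) together with $0\le A,\tilde A\le 1$ for the $\gamma_1$ and $\tilde\gamma_1$ bounds, and both use the disjoint-support structure of the three pieces of $\gamma$ to reduce $(u,\gamma u)\le\|u\|^2$ to a sum of integrals of $|u|^2$ over disjoint regions. Two minor quibbles: disjointness of the annuli from the ball $\{|\bx|\le\tilde R\}$ holds for any $K\ge0$, not just ``once $K$ is large''; and the inequality $\int\tilde A\,\dbar\Omega\le Z$ yields $\epsilon_{\tilde R}\ge0$, while $\epsilon_{\tilde R}\le1$ follows simply from $\tilde A\ge0$.
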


\begin{proof}
For any normalized $u\in \gh$, using Parseval's equality, we get
  \begin{multline}
   \label{eq:an1}
   (u, \gamma_1u)\\=(2\pi)^{-3}\int u(\bx)\iint \overline{f(\bx-\bq)}f(\bx'-\bq)A(\alpha)\exp(-i\bp\cdot\bx)\exp(i\bp\cdot\bx')\overline{u(\bx')}\rd\bp\rd\bq\rd\bx'\rd\bx\\
   =\int(2\pi)^{-3/2}\int\exp(-i\bp\cdot\bx)u(\bx)\overline{f(\bx-\bq)}\rd\bx A(\alpha)\\
   \times \overline{(2\pi)^{-3/2}\int\exp(-i\bp\cdot\bx')u(\bx')\overline{f(\bx'-\bq)}\rd\bx'}\rd\bp\rd\bq\\
   =\int\left|(2\pi)^{-3/2}\int\exp(-i\bp\cdot\bx)u(\bx)\overline{f(\bx-\bq)}\rd\bx\right|^2A(\alpha)\rd\bp\rd\bq\\
   \leq \int\left|(2\pi)^{-3/2}\int\exp(-i\bp\cdot\bx)u(\bx)\overline{f(\bx-\bq)}\rd\bx\right|^2\rd\bp\rd\bq \\
   =\iint|u(\bx)|^2|f(\bx-\bq)|^2\rd\bx\rd\bq\\
   =\int|u(\bx)|^2\left[\int|f(\bx-\bq)|^2\rd\bq\right]\rd\bx=\|u\|^2=1,\\
  \end{multline}
  and following \eqref{eq:an1},
  \begin{multline}
   \label{eq:an1-1}
   (u, \gamma u)=(u, \tilde{\gamma_1} u)+(u, \gamma_2 u)+(u, \tilde{\gamma_2} u)\\
   \leq \int\limits_{|\bx|\leq \tilde R - R}|u(\bx)|^2\rd\bx+\sum\limits_{\tilde{k}=K+1}^{K+N-Z}\int\limits_{\bx\in \supp\{\phi_{ \tilde{k}}\}}|u(\bx)|^2\rd\bx\\
   +\sum\limits_{\tilde{k}=K+N-Z+1}^{K+N}\int\limits_{\ \bx\in \supp\{\phi_{ \tilde{k}}\}}|u(\bx)|^2\rd\bx \\
   \leq\|u\|^2=1.\\
  \end{multline}
\end{proof}

\section{Checking of $\tr\gamma_1$ and $\tr\gamma$\label{sa2}}

\begin{lemma}
  \label{a2}
  $$\tr\gamma_1\leq N \ \ \  and \ \ \ \tr\gamma=N.$$
\end{lemma}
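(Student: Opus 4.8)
The plan is to compute the two traces directly from the defining formulas and the coherent-state resolution of the identity. For $\tr\gamma_1$: by Lemma \ref{a1} we know $0\le\gamma_1\le1$, and since $A\in L^1(\Gamma,\dbar\Omega)$ the operator $\gamma_1=\int_\Gamma\dbar\Omega(\alpha)A(\alpha)|F_\alpha\rangle\langle F_\alpha|$ is trace class. Taking the trace termwise and using $\|F_\alpha\|^2=\int|f(\bx-\bq)|^2\rd\bx=\|f\|^2=1$ (here $f$ is normalized, $\delta_{\tau,\sigma}$ contributes a single spin component), we get
\begin{equation}
  \label{eq:an2-1}
  \tr\gamma_1=\int_\Gamma\dbar\Omega(\alpha)A(\alpha)\|F_\alpha\|^2=\int_\Gamma\dbar\Omega(\alpha)A(\alpha).
\end{equation}
Then I insert the explicit choice \eqref{eq:14an}, $A(\alpha)=\chi_{\{\bp^2/2-V_Z(\bq)\le0\}}(\bp,\bq)$, and carry out the $\bp$-integration over the ball $\{|\bp|\le\sqrt{2V_Z(\bq)}\}$, which produces $(2\pi)^{-3}\cdot 2\cdot\frac{4\pi}{3}(2V_Z(\bq))^{3/2}$ per $\bq$; using the Thomas-Fermi relation $\gamma_\mathrm{TF}\rho_\mathrm{Z}^{2/3}=[V_Z-u']_+\le V_Z$ together with $\gamma_\mathrm{TF}=(3\pi^2)^{2/3}/2$, this integrand is exactly $\rho_\mathrm{Z}(\bq)$ up to the constant already built into $\gamma_\mathrm{TF}$, so $\int_\Gamma\dbar\Omega(\alpha)A(\alpha)\le\int_{\rz^3}\rho_\mathrm{Z}=N$ (with equality when $N\ge Z$, since the TF minimizer saturates the constraint). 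This gives $\tr\gamma_1\le N$.

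For $\tr\gamma=N$: write $\gamma=\tilde\gamma_1+\gamma_2+\tilde\gamma_2$ and compute each trace. By the same coherent-state computation, $\tr\tilde\gamma_1=\int_\Gamma\dbar\Omega(\alpha)\tilde A(\alpha)$, which is a nonnegative number $\le N$ (indeed $\le Z$, since $\tilde A\le A$ and the extra cutoff $|\bq|\le\tilde R-R$ only shrinks the region). The operators $\gamma_2$ and $\tilde\gamma_2$ are finite sums of rank-one projections onto the normalized functions $\phi_{\tilde k}$ (and $\phi_{N-Z+\tilde k}$); one checks $\|\phi_{\tilde k}\|^2=\|\tilde f\|^2=1$ using the substitution $\bx\mapsto 2^{\tilde k}\bx$ in \eqref{eq:11n-1}, so $\tr\gamma_2=N-Z$ and $\tr\tilde\gamma_2=\epsilon_{\tilde R}\cdot Z=Z-\int_\Gamma\dbar\Omega(\alpha)\tilde A(\alpha)$ directly from the definition of $\epsilon_{\tilde R}$. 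Adding,
\begin{equation}
  \label{eq:an2-2}
  \tr\gamma=\int_\Gamma\dbar\Omega(\alpha)\tilde A(\alpha)+(N-Z)+\Bigl(Z-\int_\Gamma\dbar\Omega(\alpha)\tilde A(\alpha)\Bigr)=N,
\end{equation}
so the $\tilde\gamma_1$ and $\tilde\gamma_2$ contributions telescope by construction of $\epsilon_{\tilde R}$.

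The only genuinely substantive step is the phase-space integral identifying $\int_\Gamma\dbar\Omega(\alpha)A(\alpha)$ with $\int\rho_\mathrm{Z}$ and hence with $N$; everything else is bookkeeping (normalization of $f$, $\tilde f$, $\phi_{\tilde k}$, and termwise trace of a convergent operator series, which is justified by $A,\tilde A\in L^1$ and positivity). I expect the main obstacle — really just a point requiring care — to be confirming that the $\bp$-integration constant matches $\gamma_\mathrm{TF}=(3\pi^2)^{2/3}/2$ so that $(2\pi)^{-3}\int_{|\bp|^2\le 2[V_Z(\bq)-u']_+}2\,\rd\bp$ equals $\rho_\mathrm{Z}(\bq)$ exactly; once this is in place, monotonicity of the TF energy (and saturation of the mass constraint for $N\ge Z$, as recalled after \eqref{eq:15-1}) closes the inequality for $\gamma_1$ and the equality for $\gamma$.
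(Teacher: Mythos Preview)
Your approach is essentially identical to the paper's: compute $\tr\gamma_1$ as the phase-space integral via the coherent-state normalization, carry out the $\bp$-integral, invoke the Thomas--Fermi equation, and for $\tr\gamma$ telescope the three pieces using the definition of $\epsilon_{\tilde R}$.

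There is, however, a factual slip in your Thomas--Fermi input. You write $\int_{\rz^3}\rho_{\mathrm Z}=N$ ``since the TF minimizer saturates the constraint'' for $N\ge Z$. This is false for $N>Z$: as the paper recalls right after \eqref{eq:15-1}, for $N>Z$ the minimizer coincides with the neutral one and has $\int\rho_{\mathrm Z}=Z<N$; the mass constraint is \emph{not} saturated. Correspondingly, in Case~I one has $u'=0$ (this is precisely \eqref{eq:a0}), so the phase-space integral equals $\int\rho_{\mathrm Z}=Z$ exactly, and the paper's proof reads $\tr\gamma_1=Z\le N$. Your detour through $[V_Z-u']_+\le V_Z$ would, if $u'>0$, produce an inequality in the wrong direction (the region $\{|\bp|^2/2\le V_Z\}$ is \emph{larger} than $\{|\bp|^2/2\le[V_Z-u']_+\}$, so $\int_\Gamma A\,\dbar\Omega\ge\int\rho$, not $\le$); it is harmless here only because $u'=0$. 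Note also that the exact value $\tr\gamma_1=Z$ matters structurally --- it is why $\gamma_2$ is built from exactly $N-Z$ orthonormal states and why $\epsilon_{\tilde R}$ carries $Z$ in its denominator. The second half of your argument (for $\tr\gamma=N$) is correct and matches the paper verbatim.
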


\begin{proof}
Using formula \eqref{eq:a0}, we can get
  \begin{multline}
   \label{eq:an2}
   \tr\gamma_1 = \int_\Gamma \dbar \Omega(\alpha)
A(\alpha)=(2\pi)^{-3} \int\limits_{{\bp}^2/2 - V_Z(\bq)\leq 0} \sum_{\tau=1}^2 \rd\bp \rd\bq\\
     =2(2\pi)^{-3}\frac{4\pi}3(2\gamma_\mathrm{TF})^{3/2}\int \rho_\mathrm{Z}(\bq)\rd\bq=Z\leq N, \\
  \end{multline}
  and
  \begin{multline}
   \label{eq:an2-2}
   \tr\gamma = \int_\Gamma \dbar \Omega(\alpha)\tilde
A(\alpha) + (N-Z) + \epsilon_{\tilde R} Z =N. \\
  \end{multline}
\end{proof}

\paragraph{Acknowledgment}: I am very grateful to Professor Doctor Heinz Siedentop
for his constant attention to this work. I would like to give my
most appreciation for his encouragement and helping. I am also
deeply grateful for patience and opportune support of Roch Cassanas,
Marco Maceda, Oliver Matte, Edgardo Stockmeyer, Sergey Morozov and
Matthias Huber.

\end{document}